\journal{Linear Algebra and its Applications}
\newtheorem{theorem}{Theorem}
\newtheorem{lemma}[theorem]{Lemma}
\newtheorem{example}[theorem]{Example}
\newtheorem{definition}[theorem]{Definition}
\newtheorem{claim}{Claim}
\newtheorem{corollary}[theorem]{Corollary}
\newtheorem{remark}{Remark}
\newtheorem{question}{Question}
\newtheoremstyle{algstyle}%
  {10mm}       
  {10mm}       
  {\tt}   
  {0pt}        
  {\bfseries}  
  {\newline}   
  {10mm}       
  {\thmname{#1}\thmnumber{ #2}\thmnote{ (#3)}}          
\theoremstyle{algstyle}
\newtheorem{algorithm}{Algorithm}
\newtheoremstyle{algdashstyle}%
  {10mm}       
  {10mm}       
  {\tt}   
  {0pt}        
  {\bfseries}  
  {\newline}   
  {10mm}       
  {\thmname{#1}\thmnumber{ #2}$'$\thmnote{ (#3)}}          
\theoremstyle{algdashstyle}
\newcommand{\nw}[1]{%
\textbf{#1}%
}
\newcommand{\mnw}[1]{%
\boldsymbol{#1}%
}
\newcommand{\bbmatrix}[1]{%
\begin{bmatrix} #1 \end{bmatrix}%
}
\newcommand{\ppmatrix}[1]{%
\begin{pmatrix} #1 \end{pmatrix}%
}
\newcommand{\mydot}[1]{%
\stackrel{\text{\Large .}}{#1}%
}
\newcommand{\lrar}{\leftrightarrow}
\newcommand{\equivd}{:\equiv}
\newcommand{\subseteqn}{\hspace{0.1cm}\subseteq \hspace{0.1cm}}
\newcommand{\supseteqn}{\hspace{0.1cm}\supseteq \hspace{0.1cm}}
\newcommand{\equaln}{\hspace{0.1cm} = \hspace{0.1cm}}
\newcommand{\plusn}{\hspace{0.1cm} + \hspace{0.1cm}}
\newcommand{\inn}{\hspace{0.1cm} \in \hspace{0.1cm}}
\newcommand{\equivn}{\hspace{0.1cm} \equiv  \hspace{0.1cm}}
\newcommand{\lrarn}{\hspace{0.1cm} \lrar  \hspace{0.1cm}}
\newcommand{\V}{\mbox{$\cal V$}} 
\newcommand{\lnew}{\mbox{$ L$}} 
\newcommand{\F}{\mbox{$\cal F$}} 
\newcommand{\Vm}{{\mathcal V}_{M}}            			
\newcommand{\Va}{{\cal V}_{A}}              			
\newcommand{\Vabt}{{\cal V}_{AB}^T}
\newcommand{\Vbct}{{\cal V}_{BC}^T}
\newcommand{\transp}{{^T}}
\newcommand{\Vap}{{\cal V}_{AP}}            
    \newcommand{\0}{{\mathbf 0}}        
 \newcommand{\Vab}{{\cal V}_{AB}}   			
 \newcommand{\Vcd}{{\cal V}_{CD}}   			
 \newcommand{\Vaadash}{{\cal V}_{AA'}}   			
 \newcommand{\Vadashbdash}{{\cal V}_{A'B'}}   			
\newcommand{\Vhab}{\hat{{\cal V}}_{AB}}   			
\newcommand{\Vhatwodashbtwodash}{\hat{{\cal V}}_{A"B"}}   
\newcommand{\Vs}{{\cal V}_{S}}             
\newcommand{\Iaa}{{ I}_{AA'}}           			
\newcommand{\Iww}{{ I}_{WW'}}           			
\newcommand{\Iwminusw}{{ I}_{\dw(-\mydot{W'})}}           			
\newcommand{\Iwdw}{{I}_{W\mydot{W}}}  			
\newcommand{\Vsp}{{\cal V}_{SP}}           			
\newcommand{\Ksp}{{\cal K}_{SP}}           			
\newcommand{\Asp}{{\cal A}_{SP}}           			
\newcommand{\A}{{\cal A}}           			
\newcommand{\Kt}{{\cal K}_{T}}           			
\newcommand{\Ks}{{\cal K}_{S}}           			
\newcommand{\Kpq}{{\cal K}_{PQ}}           			
\newcommand{\Apq}{{\cal A}_{PQ}}           			
\newcommand{\Ksq}{{\cal K}_{SQ}}           			
\newcommand{\Asq}{{\cal A}_{SQ}}           			
\newcommand{\Q}{{\mathbb{Q}}}
\newcommand{\Vsq}{{\cal V}_{SQ}}    
\newcommand{\Vp}{{\cal V}_{P}}              
\newcommand{\Vbc}{{\cal V}_{BC}}              
\newcommand{\Vabperp}{{\cal V}_{AB}^{\perp}}  
\newcommand{\Vpq}{{\cal V}_{PQ}}            			
\newcommand{\Vb}{{\cal V}_{B}}              			
\newcommand{\Vwdwm}{{\cal V}_{W\mydot{W}M}}  			
\newcommand{\Vadjwdw}{{\cal V}^a_{W'\mydot{W'}}}  			
\newcommand{\Vpdwmu}{{\cal V}_{P\mydot{W}M_u}}  			
\newcommand{\Xwdw}{{\cal X}_{W\mydot{W}}}  			
\newcommand{\Ywdw}{{\cal Y}_{W\mydot{W}}}  			
\newcommand{\Vdwmy}{{\cal V}_{\mydot{W}M_y}}  			
\newcommand{\Vtdwmy}{{\tilde{\cal V}}_{\mydot{W'}M_y'}}  			
\newcommand{\Vtdpmy}{{\tilde{\cal V}}_{\mydot{P'}M_y'}}  			
\newcommand{\Vwdwmumy}{{\cal V}_{W\mydot{W}M_uM_y}}  			
\newcommand{\Vwdwmu}{{\cal V}_{W\mydot{W}M_u}}  			
\newcommand{\Vdpmy}{{\cal V}_{\mydot{P}M_y}}  			
\newcommand{\Vpmu}{{\cal V}_{PM_u}}  			
\newcommand{\Vpdpmumy}{{\cal V}_{P\mydot{P}M_uM_y}}  			
\newcommand{\Vpdpmu}{{\cal V}_{P\mydot{P}M_u}}  			
\newcommand{\Vtwdw}{{\tilde{\cal V}}_{W'\mydot{W'}}}  			
\newcommand{\Vtwdwm}{{\tilde{\cal V}}_{W'\mydot{W'}M_u'M_y'}}  			
\newcommand{\Vtpdp}{{\tilde{\cal V}}_{P'\mydot{P'}}}  			
\newcommand{\Vtpdpm}{{\tilde{\cal V}}_{P'\mydot{P'}M_u'M_y'}}  			
\newcommand{\Vwmu}{{\cal V}_{WM_u}}  			
\newcommand{\Vmydw}{{\cal V}_{\mydot{W}M_y}}  			
\newcommand{\Vdwdp}{{\cal V}_{\mydot{W}\mydot{P}}}  			
\newcommand{\Vonedwdp}{{\cal V}^1_{\mydot{W}\mydot{P}}}  			
\newcommand{\Vonetildedwdp}{{\cal V}^1_{\mydot{\tilde{W}}\mydot{{P}}}}  			
\newcommand{\Vtwotildedwdp}{{\cal V}^2_{\mydot{\tilde{W}}\mydot{{P}}}}  			
\newcommand{\tildedwdp}{{\mydot{\tilde{W}}\mydot{{P}}}}  			
\newcommand{\wtildedw}{{W\mydot{\tilde{W}}}}  			
\newcommand{\tildedw}{{\mydot{\tilde{W}}}}  			
\newcommand{\Vtwodwdp}{{\cal V}^2_{\mydot{W}\mydot{P}}}  			
\newcommand{\Vtwowp}{{\cal V}^2_{{W}{P}}}  			
\newcommand{\Vtwodwdq}{{\cal V}^2_{\mydot{W}\mydot{Q}}}  			
\newcommand{\Vtwodpdq}{{\cal V}^2_{\mydot{P}\mydot{Q}}}  			
\newcommand{\Vwp}{{\cal V}_{WP}}  			
\newcommand{\Vonewp}{{\cal V}^1_{WP}}  			
\newcommand{\Vonepq}{{\cal V}^1_{PQ}}  			
\newcommand{\Vonewq}{{\cal V}^1_{WQ}}  			
\newcommand{\Vtonewp}{{\tilde{\cal V}}^1_{W'P'}}  			
\newcommand{\Vttwodwdp}{{\tilde{\cal V}}^2_{\dwd\dPd}}  			
\newcommand{\Vw}{{\cal V}_{W}}  			
\newcommand{\Vtildedw}{{\cal V}_{\mydot{\tilde{W}}}}  			
\newcommand{\Vwdw}{{\cal V}_{W\mydot{W}}}  			
\newcommand{\Vonewdw}{{\cal V}^u_{W\mydot{W}}}  			
\newcommand{\oVonewdw}{{\cal V}^1_{W\mydot{W}}}  			
\newcommand{\Vtwowdw}{{\cal V}^l_{W\mydot{W}}}  			
\newcommand{\oVtwowdw}{{\cal V}^2_{W\mydot{W}}}  			
\newcommand{\wdw}{{W\mydot{W}}}  			
\newcommand{\Vwdwk}{({\cal V}_{W\mydot{W}})^{(k)}}  			
\newcommand{\Vdw}{{\cal V}_{\mydot{W}}}  			
\newcommand{\Vpdpm}{{\cal V}_{P\mydot{P}M}}  			
\newcommand{\Vqdqm}{{\cal V}_{Q\mydot{Q}M}}  			
\newcommand{\Vpdp}{{\cal V}_{P\mydot{P}}}  			
\newcommand{\Vqdq}{{\cal V}_{Q\mydot{Q}}}  			
\newcommand{\Vpdpk}{({\cal V}_{P\mydot{P}})^{(k)}}  			
\newcommand{\dw}{{\mydot{W}}}  			
\newcommand{\dws}{{\mydot{w}}}  			
\newcommand{\dwd}{{\mydot{W'}}}  			
\newcommand{\dW}{{\mydot{w}}}  			
\newcommand{\dP}{{\mydot{P}}}  			
\newcommand{\dPd}{{\mydot{P'}}}  			
\newcommand{\dQ}{{\mydot{Q}}}  			
\newcommand{\dwdp}{{\mydot{W}\mydot{P}}}  			
\newcommand{\dotp}{{\mydot{P}}}  			
\newcommand{\dwsmall}{{\mydot{w}}}  			
\newcommand{\T}[0]{{\cal T}}                    	
\newcommand{\E}{\mbox{$\cal E$}} 
\newcommand{\G}[0]{{\cal G}}                       
\newcommand{\K}[0]{{\cal K}}                       
\newcommand{\KSP}{\mbox{${\cal K}_{SP}$}}    		
\newcommand{\KSQ}{\mbox{${\cal K}_{SQ}$}}    		
\newcommand{\KPQ}{\mbox{${\cal K}_{PQ}$}}    		
\newcommand{\M}{\mbox{$\cal M$}}
\newcommand{\N}[0]{{\cal N}}    							
\newcommand{\W}[0]{{\cal W}}                       
\newcommand{\B}{\mbox{${\cal B}$}}  				
\newcommand{\f}{\mbox{${\bf f}$}}        				
\newcommand{\g}{\mbox{${\bf g}$}}        				
\newcommand{\pa}{\mbox{${+}_{a}$}}      				
\newcommand{\pb}{\mbox{${+}_{b}$}}      				
\newcommand{\pc}{\mbox{${+}_{c}$}}      				
\newcommand{\pdw}{\mbox{${+}_{\mydot{w}}$}}      				
\newcommand{\pdwd}{\mbox{${+}_{\mydot{w}'}$}}      				
\newcommand{\pdp}{\mbox{${+}_{\mydot{p}}$}}      				
\newcommand{\x}{\mbox{${\bf x}$}}             		
\newcommand{\y}{\mbox{${\bf y}$}}             		
\newcommand{\Vy}[1]{{\cal V} _Y #1}
\newcommand{\VX}[1]{{\cal V} _X #1}
\newcommand{\s}[1]{ short  \ #1}                 
\newcommand{\al}{\Box\,}
\newcommand{\ldw}{\lambda ^{\mydot{w}}}
\newcommand{\ldws}{\lambda ^{\mydot{w}}}
\newcommand{\ldwd}{\lambda ^{\mydot{w}'}}
\newcommand{\ldp}{\lambda ^{\mydot{p}}}
\begin{document}

\begin{frontmatter}



\title{Implicit Linear Algebra and Basic Circuit Theory
}

\author[hn]{H. Narayanan\corref{cor1}}
\ead{hn@ee.iitb.ac.in}
\cortext[cor1]{Corresponding author}
\author[hari]{Hariharan Narayanan}
\ead{hariharan.narayanan@tifr.res.in}
\address[hn]{Department of Electrical Engineering, Indian Institute of Technology Bombay}
\address[hari]{School of Technology and Computer Science, Tata Institute of Fundamental Research}

\begin{abstract}
In this paper we derive some basic results of circuit theory 
using `Implicit Linear Algebra' (ILA). This approach has the advantage of 
simplicity and generality. Implicit linear algebra is outlined in
\cite{HNarayanan1986a, narayanan1987topological, HNarayanan1997,HNarayanan2009}.

A vector space  $\Vs$ is a collection of row vectors, over a field $\mathbb{F}$ on a finite column set $S,$ closed under addition and multiplication by elements 
of $\mathbb{F}.$ We denote the space of all vectors on $S$ by $\F_S$ 
and the space containing only the zero vector on $S$ by $\0_S.$
The dual $\Vs^{\perp}$ of a vector space $\Vs$ is the collection
of all vectors whose dot product with vectors in $\V_S$ is zero.

The basic operation of ILA is a linking operation ('matched composition`) between vector spaces $\V_{SP},\V_{PQ}$
(regarded as collections of row vectors on column sets $S\cup P, P\cup Q,$
respectively with $S,P,Q$ disjoint)
defined by
$\V_{SP}\leftrightarrow \V_{PQ}\equivd
\{(f_S,h_Q):((f_S,g_P)\in \V_{SP}, (g_P,h_Q) \in \V_{PQ}\},$
and another ('skewed composition`)
defined by
$\V_{SP}\rightleftharpoons \V_{PQ}\equivd
\{(f_S,h_Q):((f_S,g_P)\in \V_{SP}, (-g_P,h_Q) \in \V_{PQ}\}.$
These operations can also be described in terms of sum, intersection,
contraction and restriction of vector spaces.
\\ 
%
%
The basic results of ILA are 
 the Implicit Inversion Theorem 
(which states that
$\V_{SP}\lrar(\V_{SP}\lrar \V_S)= \V_S,$ iff 
$\V_{SP}\lrar \0_P\subseteq \V_S\subseteq \V_{SP}\lrar\F_S$)  and Implicit Duality Theorem
(which states that $(\V_{SP}\leftrightarrow \V_{PQ})^{\perp}= (\V_{SP}^{\perp}\rightleftharpoons \V_{PQ}^{\perp}$).

We show that the operations and results of ILA
are useful in 
understanding basic circuit theory.
We illustrate this by using ILA to 
 present a generalization of Thevenin-Norton theorem
where we compute  multiport behaviour using
adjoint multiport termination through a gyrator
and a very general  version of maximum power transfer theorem,
which states that the port conditions that appear, during adjoint multiport termination through an ideal transformer, correspond to maximum power transfer.

\end{abstract}

\begin{keyword}
Basic circuits, implicit inverse, implicit duality.
\MSC   15A03, 15A04, 94C05, 94C15 

\end{keyword}

\end{frontmatter}


\section{Introduction}
\label{sec:intro}
Implicit linear algebra (ILA) was originally developed to better exploit
the topology of the electrical network, 
while analysing it.
Two basic topological methods which use  ILA are 
`multiport decomposition' (see \cite{HNarayanan1986a}, Section 8 of \cite{HNarayanan1997,HNarayanan2009}),  which could be regarded as the most  natural to circuit theory, and the method of analysis using topological transformations (\cite{narayanan1987topological}), where the networks change their topology with additional linear constraints
but always with a guarantee that the new set of constraints are equivalent
to the original set of network constraints.
The first  half of the book
\cite{HNarayanan1997,HNarayanan2009} deals with these ideas.

A characteristic feature of the ILA approach is, as far as possible, to avoid
speaking of networks in terms of matrices but rather in terms of other derived networks. The variables  of interest would often be buried in a much larger set
of network variables and we have to deal with the former {\it implicitly}
without trying to {\it eliminate} the  latter.

When one connects linear subnetworks,  
there is no a priori {\it direction} to the flow of `information', merely {\it interaction,} which could be regarded as working both ways. 
To capture this situation, linear transformations are not suitable,
rather one needs linear relations. One is naturally led to the question whether
parts of linear algebra can be developed based on linear relations.
The answer is in the affirmative 
 and implicit linear algebra satisfies this requirement (\cite{narayanan2016}). 
In ILA, since we work with linear relations
 which link subspaces rather than with maps which link vectors, we  manipulate
subspaces rather than vectors.     
From the  point of view of the present paper,
it is more important that there are applications to circuit theory. 
The basic operation in ILA is that of matched (skewed) composition of indexed spaces
described below.
This corresponds precisely to the connection of multiports across
some of their ports.

An essential feature of linear circuits and systems theory is that 
there are  
 two kinds of constraints: one that is related to connection and 
the other, to device (or subsystem) characteristics. The former are 
combinatorial constraints involving coefficient matrices which have 
$0,\pm 1$ entries while the latter can be thought of as general 
linear algebraic ones with coefficient matrices which have rational or real entries. The usual way to exploit this situation is to 
use graph theory for writing the former constraint equations  and combine this with the 
latter using linear algebra. In this approach the final set of equations 
involve floating point  numbers and the influence of the combinatorial 
constraints is felt primarily in the sparsity of the final coefficient matrix.

A second approach which is useful for studying the solvability of the 
systems under consideration is through the use of matroid theory 
(specifically through matroid union and matroid intersection theorems 
\cite{edm65a,iritomizawa,iritomizawa2,irisurvey,irireview,irifujishige,iriapplications,iriprogress,murotairi1,murotabook0, murotabook,recski89}).
An approach that is close in spirit to ILA is that of behavioural system theory
 (\cite{willems1991paradigms,willems1997,willemstrentelman,van2004bisimulation,vanderscaftport}).
Here, again, systems are taken as they are presented in the governing constraints
 without first putting them in a form which contains only the variables of interest.
To illustrate the approach, in place of state and output linear equations of the kind
\begin{align}
\label{eqn:behaviour0}
\ppmatrix{\dot{x}\\y}=\ppmatrix{f_1(x,u)\\f_2(x,u)},
\end{align}
one works with equations of the kind 
\begin{align}
\label{eqn:behaviour1}
{f(w,\dot{z}, z,u,y)=0},
\end{align}
which may be taken to be the original constraints.
The variable entries of the vector $x$ of Equation \ref{eqn:behaviour0} can be taken to be contained 
in the variable entries of the vector $z$ of Equation \ref{eqn:behaviour1}.
After suitable elimination of variables we can get the former equation 
from the latter.
By working with the original constraints, instead of the derived ones of Equation \ref{eqn:behaviour0}, we are able to better exploit the structure of the system 
and avoid the externally imposed canonical structure of Equation \ref{eqn:behaviour0}. In this sense, this approach is similar to the ILA approach. However, this method does not have as one of its aims, the exploitation of  the difference in the nature of 
the connection and device characteristic constraints. Many of the proof 
techniques and algorithms work with vectors with polynomial entries rather
than with real or rational entries. 

The ILA approach, where also we can deal with the variables of interest 
implicitly, allows us to deal with combinatorial constraints more effectively,
while remaining within the domain of linear algebra of vectors over reals 
or rationals.
Indeed, one can develop a topological network theory in this approach (\cite{HNarayanan1986a,narayanan1987topological,HNarayanan1997}). Although we do not discuss dynamical systems in this paper,
the ILA approach can be used to derive the classical controllability, observability
theory (\cite{kalman,Wonham1978}) for dynamical systems (\cite{HNPS2013,narayanan2016}) and when such systems are based on electrical 
networks, in place of state and output equations, one can work with 
a network suitably decomposed into static and dynamic multiports 
(\cite{narayanan2016}). The decomposition is linear time to carry out 
and preserves the topological structure of the network in an essential way.

Basic circuit theory has many procedures which usually work, but whose failure
does not imply
that the task cannot be completed in another
way.
This inadequacy can be alleviated through implicit linear algebra.
In this paper, we ilustrate its use by developing
procedures  which are shown to work under general
conditions. For instance, through the use of ILA,
we give a generalization of Thevenin-Norton theorem,
where we compute  multiport behaviour using
adjoint multiport termination through a gyrator,
and a very general  version of maximum power transfer theorem,
where we use 
adjoint multiport termination through an ideal transformer.
Our result for the Thevenin-Norton theorem is  valid for `regular' multiports which are defined to be
linear multiports  with nonvoid set of solutions for arbitrary source 
values in the device characteristic and with unique interior solutions 
for  given port conditions. In the case of the maximum power transfer
theorem, our result is valid for general linear multiports.
We also prove duality properties
for general linear multiports using the implicit duality 
theorem described below.

Implicit Linear Algebra (ILA) allows us to avoid needless computations.
For instance, in the case of the maximum power transfer theorem,
the usual statement is that this occurs when we terminate
the multiport by the transpose (adjoint) of the Thevenin impedance.
This apparently requires the computation of the Thevenin impedance.
Implicit linear algebra allows us to restate this as `maximum power
transfer occurs corresponding to the port conditions that obtain
when the original multiport is terminated  by its adjoint.'
Now, building the adjoint of the multiport is usually much easier 
(in practice, linear time on the size of the network)
than computing its Thevenin impedance.
Of course, the network that results  by the termination with the adjoint
may have no solution. But this guarantees that maximum power transfer
is not possible. Such a guarantee is not available when the statement
is in terms of Thevenin impedance, because the latter might not exist
and yet maximum power transfer might be possible.

We now give a brief account of the basic ideas, which are stated in terms 
of vector spaces for simplicity.\\
A {\it {vector}} ${f}$ on $X$ over $\mathbb{F}$ is a function $f:X\rightarrow \mathbb{F}$ where $\mathbb{F}$ is a field. It would usually be represented as $f_X.$ A collection of such vectors closed under addition and scalar multiplication is a {\it vector space} denoted by $\V_X.$
When the vector space is on $X\cup Y, \ X,Y,$ disjoint, it is 
denoted by $\V_{XY}.$


The usual {\it sum} and {\it intersection} of vector spaces are given extended meanings as follows.
$$\V_X+\V_Y\equivd (\VX\oplus \ \0_{Y\setminus X})+ (\Vy\oplus \ \0_{X\setminus Y}),$$
$$\V_X\cap \V_Y\equivd (\VX\oplus \F_{Y\setminus X})\cap (\Vy\oplus \F_{X\setminus Y}),$$
where $\0_Z$ represents the space containing only the $0$ vector on $Z$ and $ \F_Z$ represents the collection of all vectors on $Z.$

Given $\V_S,T\subseteq S,$ the {\it restriction} of $\V_S$ to $T,$ denoted by
$\V_S\circ T,$ is the
collection of all $f_T,$ where $(f_T,f'_{S\setminus T}),$ for some $f'_{S\setminus T},$
belongs to $\V_S.$

The {\it contraction} of $\V_S$ to $T$  denoted by
$\V_S\times T,$  is the
collection of all $f_T,$ where $(f_T,0_{S\setminus T}),$
belongs to $\V_S.$

The {\it matched composition} of $\Vsp,\Vpq,$ with  $S,P,Q$ pairwise disjoint, is denoted by $\Vsp\lrar \Vpq$ and is defined to be the collection of all
$(f_S,h_Q)$ such that there exists some $g_P$ with \\
$(f_S,g_P)\in \Vsp,
(g_P,h_Q)\in \Vpq .
$

The
 {\it skewed composition} of $\Vsp,\Vpq,$ with $ S,P,Q$ pairwise disjoint,
 is denoted by $\Vsp\rightleftharpoons \Vpq$ and is defined to be the collection of all
$(f_S,h_Q)$ such that there exists some $g_P$ with \\
$(f_S,-g_P)\in \Vsp,
(g_P,h_Q)\in \Vpq.$

The {\it dot product} of $f_S, g_S $ is denoted by $ \langle f_S, g_S \rangle $ and is defined to be $\sum_{e\in S}f(e)g(e).$

The {\it complementary orthogonal space} to $\V_S$
is denoted by
${\V^{\perp}_S} $ and is defined to be\\
 $\{ g_S: \langle f_S, g_S \rangle = 0,\ f_S\in \V_S \}.$

The two basic results (\cite{HNarayanan1986a}) are as follows.

(IIT) The {\it implicit inversion theorem} states that the equation
$$ \Vsp\lrar \Vpq=\Vsq,$$ with  specified $ \Vsp, \Vsq,$ but $\Vpq$ as unknown,  has a solution iff $\Vsp\circ S\supseteq \Vsq \circ S,
\Vsp\times S\subseteq \Vsq \times S$\\
 and, further, under the additional conditions
$\Vsp\circ P\supseteq \Vpq \circ P,
\Vsp\times P\subseteq \Vpq \times P,$ it has a unique solution.

(IDT) The {\it implicit duality theorem} states that
$$(\Vsp\lrar \Vpq)^{\perp}=(\Vsp^{\perp}\rightleftharpoons \Vpq^{\perp}).$$
IIT could be regarded as a generalization of the usual existence-uniqueness result
for the solution of the equation $ Ax=b.$ In the context of this paper, it is very much more useful. IDT was a folklore result in electrical network theory
probably known informally to G.Kron (\cite{kron39},\cite{kron63}). An equivalent result  is stated
with a partial proof in \cite{belevitch68}.
It can be regarded as a generalization
of the result $(AB)^T=B^TA^T.$

We now give a brief outline of the paper.\\
Section \ref{sec:Preliminaries} is on preliminary definitions and results 
from linear algebra and graph theory.\\
Section \ref{sec:matched}
is on the operations of matched and skewed composition of collections of vectors
and their relation to  the connection of multiport behaviours across ports.
\\
Section \ref{sec:iit}
is on the first of the two basic results of ILA viz. Implicit Inversion Theorem (IIT)
 and
Implicit Duality Theorem (IDT), and 
on immediate applications.
IIT gives us, if  a linear multiport is connected to another and results in 
a third, conditions under which the port behaviour of one of the connected multiports can be
recovered when the other two are known.
\\
Section \ref{sec:idt}
is on the second of the two basic results of ILA viz. Implicit Duality Theorem (IDT). 
IDT tells us, for instance, that when the device characteristic is replaced by its  adjoint, the multiport behaviour also gets similarly replaced.
\\
Section \ref{sec:linearalgebrabehaviour}
is on the explicit computation of  solution and port behaviour of multiport networks. \\
Section \ref{subsec:regular}
characterizes regular mutiports (linear multiports which have nonvoid set of solutions 
for arbitrary source values in the device characteristic and unique 
interior solution corresponding to given port condition) using the 
ideas of the preceding section.\\
Section \ref{sec:computingbehaviour}
presents a generalization of Thevenin-Norton Theorem that is 
valid for all regular multiports. This uses the termination of the 
regular multiport by its adjoint through the affine version of a gyrator.
It is shown that such a termination always results in a network 
with a unique solution and, what is more, can be handled by 
currently available linear circuit simulators.
\\
Section \ref{sec:maxpower} derives the most general port conditions for 
stationarity of power transfer through the ports of a linear multiport 
and shows that these conditions are realized, if they can be realized at all, when the multiport 
is connected to its adjoint through an ideal transformer.
It is shown that for passive mutiports the stationarity of power transfer
corresponds to maximum power transfer and for strictly passive multiports
the maximum power transfer conditions are always statisfied.
\\
Section \ref{sec:conclusions} is on conclusions.
\\
The appendix contains proofs of general versions of IIT and IDT
and a brief description of the maximum power transfer for the complex case.
\section{Preliminaries}
\label{sec:Preliminaries}
The preliminary results and the notation used are from \cite{HNarayanan1997}.

A \nw{vector} $\mnw{f}$ on a finite set $X$ over $\mathbb{F}$ is a function $f:X\rightarrow \mathbb{F}$ where $\mathbb{F}$ is a field. 


%

The {\bf sets} on which vectors are defined are  always {\bf finite}. When a vector $x$ figures in an equation, we use the 
convention that $x$ denotes a column vector and $x^T$ denotes a row vector such as
in `$Ax=b,x^TA=b^T$'. Let $f_Y$ be a vector on $Y$ and let $X \subseteq Y$. The \textbf{restriction $f_Y|_X$} of $f_Y$ to $X$ is defined as follows:\\
$f_Y|_X \equivd g_X, \textrm{ where } g_X(e) = f_Y(e), e\in X.$


When $f$ is on $X$ over $\mathbb{F}$, $\lambda \in \mathbb{F},$ then  the \nw{scalar multiplication} $\mnw{\lambda f}$ of $f$ is on $X$ and is defined by $(\lambda f)(e) \equivd \lambda [f(e)]$, $e\in X$. When $f$ is on $X$ and $g$ on $Y$ and both are over $\mathbb{F}$, we define $\mnw{f+g}$ on $X\cup Y$ by \\
$(f+g)(e)\equivd f(e) + g(e),e\in X \cap Y,\ (f+g)(e)\equivd  f(e), e\in X \setminus Y,
\ (f+g)(e)\equivd g(e), e\in Y \setminus X.
$
(For ease in readability, we  use $X-Y$ in place of $X \setminus Y.$)

When $X, Y, $ are disjoint,  $f_X+g_Y$ is written as  $\mnw{(f_X, g_Y)}.$ When $f,g$ are on $X$ over $\mathbb{F},$ the \textbf{dot product} $\langle f, g \rangle$ of $f$ and $g$ is defined by 
$ \langle f,g \rangle \equivd \sum_{e\in X} f(e)g(e).$
When $X$, $Y$ are disjoint, $\mnw{X\uplus Y}$ denotes the disjoint
union of $X$ and $Y.$ A vector $f_{X\uplus  Y}$ on $X\uplus Y$ is  written as $\mnw{f_{XY}}.$

We say $f$, $g$ are \textbf{orthogonal} (orthogonal) iff $\langle f,g \rangle$ is zero.

An \nw{arbitrary  collection} of vectors on $X$ 
is denoted by $\mnw{\mathcal{K}_X}$. 
When $X$, $Y$ are disjoint we usually write $\mathcal{K}_{XY}$ in place of $\mathcal{K}_{X\uplus Y}$.
We write $\K_{XY}\equivd \K_X\oplus \K_Y$ iff
$\K_{XY}\equivd\{f_{XY}:f_{XY}=(f_X,g_Y), f_X\in \K_X, g_Y\in \K_Y\}.$
We refer to $\K_X\oplus \K_Y$ as the \nw{direct sum} of $\K_X, \K_Y.$ 

A collection $\K_X$ is a \nw{vector space} on $X$ iff it is closed under 
addition and scalar multiplication. 
The notation $\mnw{\V_X}$ always denotes
a vector space on $X.$
For any collection $\K_X,$  $\mnw{span(\K_X)}$ is the vector space of all
linear combinations of vectors in it.
We say $\A_X$ is an \nw{affine space} on $X,$ iff it can be expressed as
$x_X+\V_X,$ where $x_X$ is a vector and $\V_X,$ a vector space on $X.$
The latter is unique for $\A_X$ and is said to be its \nw{vector space translate}.

For a vector space  $\V_X,$ since we take $X$ to be finite,
any maximal independent subset of $\V_X$ has size less than or equal to $|X|$ and this 
size can be shown
to be unique. A maximal independent subset of a vector
space $\V_X$ is called its \nw{basis} and its  size 
is called the  {\bf dimension} or \nw{rank} of $\V_X$ and denoted by ${\mnw{dim}(\V_X)}$
 or by ${\mnw{r}(\V_X)}.$
For any collection of vectors $\K_X,$
the rank $\mnw{r}(\K_X)$
is defined to be $dim(span(\K_X)).$
The collection of all linear combinations of the rows of a matrix $A$ is a vector space 
that is denoted by $row(A).$

For any collection of vectors
$\mathcal{K}_X,$   the collection $\mnw{\mathcal{K}_X^{\perp}}$ is defined by
$ {\mathcal{K}_X^{\perp}} \equivd \{ g_X: \langle f_X, g_X \rangle =0\},$
It is clear that $\mathcal{K}_X^{\perp}$ is a vector space for 
any $\mathcal{K}_X.$ When $\mathcal{K}_X$ is a vector space $\V_X,$
 and the underlying set $X$ is finite, it can be shown that $({\mathcal{V}_X^{\perp}})^{\perp}= \mathcal{V}_X$ 
and  $\mathcal{V}_X,{\mathcal{V}_X^{\perp}}$ are said to be \nw{complementary orthogonal}. 
The symbol $0_X$ refers to the \nw{zero vector} on $X$ and $\mnw{0_X}$  refers to the \nw{zero vector space} on $X.$ The symbol $\mnw{\F_X}$  refers  to the collection of all vectors on $X$ over the field in question.
It is easily seen, when $X,Y$ are disjoint, and $\K_X, \K_Y$  
contain zero vectors, that $(\K_X\oplus \K_Y)^{\perp}=
\K_X^{\perp}\oplus\K_Y^{\perp}.$

A matrix of full row rank, whose rows generate a vector space $\V_X,$
is called a \nw{representative matrix} for $\V_X.$
A representative matrix which can be put in the form $(I\ |\ K)$ after column
permutation, is called a \nw{standard representative matrix}.
It is clear that every vector space has a  standard representative matrix.
If $(I\ |\ K)$ is  a standard representative matrix of $\V_X,$
it is easy to see  that $(-K^T|I)$ is a standard representative matrix of $\V^{\perp}_X.$
Therefore we must have
\begin{theorem}
\label{thm:perperp}
Let $\V_X$ be a vector space on $X.$ Then\\
$r(\V_X)+r(\V^{\perp}_X)=|X|$ and $((\V_X)^{\perp})^{\perp}=\V_X.$
\end{theorem}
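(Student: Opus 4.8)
The plan is to make precise the computation sketched immediately above the statement: exhibit an explicit representative matrix for $\V^{\perp}_X$ in terms of one for $\V_X$, and then read off both assertions. Since every vector space has a standard representative matrix, I would begin by fixing one for $\V_X$, say $(I\ |\ K)$ after a suitable column permutation, where $I$ is $r\times r$ with $r=r(\V_X)$ and $K$ is $r\times(|X|-r)$. Writing a general element of $\V_X$ as $(a,aK)$ with $a$ ranging over all row vectors of length $r$, and a general row-combination of $(-K^T|I)$ as $(-bK^T,b)$ with $b$ of length $|X|-r$, a one-line block computation gives $\langle (a,aK),(-bK^T,b)\rangle = -aKb^T + aKb^T = 0$. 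Hence every row of $(-K^T|I)$ lies in $\V^{\perp}_X$, and, $\V^{\perp}_X$ being a vector space, so does the whole row space $row(-K^T|I)$.

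The substantive step is the reverse inclusion, which is what upgrades ``rows orthogonal to $\V_X$'' into ``a representative matrix of all of $\V^{\perp}_X$''. I would take an arbitrary $g=(g_A,g_B)\in\V^{\perp}_X$ and impose orthogonality against every $(a,aK)$: this reads $a\,(g_A^T + K g_B^T)=0$ for all $a$, which forces $g_A^T = -Kg_B^T$, i.e. $g_A = -g_B K^T$. Thus $g=(-g_BK^T,\,g_B)$ is exactly the row-combination of $(-K^T|I)$ with coefficient vector $g_B$, so $\V^{\perp}_X\subseteq row(-K^T|I)$. Combining the two inclusions yields $\V^{\perp}_X = row(-K^T|I)$. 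Because $(-K^T|I)$ carries an identity block among its columns it has full row rank $|X|-r$, whence $r(\V^{\perp}_X)=|X|-r(\V_X)$, which is the first identity.

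For the second identity I would apply the rank identity twice: $r((\V^{\perp}_X)^{\perp}) = |X| - r(\V^{\perp}_X) = |X| - (|X|-r(\V_X)) = r(\V_X)$. On the other hand, the containment $\V_X\subseteq(\V^{\perp}_X)^{\perp}$ is immediate, since every $f\in\V_X$ is orthogonal to every $g\in\V^{\perp}_X$ by the very definition of $\V^{\perp}_X$. A subspace contained in a subspace of the same (finite) dimension must coincide with it, so $(\V^{\perp}_X)^{\perp}=\V_X$. (Alternatively, one may simply rerun the explicit construction on $(-K^T|I)$: after permuting its columns into the standard form $(I\ |\ -K^T)$, the recipe produces $(K\ |\ I)$, which permutes back to $(I\ |\ K)$, recovering $\V_X$ directly.)

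The only place demanding care is the reverse inclusion; everything else is bookkeeping. The orthogonality identity and the rank count are routine once the block partition of $X$ into pivot and non-pivot columns is fixed, and the existence of a standard representative matrix has already been granted in the text. I would therefore concentrate the write-up on the characterization $g_A=-g_BK^T$ of the vectors of $\V^{\perp}_X$, since it is this equation — rather than mere orthogonality of the displayed rows — that certifies $(-K^T|I)$ actually spans all of $\V^{\perp}_X$ and thereby pins down its rank.
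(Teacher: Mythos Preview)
Your proposal is correct and follows exactly the approach the paper sketches: the paper asserts just before the theorem that if $(I\,|\,K)$ is a standard representative matrix of $\V_X$ then $(-K^T|I)$ is one for $\V^{\perp}_X$, and states the theorem as an immediate consequence. You have simply supplied the details behind the paper's ``it is easy to see'', with the reverse inclusion $g_A=-g_BK^T$ being precisely the verification the paper leaves implicit.
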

\begin{remark}
\label{rem:realtocomplex}
When the field $\mathbb{F}\equivd \mathbb{C},$ it is usual to interpret
the dot product $\langle f_X, g_X \rangle$ of $f_X,g_X$ to be 
the inner product $\Sigma f(e)\overline{g(e)}, e\in X,$
where $\overline{g(e)}$ is the complex conjugate of $g(e).$
In place of $\V^{\perp}_X$ we must use $\V^{*}_X,$ 
where $\V^{*}_X\equivd \{ g_X: \langle f_X, g_X \rangle =0\},$
 $\langle f_X, g_X \rangle$ being taken to be as above.
The definition of adjoint, which is introduced later, must be in terms of 
$\V^{*}_X$ instead of in terms of $\V^{\perp}_X.$ 
\\
In this case, in place of the transpose of a matrix $Z$ we use $Z^*,$ the conjugate
transpose of $Z.$
\\
The above interpretation of dot product is {\it essential} if we wish to 
extend the development in Sections \ref{sec:computingbehaviour},
\ref{sec:maxpower} to the complex field. 
%
%
%
%
%
%
\\
The  
Implicit Duality Theorem (Theorem \ref{thm:idt0}), would go through
with either definition of dot product and the corresponding 
definition of orthogonality.
\end{remark}


The collection
$\{ (f_{X},\lambda f_Y) : (f_{X},f_Y)\in \mathcal{K}_{XY} \}$
is denoted by
$ \mnw{\mathcal{K}_{X(\lambda Y)}}.
$
When $\lambda = -1$ we would write $ {\mathcal{K}_{X(\lambda Y)}}$  more simply as $\mnw{\mathcal{K}_{X(-Y)}}.$
Observe that $(\mathcal{K}_{X(-Y)})_{X(-Y)}=\mathcal{K}_{XY}.$

We say sets $X$, $X'$ are \nw{copies of each other} iff they are disjoint and there is a bijection, usually clear from the context, mapping  $e\in X$ to $e'\in X'$.
When $X,X'$ are copies of each other, the vectors $f_X$ and $f_{X'}$ are said to be copies of each other with  $f_{X'}(e') \equivd  f_X(e), e \in X.$ 
The copy $\K_{X'}$ of $\K_X$ is defined by
 $\K_{X'}\equivd\{f_{X'}:f_X\in \K_X\}.$
When $X$ and $X'$ are copies of each other, the notation for interchanging the positions of variables with index sets $X$ and $X'$ in a collection $\mathcal{K}_{XX'Y}$ is given by $\mnw{(\mathcal{K}_{XX'Y})_{X'XY}}$, that is\\
$(\mathcal{K}_{XX'Y})_{X'XY}
 \equivd \{(g_X,f_{X'},h_Y)\ :\ (f_X,g_{X'},h_Y) \in \mathcal{K}_{XX'Y},\ g_X\textrm{ being copy of }g_{X'},\ f_{X'}\textrm{ being copy of }f_X  \}.$
An affine space $\mathcal{K}_{XX'}$ is said to be {\bf proper}
iff the rank of its vector space translate is $|X|=|X'|.$


\subsection{Sum and Intersection}
Let $\mathcal{K}_{SP}$, $\mathcal{K}_{PQ}$ be collections of vectors on sets $S\uplus P,$ $P\uplus Q,$ respectively, where $S,P,Q,$ are pairwise disjoint. The \nw{sum} $\mnw{\mathcal{K}_{SP}+\mathcal{K}_{PQ}}$ of $\mathcal{K}_{SP}$, $\mathcal{K}_{PQ}$ is defined over $S\uplus P\uplus Q,$ as follows:\\
 $\mathcal{K}_{SP} + \mathcal{K}_{PQ} \equivd  \{  (f_S,f_P,0_{Q}) + (0_{S},g_P,g_Q), \textrm{ where } (f_S,f_P)\in \mathcal{K}_{SP}, (g_P,g_Q)\in \mathcal{K}_{PQ} \}.$\\
Thus,
$\mathcal{K}_{SP} + \mathcal{K}_{PQ} \equivd (\mathcal{K}_{SP} \oplus \0_{Q}) + (\0_{S} \oplus \mathcal{K}_{PQ}).$\\
The \nw{intersection} $\mnw{\mathcal{K}_{SP} \cap \mathcal{K}_{PQ}}$ of $\mathcal{K}_{SP}$, $\mathcal{K}_{PQ}$ is defined over $S\uplus P\uplus Q,$ where $S,P,Q,$ are pairwise disjoint, as follows:
$\mathcal{K}_{SP} \cap \mathcal{K}_{PQ} \equivd \{ f_{SPQ} : f_{S P Q} = (f_S,h_P,g_{Q}),$
 $\textrm{ where } (f_S,h_P)\in\mathcal{K}_{SP}, (h_P,g_Q)\in\mathcal{K}_{PQ}.
\}.$\\
Thus,
$\mathcal{K}_{SP} \cap \mathcal{K}_{PQ}\equivd (\mathcal{K}_{SP} \oplus  \F_{Q}) \cap (\F_{S} \oplus \mathcal{K}_{PQ}).$\\

It is immediate from the definition of the operations that sum and intersection of
vector spaces remain vector spaces.

The following identity is useful.
%
%
%
\begin{theorem}
\label{thm:sumintersection}
Let $\V^1_A, \V^2_B, \V_S,\V'_S $ be vector spaces. Then\\
\begin{enumerate}
\item $r(\V_S)+r(\V'_S)=r(\V_S+\V'_S)+r(\V_S\cap \V'_S);$
\item $(\V^1_A+\V^2_B)^{\perp}=(\V^1_A)^{\perp}\cap (\V^2_B)^{\perp};$
\item $(\V^1_A\cap \V^2_B)^{\perp}=(\V^1_A)^{\perp}+ (\V^2_B)^{\perp}.$
\end{enumerate}
\end{theorem}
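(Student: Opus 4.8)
The plan is to prove the three parts in the order (1), (2), (3): part (1) by a direct basis count, part (2) by reducing the extended sum to an ordinary sum on the common ground set and then invoking the complementary-orthogonality identities already recorded, and part (3) as a formal dual of (2).

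For part (1), since $\V_S$ and $\V'_S$ live on the same set $S$, the extended sum and intersection reduce to the ordinary ones and the claim is the classical dimension formula. I would prove it by the standard basis-extension argument: choose a basis $B_0$ of $\V_S\cap\V'_S$, extend it to a basis $B_0\cup B_1$ of $\V_S$ and to a basis $B_0\cup B_2$ of $\V'_S$, and then verify that $B_0\cup B_1\cup B_2$ is a basis of $\V_S+\V'_S$. Linear independence of the combined set is the only point needing care; any dependence forces a vector lying in both $\V_S$ and $\V'_S$, hence in their intersection, which collapses the $B_1,B_2$ contributions. Counting sizes then gives $r(\V_S)+r(\V'_S)=r(\V_S+\V'_S)+r(\V_S\cap\V'_S)$.

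For part (2), write $X\equivd A\cup B$ and extend both spaces to $X$, setting $\hat{{\cal V}}^1\equivd \V^1_A\oplus\0_{X\setminus A}$ and $\hat{{\cal V}}^2\equivd \V^2_B\oplus\0_{X\setminus B}$, so that $\V^1_A+\V^2_B=\hat{{\cal V}}^1+\hat{{\cal V}}^2$ as an ordinary sum of subspaces of $\F_X$. On a fixed set the orthogonal complement of a sum is the intersection of the complements — a vector annihilates every element of $\hat{{\cal V}}^1+\hat{{\cal V}}^2$ iff it annihilates each summand — so $(\hat{{\cal V}}^1+\hat{{\cal V}}^2)^{\perp}=(\hat{{\cal V}}^1)^{\perp}\cap(\hat{{\cal V}}^2)^{\perp}$. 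Now apply the identity $(\K_X\oplus\K_Y)^{\perp}=\K_X^{\perp}\oplus\K_Y^{\perp}$ recorded above, together with $\0_Z^{\perp}=\F_Z$, to obtain $(\hat{{\cal V}}^1)^{\perp}=(\V^1_A)^{\perp}\oplus\F_{X\setminus A}$ and $(\hat{{\cal V}}^2)^{\perp}=(\V^2_B)^{\perp}\oplus\F_{X\setminus B}$. Since $X\setminus A=B\setminus A$ and $X\setminus B=A\setminus B$, these are exactly the two padded factors appearing in the definition of the extended intersection $(\V^1_A)^{\perp}\cap(\V^2_B)^{\perp}$, so the intersection collapses to it and part (2) follows.

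Part (3) then follows formally. Replacing $\V^1_A,\V^2_B$ in part (2) by their complements $(\V^1_A)^{\perp},(\V^2_B)^{\perp}$ and using $((\V)^{\perp})^{\perp}=\V$ from Theorem \ref{thm:perperp}, part (2) reads $((\V^1_A)^{\perp}+(\V^2_B)^{\perp})^{\perp}=\V^1_A\cap\V^2_B$; taking the complementary orthogonal of both sides and invoking the double-complement identity once more yields $(\V^1_A\cap\V^2_B)^{\perp}=(\V^1_A)^{\perp}+(\V^2_B)^{\perp}$. The one place demanding attention throughout is the bookkeeping of the extended operations over the possibly unequal index sets $A$ and $B$: one must consistently regard everything as living on the common set $A\cup B$ and keep the $\F$- and $\0$-padding aligned with the definitions of extended sum and intersection. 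Once that alignment is made explicit, each step is a direct unwinding, so I expect no genuine obstacle beyond this bookkeeping.
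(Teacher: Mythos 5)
Your proof is correct. Note that the paper itself states Theorem \ref{thm:sumintersection} without proof (it is introduced only as ``the following identity is useful''), so there is no authorial argument to compare against; your write-up simply supplies the standard details the authors chose to omit. The route you take is the natural one: part (1) is the classical dimension formula by basis extension; part (2) correctly reduces the extended sum to an ordinary sum on $A\cup B$ via zero-padding, applies $(U+V)^{\perp}=U^{\perp}\cap V^{\perp}$ on the fixed ground set, and then uses the recorded identity $(\K_X\oplus\K_Y)^{\perp}=\K_X^{\perp}\oplus\K_Y^{\perp}$ together with $\0_Z^{\perp}=\F_Z$ to recognise the result as exactly the paper's definition of the extended intersection; part (3) follows formally from (2) and the double-complement identity of Theorem \ref{thm:perperp}. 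The bookkeeping over the unequal index sets $A$ and $B$, which you flag as the only delicate point, is handled correctly.
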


\subsection{Restriction and contraction}

The \nw{restriction}  of $\mnw{\mathcal{K}_{SP}}$ to $S$ is defined by
$\mnw{\mathcal{K}_{SP}\circ S}\equivd \{f_S:(f_S,f_P)\in \mathcal{K}_{SP}\}.$
The \nw{contraction}  of $\mnw{\mathcal{K}_{SP}}$ to $S$ is defined by
$\mnw{\mathcal{K}_{SP}\times S}\equivd \{f_S:(f_S,0_P)\in \mathcal{K}_{SP}\}.$
The sets on which we perform the contraction operation would always 
have the zero vector as a member so that the resulting set would be nonvoid.

Here again $\mnw{\mathcal{K}_{SPZ}\circ SP}$, $\mnw{\mathcal{K}_{SPZ} \times SP}$, respectively
when $S,P,Z,$ are pairwise disjoint,  denote\\  $\mnw{\mathcal{K}_{SPZ}\circ (S\uplus P)}$, $\mnw{\mathcal{K}_{SPZ} \times (S \uplus P)}.$

It is clear that restriction and contraction of vector spaces are also
vector spaces.

\subsection{Elimination of variables in linear equations}
\label{subsec:elimination}
In subsequent pages we often need to compute the constraints
on a subset of variables, given linear equations on a larger subset of 
variables. We briefly discuss the essential ideas.

Suppose we are given the linear equation
\begin{align}
\label{eqn:elimination}
\ppmatrix{C_{S} & C_{P}}\ppmatrix{x_S\\x_P}&=s,
\end{align}
and we need to compute the constraint it imposes on the variables $x_P.$
To do this we do invertible row operations on the equation to put it 
in the form
\begin{align}
\label{eqn:elimination2}
\ppmatrix{
        C_{1S} & \vdots\vdots  & C_{1P}\\
        0_{2S} & \vdots\vdots  &C_{2P}}\ppmatrix{x_S\\x_P}&=\ppmatrix{s_1\\s_2}, 
\end{align}
where rows of $C_{1S}$ are linearly independent and span the rows of $C_{S}.$
We claim that the Equation \ref{eqn:elimination} imposes the constraint
\begin{align}
\label{eqn:elimination3}
\ppmatrix{ C_{2P}}\ppmatrix{x_P}&=s_2,
\end{align}
on $x_P,$
i.e., whenever $\hat{x}_S,\hat{x}_P$ is a solution of Equation \ref{eqn:elimination}, $\hat{x}_P$  is a solution of Equation \ref{eqn:elimination3}
and whenever $\hat{x}_P$ is a solution of Equation \ref{eqn:elimination3}, there exists some $\hat{x}_S,$ such that $\hat{x}_S,\hat{x}_P$  is a solution of Equation \ref{eqn:elimination}.
The first part of the sentence is obvious. The second part follows 
because, for any given $\hat{x}_P,$
the equation 
$$ (C_{1S})x_S=-(C_{1P})\hat{x}_P+s_1$$ has a solution since rows of $C_{1S}$ are linearly independent.
It is clear that Equation \ref{eqn:elimination} has a solution iff
Equation \ref{eqn:elimination3} has a solution.
The solution always exists if the rows of $(C_{S} | C_{P})$ are linearly
independent. 
We note that if the rows of $C_{2P}$ are linearly dependent, 
Equation \ref{eqn:elimination3} has a solution iff 
$\lambda^T(C_{2P})=0$ always implies $\lambda^Ts_2=0.$

Let us now examine the special case where the right side $s$ of 
Equation \ref{eqn:elimination} is zero.

Let  $(C_S| C_P)$ be a representative matrix  for the 
vector space $\V_{SP}.$
The solution space of Equation \ref{eqn:elimination} would then be 
$\V^{\perp}_{SP}.$

Let $(C_{SP})$ denote the coefficient matrix  in Equation \ref{eqn:elimination2}.
Since $(C_{SP})$ 
is obtained from $(C_S| C_P)$ by invertible linear operations,
it is also a  representative matrix  for
vector space $\V_{SP}.$
Therefore, the rows of  $(C_{2P}),$ are linearly independent.
Further the rows of $(C_{1S})$ are independent and span the rows of $(C_S),$ by construction. Therefore, whenever $(f_S,f_P)$ is a vector in $\V_{SP},$
we must have   $f_S,$  linearly dependent on the rows of $(C_{1S}).$
Since these rows are independent, if $(0_S,f_P)$ is a vector in $\V_{SP},$
$f_P$ must be  linearly dependent on the rows of $(C_{2P}).$
By the definition of restriction and contraction of vector spaces,
we therefore conclude that $(C_{1S})$ is a representative matrix  for
$\V_{SP}\circ S$ and that  $(C_{2P})$ is a representative matrix  for
$\V_{SP}\times P.$
We say that these latter are \nw{visible} in the
representative matrix  $(C_{SP})$ of $\Vsp.$ 

From the form of the matrix $(C_{SP})$ and the discussion related to Equation \ref{eqn:elimination}
we can conclude the following:
\begin{theorem}
\label{thm:dotcrossidentity}
\begin{enumerate}
\item $r(\Vsp)=r(\Vsp\circ S)+r(\Vsp\times P);$
\item $\V_{SP}^{\perp}\circ P= (\V_{SP}\times P)^{\perp};$
\item $\V_{SP}^{\perp}\times S= (\V_{SP}\circ S)^{\perp}.$
\end{enumerate}
\end{theorem}
\begin{proof}
Part 1. is immediate from the form of the matrix $(C_{SP}).$\\ 
2. The solution space of Equation \ref{eqn:elimination}, when the right side $s=0,$ is
$\V^{\perp}_{SP}.$\\
Therefore, from the 
discussion  
related to Equation \ref{eqn:elimination} and the definition of restriction
and contraction,
we have that $\V_{SP}^{\perp}\circ P= (\V_{SP}\times P)^{\perp}.$\\
3. It is clear that the solution to Equation \ref{eqn:elimination}
has the form $(x_S,0_P)$ iff $(C_{1S})x_S=0.$
Since $(C_{1S})$ is a representative matrix of $\Vsp\circ S,$ it follows 
that $\V_{SP}^{\perp}\times S= (\V_{SP}\circ S)^{\perp}.$
\end{proof}

\subsection{Graphs}
A \nw{directed graph} $\G$ is a triplet $(V(\G),E(\G),f^d)$, where  the sets $V(\G)$, $E(\G),$ define the \nw{vertices} (or \nw{nodes}),   \nw{edges}, respectively of the graph, $f^d$ is the \nw{incidence function} which associates  with each edge an ``ordered pair'' of vertices,
called respectively its $+$ve and $-$ve endpoints.
We will refer to a {\bf directed graph in brief as a graph}, since in this paper,
we deal only with the former.
An edge is usually diagrammatically represented with an arrow going from its 
positive to its negative endpoint. The \nw{degree} of a node is the number 
of edges incident at it, with edges with single endpoints (\nw{self loops}) counted twice. An \nw{isolated} vertex has degree zero. An \nw{undirected path} between vertices $v_0,v_k$ of $\G$ is a 
sequence $(v_0,e_0,v_1, e_1, \cdots , e_{k-1},v_k),$ where 
$e_i, i=1, \cdots ,k-1, $ is incident at $v_i.$
A graph is said to be \nw{connected}, if there exists an undirected path between every pair of nodes. Otherwise it is said to be \nw{disconnected}.
A subgraph $\G_1$  of a graph $(V(\G),E(\G),f^d)$ is a graph $(V_1(\G_1),E_1(\G_1),f_1^d),$ where $V_1(\G_1)\subseteq V(\G_1),E_1(\G_1)\subseteq E(\G_1),$ and $f_1^d$ agrees with $f^d$ 
on $E_1(\G_1).$
A disconnected graph has  \nw{connected components} which are individually connected  with no edges between the components.

A \nw{loop} is a connected subgraph with the degree of each node equal to $2.$
An \nw{orientation} of a loop is a sequence of all its edges such that each
edge has a common end point with the edge succeeding it, the first edge
being treated as succeeding the last. Two orientations in which the succeeding edge to a given
edge agree are treated as the same so that there are only two possible 
orientations for a loop. 
The relative orientation of an edge 
with respect to that of the loop is positive, if the orientation 
of the loop agrees with the direction (positive node to negative node)
of the edge and negative if opposite.

A \nw{tree} subgraph of a graph is a sub-graph of the original graph with no loops. The set of edges of a tree subgraph is called a \nw{tree} and its edges are called \nw{branches}. A \nw{spanning tree} is a maximal tree with respect to the edges of a connected graph.
A \nw{cotree} of a graph is the (edge set) complement of a spanning tree of the connected graph.
A \nw{forest} of a disconnected graph is a disjoint union of the spanning trees of its connected components. The complement of a forest is called \nw{coforest}.
For simplicity, {\bf we refer to a forest (coforest) as a tree (cotree)}
 even when it is not clear that the graph is connected.

A \nw{cutset} is a minimal 
 subset of edges which when deleted from the graph increases the count of connected components by one.
Deletion of the edges of a cutset breaks exactly one of the components 
of the graph, say $\G_1,$ into two, say $\G_{11},\G_{12}.$   A cutset can be oriented in one 
of two ways corresponding to the ordered pair $(\G_{11},\G_{12})$ or the ordered pair $(\G_{12},\G_{11}).$
The relative orientation of an edge 
with respect to that of the cutset is positive if the orientation, say $(\G_{11},\G_{12}),$
of the cutset agrees with the direction (positive node to negative node)
of the edge and negative if opposite.

Let $\mathcal{G}$ be a graph with $S\equivd E(\mathcal{G})$  and let $T\subseteq S.$ Then 
$\mnw{\mathcal{G} sub (S-T)}$ denotes the graph obtained by removing the edges $T$ from $\mathcal{G}.$ 
This operation is referred to also as {\bf deletion} or
open circuiting of the edges $T.$
The graph $\mnw{\mathcal{G} \circ (S-T)}$ is obtained by removing  the isolated vertices from $\mathcal{G} sub (S-T).$
The graph $\mnw{\mathcal{G} \times (S-T)}$ is obtained by removing the edges $T$ from $\mathcal{G}$ and fusing the end vertices of the removed edges. If any isolated vertices (i.e., vertices not incident on any edges) result, they are deleted. 
Equivalently, one may first build $\mathcal{G} sub T,$
and treat each of its connected components, including the isolated nodes 
as a `supernode' of another graph with edge set $S-T.$
If any of the supernodes is isolated, it is removed.
This would result in ${\mathcal{G} \times (S-T)}.$
This operation is referred to also as {\bf contraction} or 
short circuiting of the edges $T.$
We refer to $(\G\times T)\circ W, (\G\circ T)\times W$ respectively, more simply by $\G\times T\circ W, \G\circ T\times W. $
\\
If disjoint edge sets $A,B$ 
are respectively deleted and contracted, 
the order in which these operations are performed 
can be seen to be irrelevant.
Therefore, $\G\circ (S-A)\times (S-(A\uplus B))=\G\times (S-B)\circ (S-(A\uplus B)).$
(Note that $\times, \circ $ are also used to denote vector space operations. However, the context would make clear
whether the objects involved are graphs or vector spaces.)

\nw{Kirchhoff's Voltage Law (KVL)} for a graph states that the sum of the signed voltages of
edges 
around an oriented loop is zero - the sign of the voltage of an edge 
being positive if the edge orientation agrees with the orientation of the loop
and negative if it opposes.\\
\nw{Kirchhoff's Current Law (KCL)} for a graph states that the sum of the signed currents leaving 
a node is zero, the sign of the current of an edge being positive if 
 its positive endpoint is the node in question.\\
We refer to the space of vectors $v_{S'},$ which satisfy Kirchhoff's Voltage Law (KVL) of the graph $\mathcal{G},$
by $\mnw{\V^v(\mathcal{G})}$ and to the space of vectors $i_{S"},$ which satisfy Kirchhoff's Current Law (KCL) of the graph $\mathcal{G},$
by $\mnw{\V^i(\mathcal{G})}.$
{\bf These vector spaces will, unless otherwise stated, be taken as  
over $\Re.$}

The following are useful results on vector spaces associated with graphs.
\begin{theorem}
\label{thm:tellegen}
{\bf Tellegen's Theorem} (\cite{tellegen},\cite{penfield})  $\V^i(\mathcal{G})= \V^v(\mathcal{G})^{\perp}.$
\end{theorem}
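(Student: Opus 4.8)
The plan is to reduce Tellegen's theorem to the single statement that the ``node space'' and the ``loop space'' of $\mathcal{G}$ are complementary orthogonal, and then to finish with one application of Theorem \ref{thm:perperp}. First I would introduce the signed incidence matrix $A$ of $\mathcal{G}$, with rows indexed by the vertices and columns indexed by the edge set $S\equivd E(\mathcal{G})$, where $A_{ve}$ equals $+1,-1,0$ according as $v$ is the positive endpoint, the negative endpoint, or not an endpoint of $e$. Write $\mathcal{N}\equivd row(A)$ for the node space, and let $\mathcal{L}$ be the loop space, the span of the signed indicator vectors $x_L$ of the oriented loops of $\mathcal{G}$ (the relative orientations supplying the signs).

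Next I would translate the two Kirchhoff laws into orthogonality conditions. By the definition of KCL, a current vector $i_S$ lies in $\V^i(\mathcal{G})$ iff $\langle A_v, i_S\rangle=0$ for every vertex $v$, where $A_v$ is the corresponding row of $A$; hence $\V^i(\mathcal{G})=\mathcal{N}^{\perp}$. Similarly, by the definition of KVL, a voltage vector $v_S$ lies in $\V^v(\mathcal{G})$ iff $\langle x_L, v_S\rangle=0$ for every oriented loop $L$; hence $\V^v(\mathcal{G})=\mathcal{L}^{\perp}$. With these two identifications in hand, the assertion $\V^i(\mathcal{G})=\V^v(\mathcal{G})^{\perp}$ becomes $\mathcal{N}^{\perp}=(\mathcal{L}^{\perp})^{\perp}=\mathcal{L}$, the last equality being $\V^{\perp\perp}=\V$ from Theorem \ref{thm:perperp}. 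So it suffices to prove the linear-algebraic fact $\mathcal{L}=\mathcal{N}^{\perp}$.

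For the easy inclusion $\mathcal{L}\subseteq\mathcal{N}^{\perp}$ I would check that each loop vector is orthogonal to each row of $A$: traversing an oriented loop, one enters and leaves every visited vertex the same number of times, so the signed incidence contributions at each vertex cancel, giving $\langle A_v, x_L\rangle=0$. Equality then comes from a dimension count, since Theorem \ref{thm:perperp} gives $r(\mathcal{N})+r(\mathcal{N}^{\perp})=|S|$. I would fix a spanning forest of $\mathcal{G}$ (a ``tree'' in the paper's convention): its $|V(\mathcal{G})|-c$ branches furnish that many independent rows of $A$, so $r(\mathcal{N})=|V(\mathcal{G})|-c$, where $c$ is the number of connected components; while the $|S|-|V(\mathcal{G})|+c$ cotree edges furnish that many fundamental loops whose vectors are independent (each contains a distinct cotree edge) and span $\mathcal{L}$, so $r(\mathcal{L})=|S|-|V(\mathcal{G})|+c$. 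Hence $r(\mathcal{L})=|S|-r(\mathcal{N})=r(\mathcal{N}^{\perp})$, and together with $\mathcal{L}\subseteq\mathcal{N}^{\perp}$ this forces $\mathcal{L}=\mathcal{N}^{\perp}$, completing the proof.

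I expect the spanning-forest rank count to be the main obstacle: the two rank formulas $r(\mathcal{N})=|V(\mathcal{G})|-c$ and $r(\mathcal{L})=|S|-|V(\mathcal{G})|+c$, together with the claim that the fundamental loops span all of $\mathcal{L}$, are precisely the genuine graph-theoretic content, whereas the remainder is bookkeeping with the definitions and the single use of $\V^{\perp\perp}=\V$. If one prefers to avoid the explicit count, an alternative is to prove $\mathcal{N}=\mathcal{L}^{\perp}$ directly by showing every KVL voltage is a potential difference $v_S=A^{T}\phi$ (assign $\phi$ along a spanning forest and use fundamental-loop KVL to match the cotree entries), but this variant carries the same spanning-tree core.
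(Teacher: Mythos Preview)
The paper does not supply its own proof of this theorem; it is stated as a classical result with citations to Tellegen and to Penfield--Spence--Duinker, and then used freely thereafter. So there is no ``paper's proof'' to compare against. Your proposal is the standard incidence-matrix/cycle-space argument and is correct.

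One small organizational point: you do not actually need to prove separately that the fundamental loops span $\mathcal{L}$. Once you have the easy inclusion $\mathcal{L}\subseteq\mathcal{N}^{\perp}$ and the rank formula $r(\mathcal{N})=|V(\mathcal{G})|-c$, Theorem~\ref{thm:perperp} gives $r(\mathcal{N}^{\perp})=|S|-|V(\mathcal{G})|+c$, so $r(\mathcal{L})\le |S|-|V(\mathcal{G})|+c$. The independence of the fundamental-loop vectors gives the reverse inequality $r(\mathcal{L})\ge |S|-|V(\mathcal{G})|+c$, and equality of dimensions together with the inclusion forces $\mathcal{L}=\mathcal{N}^{\perp}$. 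The spanning claim then falls out as a corollary rather than being a step you must justify up front. With that reordering the only genuinely graph-theoretic ingredient is $r(A)=|V(\mathcal{G})|-c$, which is the usual ``delete one vertex per component'' argument.
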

\begin{lemma}
\label{lem:minorgraphvectorspace}
\cite{tutte} Let $\G$ be a graph on edge set $S.$ 
Let $W\subseteq T\subseteq S.$
\begin{enumerate}
\item $ \V^v(\mathcal{G}\circ T)= (\V^v(\mathcal{G}))\circ T, \ \ \  \V^v(\mathcal{G}\times T)= (\V^v(\mathcal{G}))\times T,\ \ \V^v(\mathcal{G}\circ T\times W)= (\V^v(\mathcal{G}))\circ T \times W;$
\item $ \V^i(\mathcal{G}\circ T)= (\V^i(\mathcal{G}))\times T, \ \ \  \V^i(\mathcal{G}\times T)= (\V^i(\mathcal{G}))\circ T,
\V^i(\mathcal{G}\times T\circ W)= (\V^i(\mathcal{G}))\circ T \times W.$
\end{enumerate}
\end{lemma}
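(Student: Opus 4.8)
The plan is to reduce the six claimed equalities to four ``atomic'' ones---graph deletion and graph contraction, each paired with either the voltage space $\V^v$ or the current space $\V^i$---and then to recover the two three-fold identities by composition. Of the four atomic identities, I would prove the two voltage-space ones directly, exploiting the description of $\V^v(\G)$ as the space of \emph{potential differences}: a vector $v \in \V^v(\G)$ iff there is a function $\phi$ on the vertices with $v(e) = \phi(a_e) - \phi(b_e)$, where $a_e, b_e$ are the $+$ and $-$ endpoints of $e$. (This coincides with the KVL definition: potential differences obviously satisfy KVL, and a dimension count against the loop space, or Tellegen's theorem, gives equality.) The two current-space identities would then be obtained \emph{for free} by taking complementary orthogonals, using Tellegen's theorem (Theorem \ref{thm:tellegen}), $\V^i(\G) = \V^v(\G)^\perp$, together with Theorem \ref{thm:dotcrossidentity}.

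For the deletion identity $\V^v(\G \circ T) = \V^v(\G) \circ T$, I would argue that both sides equal the set of $v_T$ for which some $\phi$ satisfies $v(e)=\phi(a_e)-\phi(b_e)$ for all $e \in T$. On the right, membership in $\V^v(\G)\circ T$ means $v_T$ extends to a potential-difference vector on all of $S$, and the values on $S-T$ can always be read off from the same $\phi$, so the only genuine constraint is the one on $T$-edges. On the left, $\G \circ T$ has the same endpoints for its $T$-edges---isolated vertices, which are not endpoints of any $T$-edge, are simply dropped and contribute nothing---so its potential-difference vectors are described by the identical constraint. The only bookkeeping here is the removal of isolated vertices, which does not affect the achievable $v_T$.

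The contraction identity $\V^v(\G \times T) = \V^v(\G)\times T$ is the crux. Here $\V^v(\G)\times T = \{v_T : (v_T, 0_{S-T}) \in \V^v(\G)\}$, i.e.\ $v_T$ extends to a potential-difference vector whose value on every edge of $S-T$ is $0$. Vanishing on all edges of $S-T$ forces the underlying potential $\phi$ to agree across each edge of the subgraph on $S-T$, hence to be constant on each of its connected components; and these components are exactly the supernodes that become the vertices of $\G \times T$. Thus $\phi$ descends to a potential $\psi$ on the supernodes, giving $v(e)=\psi(\mathrm{sn}(a_e))-\psi(\mathrm{sn}(b_e))$ for $e\in T$, which is precisely the condition $v_T \in \V^v(\G\times T)$; the converse lift is immediate. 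The one elementary fact used is that edge-differences vanishing on a connected subgraph force the potential to be constant on it---this is the step to get right, and where the definition of contraction as supernode-fusion must be matched carefully against ``zero voltage on the contracted edges.''

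Finally I would dualize and compose. Applying $(\cdot)^\perp$ to the deletion identity, using Tellegen on $\G\circ T$ on the left and Theorem \ref{thm:dotcrossidentity}(3) (with the partition $S = T \uplus (S-T)$) on the right, yields $(\V^v(\G)\circ T)^\perp = \V^v(\G)^\perp \times T$, i.e.\ $\V^i(\G\circ T) = \V^i(\G)\times T$; likewise Theorem \ref{thm:dotcrossidentity}(2) turns the contraction identity into $\V^i(\G\times T) = \V^i(\G)\circ T$. For the three-fold identities I would apply the atomic ones in succession: since $W \subseteq T$ is a subset of the edge set of $\G \circ T$, the contraction identity gives $\V^v((\G\circ T)\times W) = \V^v(\G\circ T)\times W$, and then the deletion identity gives $\V^v(\G)\circ T \times W$; the dual computation with $\G \times T$ handles $\V^i(\G\times T\circ W) = \V^i(\G)\circ T\times W$. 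I expect the contraction-voltage step to be the only place requiring real care, the rest being direct potential arguments or formal dualization against the already-established Theorems \ref{thm:tellegen} and \ref{thm:dotcrossidentity}.
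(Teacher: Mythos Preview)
Your proposal is correct. The paper itself does not supply a proof of this lemma: it attributes the result to Tutte and points to \cite{HNarayanan2009} for a proof, so there is no in-paper argument to compare against.

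Your strategy is the standard one and is sound at every step. The potential-difference characterisation of $\V^v(\G)$ makes the two voltage identities transparent, and your handling of the contraction case---matching ``zero voltage on $S{-}T$'' with ``potential constant on the supernodes of $\G\times T$''---is exactly the point where care is needed, and you have it right (including the bookkeeping for isolated vertices/supernodes, which affect nothing on $T$). The dualisation via Tellegen (Theorem~\ref{thm:tellegen}) together with Theorem~\ref{thm:dotcrossidentity} parts~2 and~3 (applied with the partition $S=T\uplus(S{-}T)$, using the evident $S\leftrightarrow P$ symmetry of that theorem) cleanly yields the current-space identities, and the two three-fold statements follow by iterating the atomic ones on the minor. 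Nothing is missing.
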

(Proof also available at \cite{HNarayanan2009}).\\
(Note that $\times,\circ$ are graph operations on the left side of the equations
and vector space operations on the right side.)
\subsection{Networks and multiports}
\label{subsec:networks}
A (static) {\bf electrical network $\mathcal{N},$} or a `network' in short, is a pair $(\mathcal{G},\mathcal{K}),$ where $\mathcal{G}\equivd (V(\G),E(\G),f^d)$ is a directed graph
and $\mathcal{K}, $ called the \nw{device characteristic} of the network, is a collection of pairs of vectors  $(v_{S'},i_{S"}),S\equivd E(\G)$ where  $v_{S'},i_{S"}$ are real or complex vectors on the edge set of the graph. 
In this paper,  we deal only with affine device characteristics 
and with real vectors, unless otherwise stated. When the device characteristic $\K_{S'S"}$ is affine,
we say the network is \nw{linear}. If $\V_{S'S"}$ is the vector space 
translate of $\K_{S'S"},$ we say that $\K_{S'S"}$ is the \nw{source accompanied}
 form of $\V_{S'S"}.$ An affine space $\A_{S'S"}$ is said to be \nw{proper} 
iff its vector space
translate $\V_{S'S"}$ has dimension $|S'|=|S"|.$ 

Let $S$ denote the set of edges of the graph of the network and let
$\{S_1, \cdots , S_k\} $ be a partition of $S,$ each block $S_j$ being an 
\nw{individual device}. Let $S',S"$ be copies of  $S,$
with $e',e"$ corresponding to edge $e.$ The device characteristic would usually have the form $\bigoplus \K_{S_j'S_j"},$  
defined by $(B_j'v_{S_j'}+Q_j"i_{S_j"})=s_j,$
with rows of $(B_j'|Q_j")$ being linearly independent.
\\
We say $S_j$ is  a set of \nw{norators}
iff $\K_{S_j'S_j"}\equivd \F_{S_j'S_j"},$ i.e., there are no constraints on
$v_{S_j'},i_{S_j"}.$\\
A {\bf solution} of $\mathcal{N}\equivd (\G,\K)$ on graph $\G\equivd (V(\G),E(\G),f^d)$ is a pair
 $(v_{S'},i_{S"}),S\equivd E(\G)$ satisfying\\
$v_{S'}\in \V^v(\mathcal{G}),\ \ i_{S"} \in \V^i(\mathcal{G})$
  (KVL,KCL)  and $(v_{S'},i_{S"})\in \mathcal{K}.$
The KVL,KCL conditions are also called \nw{topological} constraints.
Let  let $S',S"$ be disjoint copies of $S,$
let $\V_{S'}\equivd \V^v(\mathcal{G}),(\V^{\perp}_{S'})_{S"}= \V^i(\mathcal{G}),
\ \K_{S'S"}$ be the device characteristic of $\N.$
The set of solutions of  $\mathcal{N}$ may be written, using the extended
definition of intersection as
$$\V_{S'}\cap (\V^{\perp}_{S'})_{S"}\cap \K_{S'S"}=[\V_{S'}\oplus (\V^{\perp}_{S'})_{S"}]\cap \K_{S'S"}.$$
This has the form `[Solution set of topological constraints] $\cap$ [Device characteristics]'.\\
The \nw{power absorbed} at an edge $e\in E$ corresponding to a solution 
$(v_{E'},i_{E"})$ is given by $v_{E'}(e')\times i_{E"}(e").$

A \nw{multiport} $\mathcal{N}_P$ is a network with some subset $P$ of its
edges which are norators, specified as `ports'.
The multiport is said to be \nw{linear} iff its device characteristic is affine.
Let $\N_P$ be on graph $\G_{SP}$ with device characteristic $\K.$
Let $\V_{S'P'}\equivd (\V^v(\G_{SP}))_{S'P'}, (\V^{\perp}_{S'P'})_{S"P"}= (\V^i(\G_{SP}))_{S"P"},$ and let $\K_{S'S"}, $ be the affine device characteristic 
on the edge set $S.$ The device characteristic of $\mathcal{N}_P$
would be $\K\equivd \K_{S'S"} \oplus \F_{P'P"}.$
For simplicity we would refer to $\K_{S'S"}$ as the device characteristic 
of $\N_P.$
\\
The set of solutions of  $\mathcal{N}_P$ may be writen, using the extended
definition of intersection as
$$\V_{S'P'}\cap (\V^{\perp}_{S'P'})_{S"P"}\cap \K_{S'S"}=[\V_{S'P'}\oplus (\V^{\perp}_{S'P'})_{S"P"}]\cap \K_{S'S"}.$$
We say the multiport is \nw{consistent} iff its set of solutions 
is nonvoid.


The multiport $\mathcal{N}_P$ would impose a relationship 
$([\V_{S'P'}\oplus (\V^{\perp}_{S'P'})_{S"P"}]\cap \K_{S'S"})\circ P'P",$ between
$v_{P'},i_{P"}.$ The \nw{multiport behaviour}  (port behaviour for short) $\breve{\K}_{P'P"}$ at $P,$  of $\N_P,$ is defined 
by\\ 
$\breve{\K}_{P'P"}\equivd [([\V_{S'P'}\oplus (\V^{\perp}_{S'P'})_{S"P"}]\cap \K_{S'S"})\circ P'P")]_{P'(-P")}= ([\V_{S'P'}\oplus (\V^{\perp}_{S'P'})_{S"(-P")}]\cap \K_{S'S"})\circ P'P".$ 
\\
When the device characteristic of $\mathcal{N}_P$ is affine, its {port behaviour} $\breve{\K}_{P'P"}$ at $P$ would be 
affine if it were not void.
%
%
%
\begin{remark}
Note that,
if the multiport
is a single port edge in parallel with a positive resistor $R,$\\
$([\V_{S'P'}\oplus (\V^{\perp}_{S'P'})_{S"(-P")}]\cap \K_{S'S"})\circ P'P"$
would be the solution of $v_{P'}= -Ri_{P"}.$
But then $\breve{\K}_{P'P"},$ as defined, would be the solution of $v_{P'}= Ri_{P"}.$
\\
This change is required if we wish to use the port behaviour as the device
characteristic of another multiport  which is on  some graph $\G_{PQ}$
with $P$ as the set of internal edges and $Q$ as the set of ports.
\\
Observe that  a multiport made up of passive (power absorbed always nonnegative) devices will  have passive
port characteristics which would not be possible without the change of sign.
\end{remark}



\section{Matched and Skewed Composition}
\label{sec:matched}
In this section we introduce an operation between collections of vectors
motivated by the connection of multiport behaviours across ports.

Let $\Ksp,\Kpq,$ be collections of vectors respectively on $S\uplus P,P\uplus Q,$ with $S,P,Q,$ being pairwise disjoint.

The \nw{matched composition} $\mnw{\mathcal{K}_{SP} \leftrightarrow \mathcal{K}_{PQ}}$ is on $S\uplus Q$ and is defined as follows:
\begin{align*}
 \mathcal{K}_{SP} \leftrightarrow \mathcal{K}_{PQ} 
  &\equivd \{
                 (f_S,g_Q): (f_S,h_P)\in \Ksp, 
(h_P,g_Q)\in \Kpq\}.
\end{align*}
Matched composition is referred to as matched sum in \cite{HNarayanan1997}.


The \nw{skewed composition} $\mnw{\mathcal{K}_{SP} \leftrightarrow \mathcal{K}_{PQ}}$ is on $S\uplus Q$ and is defined as follows:
\begin{align*}
 \mathcal{K}_{SP} \rightleftharpoons \mathcal{K}_{PQ} 
  &\equivd \{
                 (f_S,g_Q): (f_S,h_P)\in \Ksp, 
(-h_P,g_Q)\in \Kpq\}.
\ \mbox{Note that}
\end{align*}
$$\mathcal{K}_{SP} \rightleftharpoons \mathcal{K}_{PQ}\ \ \ =\ \ \ \mathcal{K}_{SP} \lrar \mathcal{K}_{(-P)Q}.$$
When $S$, $Y$ are disjoint, both the matched and skewed composition of
$\K_S,\K_Y,$ correspond to the direct sum $\K_S\oplus \K_Y$.
It is clear from the definition of matched composition and that of restriction
and contraction, that
$ \Ks\circ (S-T) =\Ks\lrar \F_T, \Ks\times (S-T)=\Ks\lrar \0_T, T\subseteq S.$
When $\mathcal{K}_{SP}$, $\mathcal{K}_P$ are vector spaces, observe that $(\mathcal{K}_{SP}\leftrightarrow \mathcal{K}_P) = (\mathcal{K}_{SP}\rightleftharpoons \mathcal{K}_P).$
When $S,P,Z,$ are pairwise disjoint,
we have\\
$(\mathcal{K}_{SPZ} \leftrightarrow  \mathcal{K}_{S}) \leftrightarrow  \mathcal{K}_{P} = (\mathcal{K}_{SPZ} \leftrightarrow  \mathcal{K}_{P}) \leftrightarrow  \mathcal{K}_{S} = \mathcal{K}_{SPZ} \leftrightarrow  (\mathcal{K}_{S} \oplus  \mathcal{K}_{P}).$
When $ \mathcal{K}_{S}\equivd \0_S, \mathcal{K}_{P}\equivd \mathcal{K}_{SPZ}\circ P,$ the above reduces to
$ \mathcal{K}_{SPZ}\times PZ\circ Z\equaln \mathcal{K}_{SPZ}\circ SZ\times Z.$
Such an object is called a \nw{minor} of $\mathcal{K}_{SPZ}.$
In the special case where $Y\subseteq S$, the matched composition $\Ks\lrar \K_Y,$ is called the
\nw{generalized minor} of $\mathcal{K}_S $
with respect to $\mathcal{K}_Y$.

The following result is immediate from the definition of matched
and skewed composition.
\begin{theorem}
\label{thm:matchedprop}
Let $\Ksp,\Kpq$ be collections of vectors on $S\uplus P,  P\uplus Q,$ respectively.
Then,
$$\Ksp\lrar \Kpq= (\Ksp\cap \Kpq)\circ SQ;\ \ 
\Ksp\rightleftharpoons \Kpq\ \ =\ \  (\Ksp\cap (\Kpq)_{(-P)Q})\circ SQ;$$
$$\Ksp\lrar \Kpq = (\Ksp+(\Kpq)_{(-P)Q})\times SQ;\ \ 
\Ksp\rightleftharpoons \Kpq\ \ =\ \  (\Ksp+ \Kpq)\times SQ.$$
\end{theorem}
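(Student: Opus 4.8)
The plan is to prove the two ``unmixed'' identities first, directly by unwinding the extended definitions of intersection, sum, restriction and contraction, and then to obtain the remaining two identities by the sign-flip device $\Ksp \rightleftharpoons \Kpq = \Ksp \lrar (\Kpq)_{(-P)Q}$ recorded immediately after the definition of skewed composition. The two identities I would treat as fundamental are $\Ksp \lrar \Kpq = (\Ksp \cap \Kpq)\circ SQ$ and $\Ksp \rightleftharpoons \Kpq = (\Ksp + \Kpq)\times SQ$.

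First I would establish the matched-composition/intersection identity. By the extended definition of intersection, a vector $(f_S,h_P,g_Q)$ lies in $\Ksp\cap\Kpq$ exactly when $(f_S,h_P)\in\Ksp$ and $(h_P,g_Q)\in\Kpq$ with the \emph{same} $h_P$; restricting to $SQ$ then retains $(f_S,g_Q)$ precisely when such a common $h_P$ exists, which is verbatim the defining condition of $\Ksp\lrar\Kpq$. Dually, I would establish the skewed-composition/sum identity: a generic element of $\Ksp+\Kpq$ has the form $(f_S,\,f_P+g_P,\,g_Q)$ with $(f_S,f_P)\in\Ksp$ and $(g_P,g_Q)\in\Kpq$, and the contraction $\times SQ$ keeps those with vanishing $P$-block, i.e. $g_P=-f_P$; writing $h_P\equivd f_P$, this is exactly the requirement $(f_S,h_P)\in\Ksp$, $(-h_P,g_Q)\in\Kpq$ defining $\Ksp\rightleftharpoons\Kpq$. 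Each of these is a two-way containment that becomes transparent once the shared-$P$ (respectively cancelling-$P$) bookkeeping is tracked.

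For the two mixed identities I would bootstrap. Applying the first identity with $\Kpq$ replaced by $(\Kpq)_{(-P)Q}$ and invoking $\Ksp\rightleftharpoons\Kpq=\Ksp\lrar(\Kpq)_{(-P)Q}$ yields $\Ksp\rightleftharpoons\Kpq=(\Ksp\cap(\Kpq)_{(-P)Q})\circ SQ$. Symmetrically, since the sign flip is an involution, $(\mathcal{K}_{X(-Y)})_{X(-Y)}=\mathcal{K}_{XY}$, the same relation read the other way gives $\Ksp\lrar\Kpq=\Ksp\rightleftharpoons(\Kpq)_{(-P)Q}$; applying the sum-contraction identity with $\Kpq$ replaced by $(\Kpq)_{(-P)Q}$ then produces $\Ksp\lrar\Kpq=(\Ksp+(\Kpq)_{(-P)Q})\times SQ$.

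The only point demanding care — and the closest thing to an obstacle — is keeping straight that the four operations here are the \emph{extended} ones, acting on collections regarded as living on $S\uplus P\uplus Q$, so that ``$\circ SQ$'' and ``$\times SQ$'' eliminate the $P$-block in the intended way, and that $(\cdot)_{(-P)Q}$ flips the sign on the $P$-coordinate and is self-inverse. Beyond this routine bookkeeping there is no genuine difficulty, which is exactly why the statement is flagged as immediate from the definitions.
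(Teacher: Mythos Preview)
Your proposal is correct and is exactly the verification-from-definitions argument that the paper alludes to when it declares the result ``immediate from the definition of matched and skewed composition.'' The paper gives no explicit proof beyond that remark, so your unwinding of the extended intersection/sum and the sign-flip bootstrap is precisely the intended content.
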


\subsection{Multiport connection and matched and skewed composition}
\label{subsec:connectioncomposition}
In this subsection we illustrate the notions of matched and skewed 
composition through the operation of connecting multiports.

From the definition of port behaviour of a multiport $\N_P$
 and that of the `$\lrar $' operation, we have the following lemma.
\begin{lemma}
\label{lem:behaviourlrar}
Let the set of solutions of the multiport $\mathcal{N}_P$  restricted to
$P'\uplus P"$ be\\
$(\V_{S'P'}\cap (\V^{\perp}_{S'P'})_{S"P"}\cap \K_{S'S"})\circ P'P".$
 Then the  {port behaviour} $\breve{\K}_{P'P"}$ at $P,$  of $\N_P,$ is given
by\\
$\breve{\K}_{P'P"}\equivd ((\V_{S'P'}\cap (\V^{\perp}_{S'P'})_{S"P"}\cap \K_{S'S"})\circ P'P")_{P'(-P")}$\\$= (\V_{S'P'}\cap (\V^{\perp}_{S'P'})_{S"(-P")}\cap \K_{S'S"})\circ P'P"$
$=
(\V_{S'P'}\oplus (\V^{\perp}_{S'P'})_{S"(-P")})\lrar  \K_{S'S"}.$
\end{lemma}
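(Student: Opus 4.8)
The statement consists of two equalities beyond the defining first line, and the plan is to establish each in turn by pushing the sign‑reversal on $P"$ inward and then invoking Theorem~\ref{thm:matchedprop}.

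First I would treat the passage from the first line to the second, i.e.\ the claim that reversing the sign on $P"$ may be carried out before rather than after the restriction to $P'P"$. Writing $\mathcal{M}\equivd \V_{S'P'}\cap (\V^{\perp}_{S'P'})_{S"P"}\cap \K_{S'S"}$, I would note that under the extended definition of intersection the factor $\V_{S'P'}$ is unconstrained on $S"P"$ and the factor $\K_{S'S"}$ is unconstrained on $P'P"$; hence among the three intersected spaces only $(\V^{\perp}_{S'P'})_{S"P"}$ constrains the $P"$ coordinates. Consequently the sign‑reversal on the $P"$ entries carries $\mathcal{M}$ to $\V_{S'P'}\cap (\V^{\perp}_{S'P'})_{S"(-P")}\cap \K_{S'S"}$, leaving the other two factors fixed. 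Since $P"$ is retained by the restriction $\circ\, P'P"$, negation of the $P"$ entries commutes with this restriction (both are coordinatewise operations, and the discarded $S"$ coordinates play no role); this is exactly the content of the operation written $(\,\cdot\,)_{P'(-P")}$. Thus $(\mathcal{M}\circ P'P")_{P'(-P")}=(\V_{S'P'}\cap (\V^{\perp}_{S'P'})_{S"(-P")}\cap \K_{S'S"})\circ P'P"$, the second line.

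For the passage to the third line I would first rewrite the intersection $\V_{S'P'}\cap (\V^{\perp}_{S'P'})_{S"(-P")}$ as the direct sum $\V_{S'P'}\oplus (\V^{\perp}_{S'P'})_{S"(-P")}$; this is legitimate because the two spaces have disjoint supports $S'P'$ and $S"P"$, and it is the same rewriting already used for the set of solutions in Subsection~\ref{subsec:networks}. The second line then reads $\bigl((\V_{S'P'}\oplus (\V^{\perp}_{S'P'})_{S"(-P")})\cap \K_{S'S"}\bigr)\circ P'P"$, with $\K_{S'S"}$ understood in its extended form $\K_{S'S"}\oplus \F_{P'P"}$. Finally I would apply Theorem~\ref{thm:matchedprop} in the form $\Ksp\lrar\Kpq=(\Ksp\cap \Kpq)\circ SQ$, taking the linking index set (the ``$P$'' of that theorem) to be $S'\uplus S"$, the free part of the first factor (its ``$S$'') to be $P'\uplus P"$, and the free part of the device characteristic (its ``$Q$'') to be empty. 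With $Q=\emptyset$ one has $SQ=P'P"$, so the theorem identifies the displayed expression with $(\V_{S'P'}\oplus (\V^{\perp}_{S'P'})_{S"(-P")})\lrar \K_{S'S"}$, the third line.

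The bookkeeping of index sets in this last step is where I expect the only real care to be needed: one must check that the common set over which $\lrar$ links is exactly $S'\uplus S"$ (so that the ``$h_P$'' in the definition of matched composition ranges over the shared voltage and current variables on $S$), that the surviving output set is $P'\uplus P"$, and that viewing $\K_{S'S"}$ as a collection on $(S'\uplus S")\uplus\emptyset$ makes $Q$ empty and collapses $\circ\, SQ$ to $\circ\, P'P"$. Everything else is a direct unwinding of the extended definitions of intersection, restriction and the sign‑reversal subscript, and requires no computation.
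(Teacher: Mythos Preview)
Your proof is correct and follows essentially the same approach as the paper, which simply states that the lemma follows ``from the definition of port behaviour of a multiport $\N_P$ and that of the `$\lrar$' operation.'' You have merely unpacked these definitions carefully --- pushing the sign reversal on $P"$ through the intersection and restriction, then invoking Theorem~\ref{thm:matchedprop} with $Q=\emptyset$ --- which is exactly the content the paper regards as immediate.
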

\begin{figure}
\begin{center}
 \includegraphics[width=3.65in]{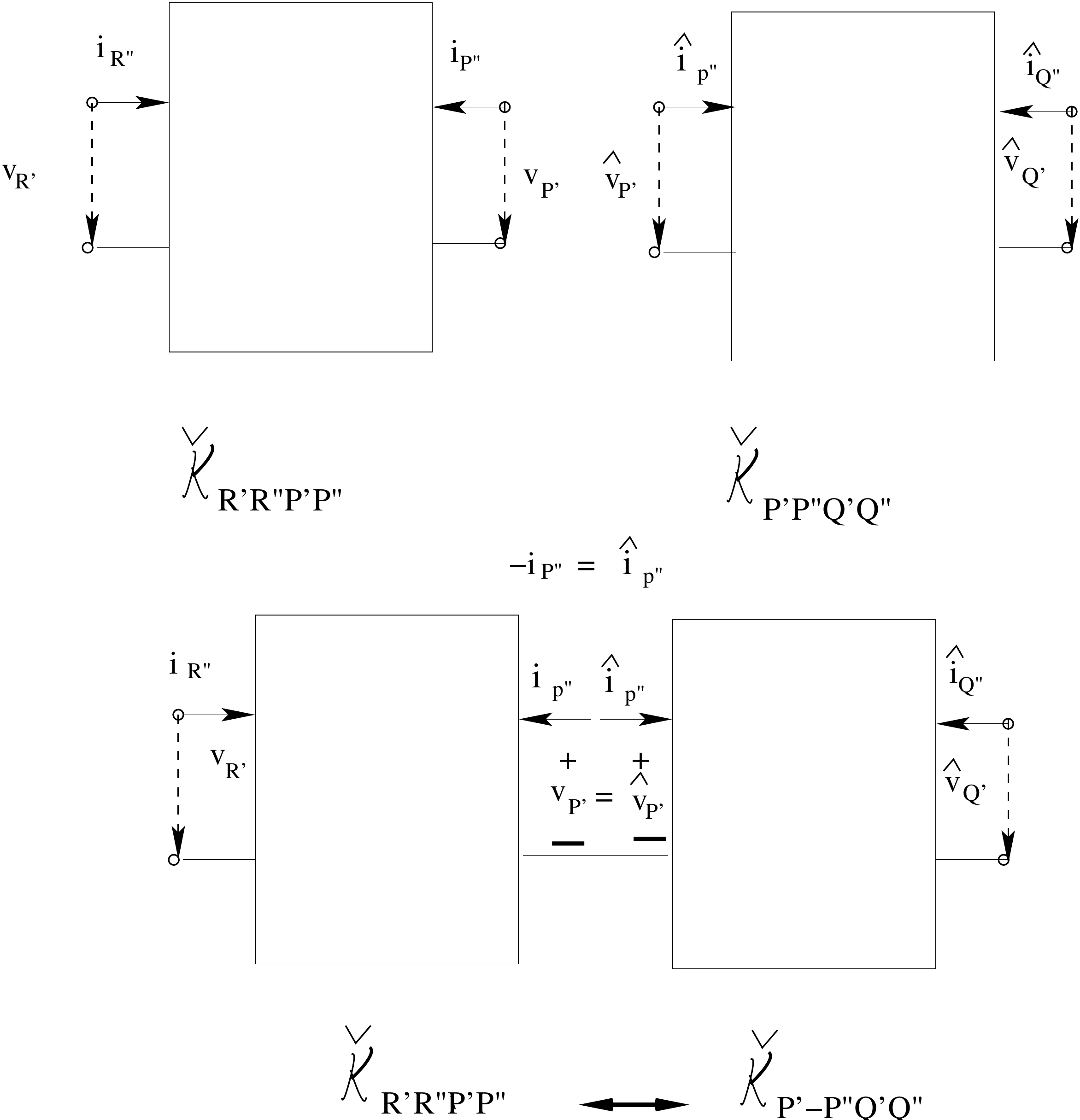}
 \caption{Connection of multiport behaviours and matched ($P',P'$) and skewed ($P",-P"$) composition
}
\label{fig:multiportfinal}
\end{center}
\end{figure}

\begin{example}
\label{eg:connection}
1. Consider the multiport behaviours  $\breve{\K}_{R'R"P'P"}, \breve{\K}_{P'P"Q'Q"}$ in Figure \ref{fig:multiportfinal}.
Connecting them together at the ports $P$  amounts to
making the voltages at $P,$ i.e., $v_{P'},\hat{v}_{P'}$  same in both the multiports\\ $\breve{\K}_{R'R"P'P"}, \breve{\K}_{P'P"Q'Q"}$ and the currents $i_{P"}$ of $\breve{\K}_{R'R"P'P"}$ equal to negative of 
the current $\hat{i}_{P"}$ of $\breve{\K}_{P'P"Q'Q"}.$
The multiport behaviour that results can be denoted by
$\breve{\K}_{R'R"Q'Q"}\equivd \breve{\K}_{R'R"P'P"} \lrar \breve{\K}_{P'(-P")Q'Q"}.$
\end{example}
We say  multiport behaviours $\K_{R'R"P'P"}, \K_{P'P"Q'Q"}$ are \nw{connected across} ports 
$P$ to \nw{yield} the multiport  behaviour $\K_{P'P"Q'Q"}$
iff $\K_{R'R"Q'Q"}= \K_{R'R"P'P"} \lrar \K_{P'(-P")Q'Q"}.$\\
The following lemma, which is immediate from the definition of connection 
of behaviours, gives an equivalent way of looking at  the operation.
\begin{lemma}
\label{lem:connectionidealtransformer}
Let $\tilde{P}$ be a copy of $P,$
with $P\equivd\{e_1, \cdots , e_k\}, \tilde{P}\equivd \{\tilde{e}_1, \cdots , \tilde{e}_k\}, e_i,\tilde{e}_i$ being copies of each other.\\
Further, let $R',R",P',P",\tilde{P}',\tilde{P}",Q',Q",$ be pairwise disjoint.\\
Let $ \K_{\tilde{P}'\tilde{P}"Q'Q"}\equivd (\K_{P'P"Q'Q"})_{\tilde{P}'\tilde{P}"Q'Q"}.$\\
Let $\T^{P\tilde{P}}$ denote the solution space of the equations
$v_{\tilde{e}_i'}= v_{e_i'}; i_{\tilde{e}_i"}= -i_{e_i"}, i=1, \cdots, k.$\\
Then $\K_{R'R"Q'Q"}\equivd \K_{R'R"P'P"} \lrar \K_{P'(-P")Q'Q"}=[\K_{R'R"P'P"} \oplus  \K_{\tilde{P}'\tilde{P}"Q'Q"}]\lrar \T^{P\tilde{P}}.$
\end{lemma}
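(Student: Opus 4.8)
The plan is to prove the identity by reducing both compositions to one and the same explicit membership condition phrased in terms of $\K_{R'R''P'P''}$ and $\K_{P'P''Q'Q''}$, obtained by mechanically unfolding the definitions of matched composition, direct sum, the copy operator, and the sign-flip notation $\K_{X(-Y)}$. Throughout, a generic candidate element is a pair $(f_{R'R''},g_{Q'Q''})$ on $R'\uplus R''\uplus Q'\uplus Q''$, and the entire argument consists in showing that its membership in the left-hand side and its membership in the right-hand side are each equivalent to the \emph{same} existential statement over an auxiliary vector $h$ on $P'\uplus P''$.

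For the left-hand side I would unfold matched composition across $P'\uplus P''$: one has $(f_{R'R''},g_{Q'Q''})\in\K_{R'R''P'P''}\lrar\K_{P'(-P'')Q'Q''}$ iff there is some $(h_{P'},h_{P''})$ with $(f_{R'R''},h_{P'},h_{P''})\in\K_{R'R''P'P''}$ and $(h_{P'},h_{P''},g_{Q'Q''})\in\K_{P'(-P'')Q'Q''}$. Applying the definition of $\K_{X(-Y)}$ to the second membership, which negates only the $P''$ block, turns it into $(h_{P'},-h_{P''},g_{Q'Q''})\in\K_{P'P''Q'Q''}$. This is the target condition.

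For the right-hand side I would again unfold matched composition, now across the larger common index set $P'\uplus P''\uplus\tilde{P}'\uplus\tilde{P}''$: membership holds iff there is a vector $h$ on that set lying in $\T^{P\tilde{P}}$ such that $(f_{R'R''},g_{Q'Q''})$ together with $h$ lies in $\K_{R'R''P'P''}\oplus\K_{\tilde{P}'\tilde{P}''Q'Q''}$. By the definition of direct sum the latter splits into $(f_{R'R''},h_{P'},h_{P''})\in\K_{R'R''P'P''}$ and $(h_{\tilde{P}'},h_{\tilde{P}''},g_{Q'Q''})\in\K_{\tilde{P}'\tilde{P}''Q'Q''}$. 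Next, $h\in\T^{P\tilde{P}}$ means precisely $h_{\tilde{e}_i'}=h_{e_i'}$ and $h_{\tilde{e}_i''}=-h_{e_i''}$ for all $i$, so $h_{\tilde{P}'}$ is the copy of $h_{P'}$ and $h_{\tilde{P}''}$ is the negated copy of $h_{P''}$; these equations render the tilde-components of $h$ redundant, leaving $(h_{P'},h_{P''})$ as the only free data, exactly as on the left. Finally, using $\K_{\tilde{P}'\tilde{P}''Q'Q''}\equivd(\K_{P'P''Q'Q''})_{\tilde{P}'\tilde{P}''Q'Q''}$ to undo the relabelling, the second membership becomes $(h_{P'},-h_{P''},g_{Q'Q''})\in\K_{P'P''Q'Q''}$. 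The resulting condition is word-for-word the one obtained for the left-hand side, whence the two collections are equal.

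The substance here is bookkeeping rather than anything deep, and the one place that requires care — the main obstacle — is matching the signs. The transformer relations are deliberately asymmetric: the voltage ports are identified with no sign change ($v_{\tilde{e}_i'}=v_{e_i'}$) while the current ports are identified with a sign change ($i_{\tilde{e}_i''}=-i_{e_i''}$), and I must check that, after undoing the copy in $\K_{\tilde{P}'\tilde{P}''Q'Q''}$, this reproduces exactly the partial negation $(-P'')$ applied to the current block alone on the left, and not, say, a negation of the whole of $P'\uplus P''$. Keeping the copy bijection $e_i\leftrightarrow\tilde{e}_i$ consistent between the two sign conventions is the only genuine checkpoint; once it is in place the two membership descriptions coincide termwise.
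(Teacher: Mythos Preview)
Your proposal is correct and is precisely the unfolding-of-definitions argument the paper has in mind; the paper itself gives no explicit proof, stating only that the lemma is ``immediate from the definition of connection of behaviours.'' Your careful tracking of the sign on the $P''$ block versus the unsigned $P'$ block is exactly the bookkeeping needed, and your reduction of both sides to the common condition $(f_{R'R''},h_{P'},h_{P''})\in\K_{R'R''P'P''}$ and $(h_{P'},-h_{P''},g_{Q'Q''})\in\K_{P'P''Q'Q''}$ is clean and complete.
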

In line with the idea of connection through $\T^{P\tilde{P}}$ in the above lemma, we can define connection 
through an arbitrary device $\K^{P\tilde{P}}_{P'\tilde{P}'P"\tilde{P}"}.$\\
Let the multiports $\N_{RP},{\N}_{\tilde{P}Q}$ be on graphs $\G_{RSP},\G_{\tilde{P}MQ}$ respectively, with the primed and double primed sets obtained from 
$R,S,P,\tilde{P},M,Q,$ being pairwise disjoint,
and let them have device characteristics $\K^{S},{\K}^{M}$ respectively.
Let $\K^{P\tilde{P}}$ denote a collection of vectors $\K^{P\tilde{P}}_{P'\tilde{P}'P"\tilde{P}"}.$\\
The multiport $\mnw{[\N_{RP}\oplus {\N}_{\tilde{P}Q}]\cap \K^{P\tilde{P}}},$ 
with ports $R\uplus Q$ obtained by \nw{connecting $\N_{RP},{\N}_{\tilde{P}Q}$
through $\K^{P\tilde{P}}$},
is on graph $\G_{RSP}\oplus\G_{\tilde{P}MQ}$ with device characteristic
$\K^{S}\oplus {\K}^{M}\oplus  \K^{P\tilde{P}}$ (see Figure \ref{fig:networkmultiportconnection}).\\
\begin{figure}
\begin{center}
 \includegraphics[width=4in]{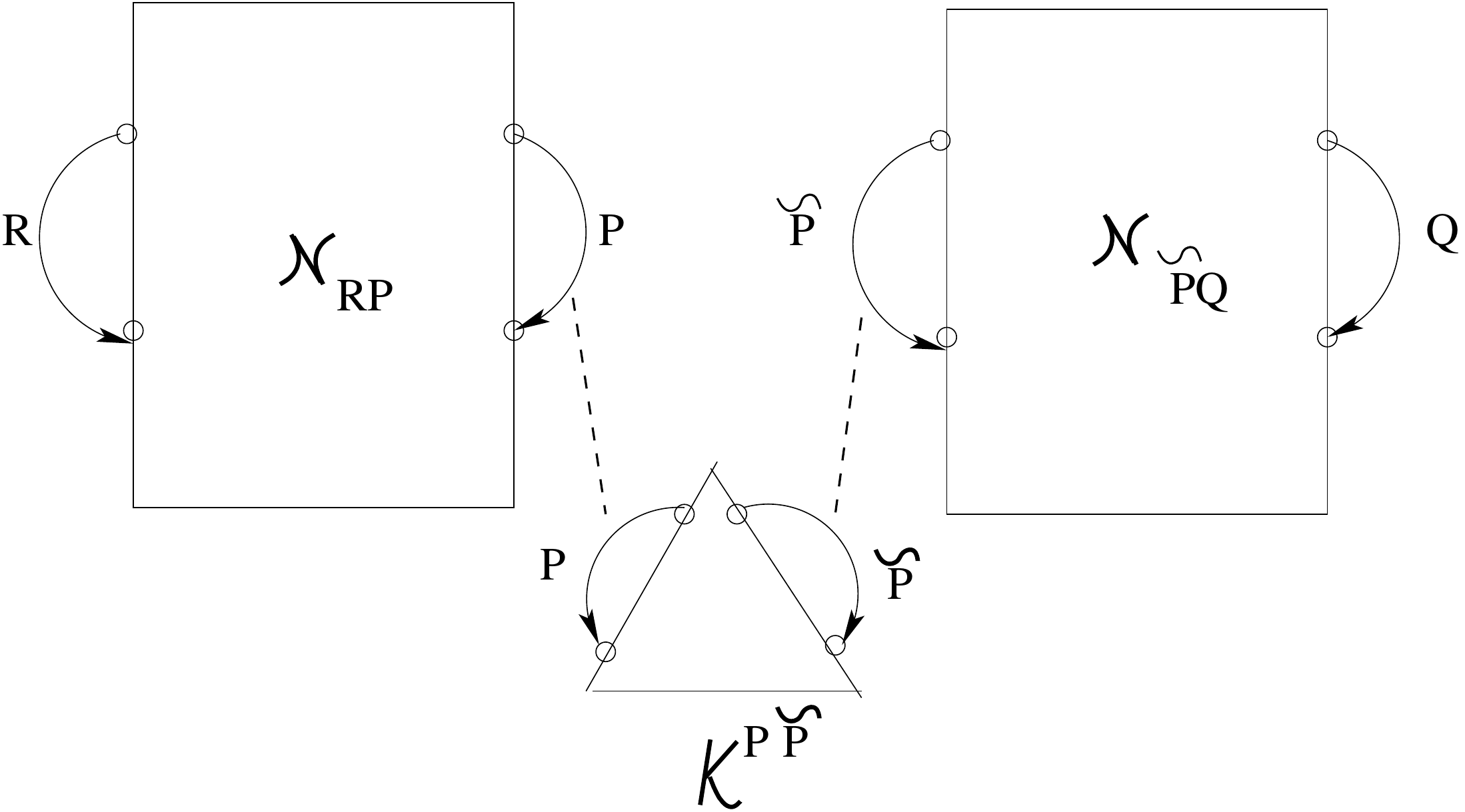}
 \caption{Multiport network $\mnw{\N_{RQ}\equivd [\N_{RP}\oplus {\N}_{\tilde{P}Q}]\cap \K^{P\tilde{P}}}$
}
\label{fig:networkmultiportconnection}
\end{center}
\end{figure}
When $R,Q$ are void, $[\N_{RP}\oplus {\N}_{\tilde{P}Q}]\cap \K^{P\tilde{P}}$ would reduce to $[\N_{P}\oplus {\N}_{\tilde{P}}]\cap \K^{P\tilde{P}}$  and would be a network without ports.
In this case we say \nw{the multiport $\N_{P}$ is terminated by ${\N}_{\tilde{P}}$ through $\K^{P\tilde{P}}.$}\\
This network is on  graph $\G_{SP}\oplus \G_{\tilde{P}M}$ with device
characteristic $\K^S_{S'S"}
\oplus  {\K}^M_{M'M"}\oplus \K^{P\tilde{P}}_{P'P"\tilde{P}'\tilde{P}"}.$

For ready reference, we summarize facts about the multiport ${[\N_{RP}\oplus {\N}_{\tilde{P}Q}]\cap \K^{P\tilde{P}}},$ and the network $[\N_{P}\oplus {\N}_{\tilde{P}}]\cap \K^{P\tilde{P}}$ in the theorem below. The result is essentially 
a restatement of the definition of connection of multiports.
%
%
%
%
%
\begin{theorem}
\begin{enumerate}
\item The set of solutions of $\N_{RQ}\equivd [\N_{RP}\oplus {\N}_{\tilde{P}Q}]\cap \K^{P\tilde{P}}$ is \\$[[\V_{S'R'P'}\oplus (\V^{\perp}_{S'R'P'})_{S"R"P"}]\cap \K^S_{S'S"}]
\oplus [[\V_{M'\tilde{P}'Q'}\oplus (\V^{\perp}_{M'\tilde{P}'Q'})_{M"\tilde{P}"Q"}]\cap \tilde{\K}^M_{M'M"}]\cap \K^{P\tilde{P}}_{P'P"\tilde{P}'\tilde{P}"}$
\\$= [\V_{S'R'P'}\oplus (\V^{\perp}_{S'R'P'})_{S"R"P"}\oplus \V_{M'\tilde{P}'Q'}\oplus (\V^{\perp}_{M'\tilde{P}'Q'})_{M"\tilde{P}"Q"}]\cap [\K^S_{S'S"}\oplus
\tilde{\K}^M_{M'M"}\oplus \K^{P\tilde{P}}_{P'P"\tilde{P}'\tilde{P}"}],$
where $\V_{S'R'P'}\equivd (\V^v(\G_{SRP}))_{S'R'P'}, \V_{M'\tilde{P}'Q'}\equivd
(\V^v(\G_{M\tilde{P}Q}))_{M'\tilde{P}'Q'}.$
\item Let $\breve{\K}_{R'P'R"P"}, \breve{\K}_{\tilde{P}'Q'\tilde{P}"Q"}$ 
be the port behaviours of $\N_{RP},{\N}_{\tilde{P}Q}$ respectively. Then the port behaviour 
of $\N_{RQ}$ is given by $\breve{\K}_{R'Q'R"Q"}=[\breve{\K}_{R'P'R"P"}\oplus \breve{\K}_{\tilde{P}'Q'\tilde{P}"Q"}]\lrar (\K^{P\tilde{P}}_{P'P"\tilde{P}'\tilde{P}"})_{P'(-P")\tilde{P}'(-\tilde{P}")}.$
\item If $\N_{RP},{\N}_{\tilde{P}Q}$ are linear multiports, so would 
the multiport $\N_{RQ}$
be, with affine set of solutions and affine port behaviour.
\item If $\N_P,{\N}_{\tilde{P}}$ are multiports 
then  the network $[\N_{P}\oplus {\N}_{\tilde{P}}]\cap \K^{P\tilde{P}}$ has, as the set of solutions,\\ 
$[[\V_{S'P'}\oplus (\V^{\perp}_{S'P'})_{S"P"}]\cap \K^S_{S'S"}]
\oplus [\V_{M'\tilde{P}'}\oplus (\V^{\perp}_{M'\tilde{P}'Q'})_{M"\tilde{P}"}]\cap \tilde{\K}^M_{M'M"}]\cap \K^{P\tilde{P}}_{P'P"\tilde{P}'\tilde{P}"}$\\
$=[\V_{S'P'}\oplus (\V^{\perp}_{S'P'})_{S"P"}\oplus \V_{M'\tilde{P}'}\oplus (\V^{\perp}_{M'\tilde{P}'})_{M"\tilde{P}"}] \cap [\K^S_{S'S"}
\oplus  {\K}^M_{M'M"}\oplus \K^{P\tilde{P}}_{P'P"\tilde{P}'\tilde{P}"}],$\\
where $\V_{S'P'}\equivd (\V^v(\G_{SP}))_{S'P'}, \V_{\tilde{P}'Q'}\equivd
(\V^v(\G_{\tilde{P}Q}))_{\tilde{P}'Q'}.$

\end{enumerate}
\end{theorem}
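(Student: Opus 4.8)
The plan is to treat all four parts as essentially routine consequences of unwinding the definitions of the \emph{solution set} and the \emph{port behaviour} of a network, supported by two structural facts: that the topological (KVL/KCL) spaces of a disjoint union of graphs split as direct sums, and that matched composition is associative in the sense recorded in Section~\ref{sec:matched}.

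For Part~1 I would begin from the definition of the solution set of $\N_{RQ}$, whose graph is $\G_{RSP}\oplus\G_{\tilde{P}MQ}$ and whose device characteristic is $\K^S_{S'S"}\oplus\tilde{\K}^M_{M'M"}\oplus\K^{P\tilde{P}}_{P'P"\tilde{P}'\tilde{P}"}$. Since the loops of a disjoint union are exactly the loops of its components, $\V^v(\G_{RSP}\oplus\G_{\tilde{P}MQ})=\V_{S'R'P'}\oplus\V_{M'\tilde{P}'Q'}$; by Tellegen's Theorem (Theorem~\ref{thm:tellegen}) together with the identity $(\K_X\oplus\K_Y)^{\perp}=\K_X^{\perp}\oplus\K_Y^{\perp}$ over disjoint supports, the KCL space is the direct sum of the two double-primed complementary-orthogonal blocks. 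The solution set is then $[\text{topological}]\cap[\text{device}]$, which is precisely the second displayed expression. To recover the first, grouped expression I would use that the extended $\cap$ and $\oplus$ distribute over disjoint supports: writing $T_1,T_2$ for the topological blocks of the two sub-networks, one has $(T_1\cap\K^S)\oplus(T_2\cap\tilde{\K}^M)=(T_1\oplus T_2)\cap(\K^S\oplus\tilde{\K}^M)$, while intersection over disjoint supports coincides with direct sum, so $(\K^S\oplus\tilde{\K}^M)\cap\K^{P\tilde{P}}=\K^S\oplus\tilde{\K}^M\oplus\K^{P\tilde{P}}$. Associativity of intersection then reconciles the two groupings.

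For Part~2 I would invoke Lemma~\ref{lem:behaviourlrar} to express the port behaviour of $\N_{RQ}$ at its ports $R\uplus Q$ as a single matched composition that eliminates all internal variables (the primed and double-primed copies of $S,M,P,\tilde{P}$) while retaining $R',R",Q',Q"$ with the port currents negated. I would then split this elimination into stages using the associativity/decomposition identity for $\lrar$ from Section~\ref{sec:matched}: eliminate the $S$-variables first (yielding $\breve{\K}_{R'P'R"P"}$, again by Lemma~\ref{lem:behaviourlrar}), then the $M$-variables (yielding $\breve{\K}_{\tilde{P}'Q'\tilde{P}"Q"}$), and finally couple the two residual behaviours through $\K^{P\tilde{P}}$ while eliminating $P,\tilde{P}$. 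The coupling term acquires the $(-P")$, $(-\tilde{P}")$ reversals because connection across ports is, by definition and by Lemma~\ref{lem:connectionidealtransformer}, a matched composition with the current copies negated; Theorem~\ref{thm:matchedprop} then lets me pass freely between the $\cap/\circ$ and $+/\times$ forms of these compositions as convenient.

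Parts~3 and~4 are immediate. Part~3 holds because affine spaces are closed under direct sum, under intersection with the linear topological space, and under restriction and sign-flip relabelling, so that Part~1 gives an affine solution set and the formula of Part~2 gives an affine port behaviour whenever it is nonvoid; Part~4 is simply Part~1 specialized to void $R,Q$, where the ambient graph collapses to $\G_{SP}\oplus\G_{\tilde{P}M}$ and the port summands drop out. I expect the only genuine obstacle to be the bookkeeping in Part~2: tracking which primed and double-primed blocks are linked at each stage, and placing the current-sign reversals so that the final coupling device appears exactly as $(\K^{P\tilde{P}}_{P'P"\tilde{P}'\tilde{P}"})_{P'(-P")\tilde{P}'(-\tilde{P}")}$ rather than with the opposite sign. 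The safest guard against sign errors is to keep every elimination phrased as a genuine matched composition via Lemma~\ref{lem:behaviourlrar} and the associativity identity, so that the conventions are inherited from the definitions rather than reintroduced by hand.
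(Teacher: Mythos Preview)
Your proposal is correct and matches the paper's approach. The paper explicitly proves only Part~2, treating Parts~1,~3,~4 as routine restatements of the definitions, exactly as you do; for Part~2 it carries out precisely the sign bookkeeping you anticipate, in three lines: it writes $\breve{\K}_{R'Q'R"Q"}$ as $[[(\breve{\K}_{R'P'R"P"})_{R'P'(-R")(-P")}\oplus (\breve{\K}_{\tilde{P}'Q'\tilde{P}"Q"})_{\tilde{P}'Q'(-\tilde{P}")(-Q")}]\lrar \K^{P\tilde{P}}]_{R'(-R")Q'(-Q")}$, pushes the outer $(-R"),(-Q")$ inward to cancel those inside, and then transfers the residual $(-P"),(-\tilde{P}")$ from the behaviour factors onto the coupling device, yielding the stated formula. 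This is exactly the staged elimination you describe, just executed as concrete subscript manipulations rather than phrased via associativity.
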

\begin{proof}
We prove only part 2.\\
$\breve{\K}_{R'Q'R"Q"}=[[(\breve{\K}_{R'P'R"P"})_{R'P'(-R")(-P")}\oplus (\breve{\K}_{\tilde{P}'Q'\tilde{P}"Q"})_{\tilde{P}'Q'(-\tilde{P}")(-Q")}]\lrar (\K^{P\tilde{P}}_{P'P"\tilde{P}'\tilde{P}"})]_{R'(-R")Q'(-Q")}$
\\$=[(\breve{\K}_{R'P'R"P"})_{R'P'R"(-P")}\oplus (\breve{\K}_{\tilde{P}'Q'\tilde{P}"Q"})_{\tilde{P}'Q'(-\tilde{P}")Q"}]\lrar \K^{P\tilde{P}}_{P'P"\tilde{P}'\tilde{P}"}$
\\$=[\breve{\K}_{R'P'R"P"}\oplus \breve{\K}_{\tilde{P}'Q'\tilde{P}"Q"})]\lrar (\K^{P\tilde{P}}_{P'P"\tilde{P}'\tilde{P}"})_{P'(-P")\tilde{P}(-\tilde{P}")}.$
\end{proof}

\section{Implicit Inversion Theorem and its application to multiports}
\label{sec:iit}
In this section we state one of the two basic results of implicit linear
algebra and present some applications for multiports. \ref{sec:proofsiitidt} gives generalizations and their proofs.

%
%
%
%
%
%
\begin{theorem}
\label{thm:IITlinear}
Consider the equation
$$\Vsp \lrar \Vpq =\Vsq, $$
where $\Vsp, \Vpq, \Vsq$ are vector spaces respectively on $S\uplus P,P\uplus Q,S\uplus Q,$ with $S,P,Q,$ being pairwise disjoint.
We then have the following.
\begin{enumerate}
\item given $\Vsp, \Vsq,$  there exists $ \Vpq, $ satisfying the equation only if  $\Vsp\circ S\supseteq \Vsq \circ S$ and $\Vsp\times S \subseteq \Vsq \times S.$
\item given $\Vsp, \Vsq,$ if  $\Vsp\circ S\supseteq \Vsq \circ S$ and $\Vsp\times S \subseteq \Vsq \times S, $ then $\hat{\V}_{PQ}\equivd \Vsp \lrar \Vsq $ 
satisfies the equation.
\item given $\Vsp, \Vsq,$ assuming that the equation  $\Vsp \lrar \Vpq =\Vsq $
is satisfied by some $\hat{\V}_{PQ}, $ it is unique if  the additional conditions 
 $\Vsp\circ P\supseteq \Vpq \circ P$ and $\Vsp\times P\subseteq \Vpq \times P$
are imposed.
\end{enumerate}
\end{theorem}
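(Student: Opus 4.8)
The plan is to prove all three parts by direct element-chasing from the definition of matched composition, exploiting throughout that every object here is a vector space, so that differences of members are again members. The natural candidate for the solution is $\hat{\V}_{PQ}\equivd \Vsp\lrar\Vsq$ (matched over the common set $S$, yielding a space on $P\uplus Q$), and Part 2 asserts exactly this. The engine of the whole argument is the observation that if $(f_S,h_P)$ and $(f'_S,h_P)$ both lie in $\Vsp$, then their difference $(f_S-f'_S,0_P)$ lies in $\Vsp$, so $f_S-f'_S\in\Vsp\times S$; this ``contraction correction'' lets me repair a mismatched internal variable and is where the contraction hypotheses get used.

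For Part 1 I would argue both inclusions directly. If $f_S\in\Vsq\circ S$, pick $g_Q$ with $(f_S,g_Q)\in\Vsq=\Vsp\lrar\Vpq$; unwinding the definition gives some $h_P$ with $(f_S,h_P)\in\Vsp$, whence $f_S\in\Vsp\circ S$, proving $\Vsp\circ S\supseteq\Vsq\circ S$. For the contraction inclusion, if $f_S\in\Vsp\times S$ then $(f_S,0_P)\in\Vsp$; since $\Vpq$ is a vector space, $(0_P,0_Q)\in\Vpq$, so $(f_S,0_Q)\in\Vsp\lrar\Vpq=\Vsq$ and thus $f_S\in\Vsq\times S$. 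This part is routine.

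Part 2 is where the work lies. With $\hat{\V}_{PQ}=\Vsp\lrar\Vsq$, I must show $\Vsp\lrar\hat{\V}_{PQ}=\Vsq$. The inclusion $\supseteq$ is direct: given $(f_S,g_Q)\in\Vsq$, the hypothesis $\Vsq\circ S\subseteq\Vsp\circ S$ supplies $h_P$ with $(f_S,h_P)\in\Vsp$, and then $(h_P,g_Q)\in\Vsp\lrar\Vsq=\hat{\V}_{PQ}$ by definition, so $(f_S,g_Q)\in\Vsp\lrar\hat{\V}_{PQ}$. The inclusion $\subseteq$ is the main obstacle: a member of $\Vsp\lrar\hat{\V}_{PQ}$ comes with an intermediate $h_P$ satisfying $(f_S,h_P)\in\Vsp$ and $(h_P,g_Q)\in\hat{\V}_{PQ}$, but the latter only yields some \emph{possibly different} $f'_S$ with $(f'_S,h_P)\in\Vsp$ and $(f'_S,g_Q)\in\Vsq$. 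Here I apply the contraction correction: $(f_S-f'_S,0_P)\in\Vsp$ gives $f_S-f'_S\in\Vsp\times S\subseteq\Vsq\times S$, so $(f_S-f'_S,0_Q)\in\Vsq$, and adding $(f'_S,g_Q)\in\Vsq$ yields $(f_S,g_Q)\in\Vsq$.

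Part 3 follows the same template with $P$ playing the role of $S$, since matched composition is symmetric in its two arguments. Assuming $\Vpq$ satisfies the equation, I would show $\Vpq=\hat{\V}_{PQ}$. For $\Vpq\subseteq\hat{\V}_{PQ}$: given $(h_P,g_Q)\in\Vpq$, the hypothesis $\Vsp\circ P\supseteq\Vpq\circ P$ supplies $f_S$ with $(f_S,h_P)\in\Vsp$, whence $(f_S,g_Q)\in\Vsp\lrar\Vpq=\Vsq$ and so $(h_P,g_Q)\in\Vsp\lrar\Vsq=\hat{\V}_{PQ}$. For $\hat{\V}_{PQ}\subseteq\Vpq$: a member $(h_P,g_Q)$ yields $f_S$ with $(f_S,h_P)\in\Vsp$ and $(f_S,g_Q)\in\Vsq$; expanding $(f_S,g_Q)\in\Vsp\lrar\Vpq$ gives $h'_P$ with $(f_S,h'_P)\in\Vsp$ and $(h'_P,g_Q)\in\Vpq$, and the contraction correction $(0_S,h_P-h'_P)\in\Vsp$ forces $h_P-h'_P\in\Vsp\times P\subseteq\Vpq\times P$, so $(h_P-h'_P,0_Q)\in\Vpq$ and adding $(h'_P,g_Q)$ gives $(h_P,g_Q)\in\Vpq$. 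Thus the two restriction/contraction hypotheses pin down $\Vpq$ uniquely as $\Vsp\lrar\Vsq$.
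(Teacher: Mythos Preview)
Your proof is correct and follows essentially the same approach as the paper's: the same candidate $\hat{\V}_{PQ}=\Vsp\lrar\Vsq$, the same double-inclusion element-chase, and the same ``contraction correction'' (subtracting two vectors sharing a $P$- or $S$-component to land in $\Vsp\times S$ or $\Vsp\times P$). The only cosmetic difference is that for Part~3 the paper observes that the argument of Part~2 applies verbatim with the roles of $S$ and $P$ interchanged (yielding $\Vsp\lrar(\Vsp\lrar\Vpq)=\Vpq$, hence $\Vpq=\Vsp\lrar\Vsq$), whereas you spell out that element-chase explicitly; the content is identical.
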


The following is  a useful affine space version of Theorem \ref{thm:IITlinear}.
\begin{theorem}
\label{thm:IIT2}
Let $\Asp,\Apq$ be affine spaces on $S\uplus P,P\uplus Q,$ where $S,P,Q$ 
are pairwise disjoint sets. Let $\Vsp,\Vpq$ respectively, be the vector 
space translates of $\Asp,\Apq.$ Let $\Asp\lrar \Apq$ be \nw{nonvoid} and let\\
$\alpha_{SQ}\in \Asp\lrar \Apq.$ Then,\\
1. $\Asp\lrar \Apq = \alpha_{SQ}+(\Vsp\lrar \Vpq).$\\ 
2. $\Apq = \Asp\lrar (\Asp\lrar \Apq)$ iff
$\Vsp\circ P\supseteq \Vpq \circ P$ and $\Vsp\times P\subseteq \Vpq \times P.$
\end{theorem}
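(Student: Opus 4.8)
The plan is to prove Part 1 by a direct base-point computation and then derive Part 2 by reducing the affine identity to a vector-space identity and invoking the linear Implicit Inversion Theorem (Theorem \ref{thm:IITlinear}).

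For Part 1, since $\Asp \lrar \Apq$ is nonvoid and contains $\alpha_{SQ} = (\alpha_S,\alpha_Q)$, I would first fix a witnessing $\beta_P$ with $(\alpha_S,\beta_P)\in\Asp$ and $(\beta_P,\alpha_Q)\in\Apq$, so that $\Asp = (\alpha_S,\beta_P)+\Vsp$ and $\Apq = (\beta_P,\alpha_Q)+\Vpq$. The two inclusions establishing $\Asp\lrar\Apq = \alpha_{SQ}+(\Vsp\lrar\Vpq)$ then follow from the elementary fact that the difference of two points of an affine space lies in its translate: given $(f_S,g_Q)\in\Asp\lrar\Apq$ with connecting vector $h_P$, the vectors $(f_S-\alpha_S,h_P-\beta_P)$ and $(h_P-\beta_P,g_Q-\alpha_Q)$ lie in $\Vsp$ and $\Vpq$ respectively, so $(f_S-\alpha_S,g_Q-\alpha_Q)\in\Vsp\lrar\Vpq$; the reverse inclusion simply runs this backwards, adding $\beta_P$ back to the connecting vector. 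This step is routine.

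For Part 2, I would first record the affine structure of the double composition. By Part 1, $\Asp\lrar\Apq$ is affine with translate $\Vsp\lrar\Vpq$. The outer composition $\Asp\lrar(\Asp\lrar\Apq)$ glues $\Asp$ and $\Asp\lrar\Apq$ along their common set $S$, so the very same argument as in Part 1 (now with $S$ playing the matching role) shows it is affine with translate $\Vsp\lrar(\Vsp\lrar\Vpq)$, provided it is nonvoid. Both nonvoidness and a shared base point come for free: taking $f_S=\alpha_S$ as witness (and using $\alpha_{SQ}\in\Asp\lrar\Apq$) shows $\beta_{PQ}:=(\beta_P,\alpha_Q)$ lies in $\Asp\lrar(\Asp\lrar\Apq)$, while $\beta_{PQ}\in\Apq$ by the choice of $\beta_P$. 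Since two affine spaces containing a common point are equal iff their translates coincide, the identity $\Apq=\Asp\lrar(\Asp\lrar\Apq)$ is equivalent to the vector-space identity $\Vpq=\Vsp\lrar(\Vsp\lrar\Vpq)$.

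It then remains to show this vector-space identity holds iff $\Vsp\circ P\supseteq\Vpq\circ P$ and $\Vsp\times P\subseteq\Vpq\times P$, which I would extract from Theorem \ref{thm:IITlinear}. Writing $\Vsq:=\Vsp\lrar\Vpq$, the space $\Vpq$ solves $\Vsp\lrar X=\Vsq$, so the necessary conditions of part (1) hold and part (2) gives that $\hat{\V}_{PQ}:=\Vsp\lrar\Vsq$ is also a solution. A short witness check, namely that every $P$-component of $\hat{\V}_{PQ}$ already appears in $\Vsp\circ P$, and that $(0_S,h_P)\in\Vsp$ forces $(h_P,0_Q)\in\hat{\V}_{PQ}$ via the witness $0_S$, shows that $\hat{\V}_{PQ}$ always satisfies the two additional conditions; by the uniqueness in part (3) it is therefore the unique solution meeting them. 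Hence $\Vpq=\hat{\V}_{PQ}=\Vsp\lrar(\Vsp\lrar\Vpq)$ precisely when $\Vpq$ itself satisfies the two conditions, which is the claim. The main thing to watch throughout is careful bookkeeping of which index set each composition is matched on; the one genuinely content-bearing step is the verification that the canonical solution $\hat{\V}_{PQ}$ meets the additional conditions, since that is exactly what converts the linear uniqueness into the desired equivalence.
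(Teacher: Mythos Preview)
Your proposal is correct and follows essentially the same route as the paper. Part 1 is handled identically (fix a witness on $P$, show both inclusions via the translate). For Part 2, both you and the paper reduce the affine identity to the vector-space identity $\Vpq=\Vsp\lrar(\Vsp\lrar\Vpq)$ by applying Part 1 twice and using a common base point, then invoke Theorem~\ref{thm:IITlinear}; your write-up is simply more explicit, spelling out that $\hat\V_{PQ}:=\Vsp\lrar\Vsq$ always satisfies the additional conditions and then using the uniqueness of part~(3), whereas the paper compresses this into a one-line citation.
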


\begin{proof}
1. Since $\alpha_{SQ}\in \Asp\lrar \Apq,$ there exist 
$\alpha _{SP}\in \Asp, \alpha _{PQ}\in \Apq,$ such that
$\alpha _{SP}|_P= \alpha _{PQ}|_P, \alpha _{SQ}=(\alpha _{SP}|_S,\alpha _{PQ}|_Q).$
 
Let $x_{SQ}\in \Vsp\lrar \Vpq.$
Then there exist  $x_{SP}\in \Vsp, x_{PQ}\in \Vpq,$ such that 
$x_{SP}|_P= x_{PQ}|_P, x_{SQ}=(x_{SP}|_S,x_{PQ}|_Q).$
Now $(\alpha _{SP}+x_{SP})\in \Asp, (\alpha _{PQ}+x_{PQ})\in \Apq, (\alpha _{SP}+x_{SP})|_P=(\alpha _{PQ}+x_{PQ})|_P,$
so that
$(\alpha _{SQ}+x_{SQ})\in \Asq.$
We conclude that $\alpha_{SQ}+(\Vsp\lrar \Vpq)\subseteq (\Asp\lrar \Apq).$

Next let $\beta_{SQ}\in \Asp\lrar \Apq.$ There exist
$\beta _{SP}\in \Asp, \beta _{PQ}\in \Apq,$ such that
$\beta _{SP}|_P= \beta _{PQ}|_P, \beta _{SQ}=(\beta _{SP}|_S,\beta _{PQ}|_Q).$
It follows that $(\alpha_{SP}-\beta_{SP})\in \Vsp, (\alpha_{PQ}-\beta_{PQ})\in \Vpq ,(\alpha_{SP}-\beta_{SP})|_P=(\alpha_{PQ}-\beta_{PQ})|_P $ and $(\alpha_{SQ}-\beta_{SQ})\in (\Vsp\lrar \Vpq).$

Thus $\alpha _{SQ}-(\alpha_{SQ}-\beta_{SQ}) = \beta_{SQ}\in \alpha_{SQ}+(\Vsp\lrar \Vpq).$

We conclude that $\Asp\lrar \Apq= \alpha_{SQ}+(\Vsp\lrar \Vpq).$

2. We are given that $\Asp\lrar \Apq$ is nonvoid so that $\Asp,\Apq$
are nonvoid.
In this case we have by part 1 above, that the vector space translate
of $\Asp\lrar \Apq$ is $\Vsp\lrar \Vpq,$
where $\Vsp,\Vpq,$ are respectively the vector space translates of $\Asp,\Apq.$
Thus, again by part 1,  $\Apq = \Asp\lrar (\Asp\lrar \Apq)$ iff\\
$\Vpq = \Vsp\lrar (\Vsp\lrar \Vpq).$
The result now follows from part 2 of Theorem \ref{thm:IITlinear}.
\end{proof}
\begin{remark}
 Testing whether $\Asp\lrar \Apq$ is nonvoid is equivalent 
to checking whether a set of linear equations has a solution.
Let $\Vsp,\Vpq$ be the vector space translates of $\Asp, \Apq,$
and let $(B_S|B_P),(A_P|A_Q)$ be representative matrices of $\Vsp,\Vpq.$
Let $\Asp= \alpha_{SP}+\Vsp, \Apq= \beta_{PQ}+\Vpq.$\\
Let $\alpha_{SP}^T=(\alpha_{S}^T,\alpha_{P}^T), \beta_{PQ}^T=(\beta_{P}^T,\beta_{Q}^T).$
Consider the equation
\begin{align}
\label{eqn:affinenonvoid}
\lambda^T(B_P)+\alpha_P^T=\sigma^T(A_P)+\beta_{P}^T.
\end{align}
Let $\hat{\lambda}, \hat{\sigma}$ be a solution to this equation.
Then the vector $(\hat{\lambda}^T(B_S|B_P)+(\alpha_S^T, \alpha_P^T))\in \Asp, $\\$(\hat{\sigma}^T(A_P|A_Q)+(\beta_P^T,\beta_Q^T))\in \Apq, (\hat{\lambda}^T(B_S)+\alpha_S^T, \hat{\sigma}^T(A_Q)+\beta_Q^T)\in \Asp\lrar \Apq.$\\
Conversely, if  $\Asp\lrar \Apq$ is nonvoid, Equation \ref{eqn:affinenonvoid}
 has a solution.
\end{remark}
\subsection{Affine multiport connection}
We have an immediate corollary of Theorem \ref{thm:IIT2} for multiports, whose device characteristics 
are affine spaces.

\begin{corollary}
\label{cor:IIT2}
Let affine multiport behaviours $\breve{\A}_{R'R"P'P"},\breve{\A}_{P'P"Q'Q"}$ be connected across ports $P$
to yield the nonvoid multiport behaviour $\breve{\A}_{R'R"Q'Q"}.$
Let $\breve{\A}_{R'R"P'P"},\breve{\A}_{R'R"Q'Q"}$ be known and let
$\breve{\V}_{R'R"P'P"},\breve{\V}_{P'P"Q'Q"},$
be respectively the vector space translates of $\breve{\A}_{R'R"P'P"},\breve{\A}_{P'P"Q'Q"}.$
\\
Then 
\begin{enumerate}
\item the vector space
translate of $\breve{\A}_{R'R"Q'Q"}$ is $\breve{\V}_{R'R"P'P"}\lrar (\breve{\V}_{P'P"Q'Q"})_{P'(-P")Q'Q"};$ 
\item $\breve{\A}_{P'P"Q'Q"}$ can
be uniquely determined, being equal to $[\breve{\A}_{R'R"P'P"}\lrar \breve{\A}_{R'R"Q'Q"}]_{P'(-P")Q'Q"},$\\ iff
 the additional conditions
$(\breve{\V}_{R'R"P'P"}\circ P'P")_{P'(-P")}\supseteq \breve{\V}_{P'P"Q'Q"}\circ P'P"$\\ and
$(\breve{\V}_{R'R"P'P"}\times P'P")_{P'(-P")}\subseteq \breve{\V}_{P'P"Q'Q"}\times P'P",$
are imposed.
\end{enumerate}

\end{corollary}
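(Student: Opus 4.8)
The plan is to read off both parts as a direct specialization of Theorem \ref{thm:IIT2}, the only genuine bookkeeping being the sign-reversal on the $P"$ coordinates. First I would set up the correspondence by letting the index triple $(R'R",\,P'P",\,Q'Q")$ play the roles of $(S,P,Q)$ in Theorem \ref{thm:IIT2}, and by putting $\Asp\equivd \breve{\A}_{R'R"P'P"}$ and $\Apq\equivd(\breve{\A}_{P'P"Q'Q"})_{P'(-P")Q'Q"}$. By the definition of connection across the ports $P$ we then have $\Asp\lrar \Apq = \breve{\A}_{R'R"P'P"}\lrar \breve{\A}_{P'(-P")Q'Q"}=\breve{\A}_{R'R"Q'Q"}$, which is nonvoid by hypothesis, so Theorem \ref{thm:IIT2} applies. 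Since the sign-reversal $_{P'(-P")Q'Q"}$ is an invertible linear map, the vector space translate of $\Apq$ is exactly $(\breve{\V}_{P'P"Q'Q"})_{P'(-P")Q'Q"}$, while that of $\Asp$ is $\breve{\V}_{R'R"P'P"}$.

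Part 1 then falls out immediately from Theorem \ref{thm:IIT2}(1): the translate of the nonvoid space $\Asp\lrar\Apq=\breve{\A}_{R'R"Q'Q"}$ equals $\Vsp\lrar\Vpq=\breve{\V}_{R'R"P'P"}\lrar(\breve{\V}_{P'P"Q'Q"})_{P'(-P")Q'Q"}$, which is precisely the stated formula.

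For part 2 I would invoke Theorem \ref{thm:IIT2}(2), which yields the identity $\Apq=\Asp\lrar(\Asp\lrar\Apq)=\breve{\A}_{R'R"P'P"}\lrar\breve{\A}_{R'R"Q'Q"}$ exactly when $\Vsp\circ P'P"\supseteq\Vpq\circ P'P"$ and $\Vsp\times P'P"\subseteq\Vpq\times P'P"$; this identity is itself the assertion that $\Apq$ (hence $\breve{\A}_{P'P"Q'Q"}$) is uniquely recoverable from the two known behaviours. To pass from $\Apq$ back to $\breve{\A}_{P'P"Q'Q"}$ I would use the involution $(\mathcal{K}_{X(-Y)})_{X(-Y)}=\mathcal{K}_{XY}$ recorded earlier, so that reversing the sign on $P"$ once more converts $\Apq=\breve{\A}_{R'R"P'P"}\lrar\breve{\A}_{R'R"Q'Q"}$ into $\breve{\A}_{P'P"Q'Q"}=[\breve{\A}_{R'R"P'P"}\lrar\breve{\A}_{R'R"Q'Q"}]_{P'(-P")Q'Q"}$, as claimed.

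The one step requiring care — and the main, if modest, obstacle — is reconciling the hypotheses of Theorem \ref{thm:IIT2} with the conditions as written in the corollary, since the latter apply the sign-reversal to the opposite side of each inclusion. Here I would use that coordinate-wise negation on $P"$ commutes with both restriction and contraction, giving $\Vpq\circ P'P"=(\breve{\V}_{P'P"Q'Q"}\circ P'P")_{P'(-P")}$ and $\Vpq\times P'P"=(\breve{\V}_{P'P"Q'Q"}\times P'P")_{P'(-P")}$. Because the sign-reversal is a bijection and therefore preserves $\supseteq$ and $\subseteq$, applying it to both sides of $\Vsp\circ P'P"\supseteq\Vpq\circ P'P"$ and $\Vsp\times P'P"\subseteq\Vpq\times P'P"$ turns them into $(\breve{\V}_{R'R"P'P"}\circ P'P")_{P'(-P")}\supseteq\breve{\V}_{P'P"Q'Q"}\circ P'P"$ and $(\breve{\V}_{R'R"P'P"}\times P'P")_{P'(-P")}\subseteq\breve{\V}_{P'P"Q'Q"}\times P'P"$, which are exactly the conditions in the statement. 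This completes the reduction to Theorem \ref{thm:IIT2}.
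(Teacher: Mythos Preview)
Your proof is correct and matches the paper's intent exactly: the paper presents this result as an immediate corollary of Theorem~\ref{thm:IIT2} without writing out the details, and you have supplied precisely those details, namely the identification $(S,P,Q)\leftrightarrow(R'R",P'P",Q'Q")$, $\Asp=\breve{\A}_{R'R"P'P"}$, $\Apq=(\breve{\A}_{P'P"Q'Q"})_{P'(-P")Q'Q"}$, and the routine sign-reversal bookkeeping on $P"$ needed to reconcile the hypotheses. There is nothing to add.
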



We note that 
(see part 1 of Example \ref{eg:connection}),
if multiports $\mathcal{N}_P,$ $\mathcal{N}'_P$ have the same graph, but 
device characteristics $\A_{S'S"},\V_{S'S"}$ respectively and further $\V_{S'S"}$ is the vector space translate of $\A_{S'S"},$ then the {port behaviour} $\breve{\A}_{P'P"}$
of multiport $\mathcal{N}_P$ is
 $
(\V_{S'P'}\oplus (\V^{\perp}_{S'P'})_{S"(-P")})\lrar  \A_{S'S"}$
and  that of multiport $\mathcal{N}'_P$ 
is 
 $
(\V_{S'P'}\oplus (\V^{\perp}_{S'P'})_{S"(-P")})\lrar  \V_{S'S"}.$
The following results are  immediate consequences of Theorem \ref{thm:IIT2}.
\begin{theorem}
\label{thm:translatemultiport}
 Let $\N^1_P,{\N}^2_P$ be multiports on the same graph $\G_{SP}$
but with device characteristics \\
$\A_{S'S"}, {\V}_{S'S"},$ respectively where ${\V}_{S'S"},$
is the vector space translate of the affine space $\A_{S'S"},$
and with port behaviours $\breve{\A}_{P'P"}, \breve{\V}_{P'P"},$ respectively.
\\
If $\breve{\A}_{P'P"}\ne \emptyset ,$ then $\breve{\V}_{P'P"}=
([(\V^v({\G_{SP}}))_{S'P'}\oplus (\V^i({\G_{SP}}))_{S"P"}]\lrar {\V}_{S'S"})_{P'(-P")},$
is the vector space translate of $\breve{\A}_{P'P"}.$
\end{theorem}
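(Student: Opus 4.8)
The plan is to recognise the port--behaviour operation as a single matched composition with a \emph{fixed} topological space, and then to quote part~1 of Theorem~\ref{thm:IIT2}. Write $\T\equivd \V_{S'P'}\oplus (\V^{\perp}_{S'P'})_{S"(-P")}$ for the topological constraint space, where $\V_{S'P'}= (\V^v(\G_{SP}))_{S'P'}$ and $(\V^{\perp}_{S'P'})_{S"P"}= (\V^i(\G_{SP}))_{S"P"}$. By the two formulas displayed immediately before the theorem, the two port behaviours are $\breve{\A}_{P'P"}= \T\lrar \A_{S'S"}$ and the port behaviour of $\N^2_P$ is $\T\lrar \V_{S'S"}$. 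Both are matched compositions over the index set $S'\uplus S"$, the surviving (``$S$'') index being $P'\uplus P"$ and the ``$Q$'' index being empty, so Theorem~\ref{thm:IIT2} applies with the substitutions $\Asp\leftarrow \T$ and $\Apq\leftarrow \A_{S'S"}$.

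First I would verify the two hypotheses of Theorem~\ref{thm:IIT2}(1). The object $\T$ is a vector space: it is the direct sum of the vector space $\V_{S'P'}$ with a sign-flipped copy of the vector space $\V^{\perp}_{S'P'}$, and both copying and negating coordinates preserve closure under addition and scalar multiplication. Hence $\T$ is affine and is its own vector space translate, while $\A_{S'S"}$ is affine with translate $\V_{S'S"}$ by assumption. Since $\T\lrar \A_{S'S"}= \breve{\A}_{P'P"}$ is nonvoid by hypothesis, part~1 of Theorem~\ref{thm:IIT2} yields, for any chosen $\alpha_{P'P"}\in \breve{\A}_{P'P"}$, the identity $\breve{\A}_{P'P"}= \alpha_{P'P"}+(\T\lrar \V_{S'S"})$. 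Because the vector space translate of an affine space is unique, this exhibits $\T\lrar \V_{S'S"}$ as the vector space translate of $\breve{\A}_{P'P"}$.

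It then remains only to match $\T\lrar \V_{S'S"}$ with the expression written in the statement. The coordinates indexed by $P"$ lie on the surviving side of the matched composition and play no role in the matching over $S'\uplus S"$, so flipping their sign commutes with the operation $\lrar \V_{S'S"}$; that is, $(\V_{S'P'}\oplus (\V^{\perp}_{S'P'})_{S"(-P")})\lrar \V_{S'S"}= ((\V_{S'P'}\oplus (\V^{\perp}_{S'P'})_{S"P"})\lrar \V_{S'S"})_{P'(-P")}$. Substituting $\V_{S'P'}= (\V^v(\G_{SP}))_{S'P'}$ and $(\V^{\perp}_{S'P'})_{S"P"}= (\V^i(\G_{SP}))_{S"P"}$ turns the right-hand side into $([(\V^v(\G_{SP}))_{S'P'}\oplus (\V^i(\G_{SP}))_{S"P"}]\lrar \V_{S'S"})_{P'(-P")}= \breve{\V}_{P'P"}$, which completes the identification.

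I do not expect a serious obstacle: once the port behaviour is seen as a matched composition of a fixed vector space with the device characteristic, the statement is an immediate instance of Theorem~\ref{thm:IIT2}(1) together with the uniqueness of the translate. The only points that require care are checking that $\T$ is genuinely a vector space (so that it is its own translate, which is what lets the theorem's affine hypothesis be met) and the sign-flip bookkeeping on the $P"$ coordinates reconciling the two equivalent ways of writing the port behaviour.
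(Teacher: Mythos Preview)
Your proposal is correct and is exactly the approach the paper intends: the paper states that the result is an ``immediate consequence of Theorem~\ref{thm:IIT2}'' and gives no further argument, and your proof simply spells out that consequence (with the sign-flip bookkeeping on $P''$ already supplied by Lemma~\ref{lem:behaviourlrar} and the paragraph preceding the theorem).
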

Other applications of Theorem \ref{thm:IITlinear} may be found in 
\cite{HNarayanan1997,HNPS2013, narayanan2016}.

\section{Implicit Duality Theorem and its application to multiports}
\label{sec:idt}
In this section we state the second basic result of ILA and present some 
applications.
Implicit Duality Theorem is a part of network theory folklore.
However, its applications are insufficiently emphasized in the literature.
\begin{theorem}
\label{thm:idt0}
{\bf Implicit Duality Theorem} 
Let $\Vsp, \Vpq$ be vector spaces respectively on $S\uplus P,P\uplus Q,$ with $S,P,Q,$ being pairwise disjoint.%
 We then have,
$(\mathcal{V}_{SP}\leftrightarrow \mathcal{V}_{PQ})^\perp 
\ \equaln\ \mathcal{V}_{SP}^\perp \rightleftharpoons \mathcal{V}_{PQ}^\perp 
.$ In particular,
$(\mathcal{V}_{SP}\leftrightarrow \mathcal{V}_{P})^\perp \ \equaln\ \mathcal{V}_{SP}^\perp \leftrightarrow \mathcal{V}_{P}^\perp
.$
\end{theorem}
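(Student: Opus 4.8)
The plan is to avoid chasing an individual vector through the existential quantifier on the $P$-component, and instead reduce the identity to three facts already in hand: the set-operation descriptions of the two compositions (Theorem \ref{thm:matchedprop}), the complementarity of restriction and contraction under $\perp$ (Theorem \ref{thm:dotcrossidentity}), and the complementarity of intersection and sum under $\perp$ (Theorem \ref{thm:sumintersection}). Working entirely at the level of spaces, the duality should drop out by substitution.

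First I would rewrite the left-hand side using Theorem \ref{thm:matchedprop}, namely $\Vsp\lrar\Vpq=(\Vsp\cap\Vpq)\circ(S\uplus Q)$, where $\cap$ is the extended intersection on $S\uplus P\uplus Q$ and $\circ(S\uplus Q)$ is restriction, i.e. elimination of the $P$-coordinates. Taking complementary orthogonals, I would apply Theorem \ref{thm:dotcrossidentity}(3) with $S\uplus Q$ in the role of the visible set and $P$ in the role of the eliminated set, so that restriction dualizes to contraction: $((\Vsp\cap\Vpq)\circ(S\uplus Q))^{\perp}=(\Vsp\cap\Vpq)^{\perp}\times(S\uplus Q)$. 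Next I would dualize the intersection via Theorem \ref{thm:sumintersection}(3), obtaining $(\Vsp\cap\Vpq)^{\perp}=\Vsp^{\perp}+\Vpq^{\perp}$ as an extended sum on $S\uplus P\uplus Q$. At this point I would pause to verify the extended-operation bookkeeping: writing $\Vsp\cap\Vpq=(\Vsp\oplus\F_Q)\cap(\F_S\oplus\Vpq)$ and using $(\mathcal{K}_X\oplus\mathcal{K}_Y)^{\perp}=\mathcal{K}_X^{\perp}\oplus\mathcal{K}_Y^{\perp}$ together with $\F_Z^{\perp}=0_Z$ confirms that the orthogonal of the extended intersection is precisely $\Vsp^{\perp}+\Vpq^{\perp}$ in the extended sense, so no spurious $\F$ or $0$ factors survive.

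Combining the two steps gives $(\Vsp\lrar\Vpq)^{\perp}=(\Vsp^{\perp}+\Vpq^{\perp})\times(S\uplus Q)$, and one final appeal to Theorem \ref{thm:matchedprop}—this time to $\mathcal{K}_{SP}\rightleftharpoons\mathcal{K}_{PQ}=(\mathcal{K}_{SP}+\mathcal{K}_{PQ})\times(S\uplus Q)$ applied to $\Vsp^{\perp},\Vpq^{\perp}$—identifies the right-hand side as $\Vsp^{\perp}\rightleftharpoons\Vpq^{\perp}$, which closes the main equality. For the ``in particular'' clause I would specialize to $Q=\emptyset$: the right-hand side becomes $\Vsp^{\perp}\rightleftharpoons\Vp^{\perp}$, and since matched and skewed composition of vector spaces with a space indexed on $P$ alone coincide (as observed just after the definition of skewed composition), this equals $\Vsp^{\perp}\lrar\Vp^{\perp}$.

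I expect the only delicate point to be the relabeling in Theorem \ref{thm:dotcrossidentity}(3): I must treat the entire set $P$, rather than a single inner symbol, as the set contracted away, and check that the stated restriction--contraction duality is symmetric enough in its index sets to license this. Everything else is formal substitution, so once that relabeling is justified the proof is essentially a three-line chain of equalities. As a sanity check I would also note the rank consistency afforded by Theorem \ref{thm:perperp} and Theorem \ref{thm:dotcrossidentity}(1), which should make the dimensions on both sides match and guard against an off-by-a-factor error in the extended operations.
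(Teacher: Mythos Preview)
Your proof is correct and follows essentially the same strategy as the paper's: rewrite the composition via Theorem~\ref{thm:matchedprop}, dualize each of the two set operations (sum/intersection and restriction/contraction), and recognize the result as the skewed composition. The only difference is that the paper starts from the representation $\Vsp\lrar\Vpq=(\Vsp+\mathcal{V}_{(-P)Q})\times(S\uplus Q)$ and dualizes contraction-then-sum to restriction-then-intersection, whereas you start from the mirror representation $(\Vsp\cap\Vpq)\circ(S\uplus Q)$ and dualize in the opposite direction; the two chains are formally dual to one another and equally short.
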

A proof of a general version of Theorem \ref{thm:idt0}, is given in the
appendix.

We present some important applications for multiports and multiport behaviours
in the form of  corollaries. These have the form `if the device characteristic
of a multiport has a certain property, so does the port behaviour'.
The implicit duality theorem is useful in proving such results 
provided the property is in some way related to orthogonality.

First we need the definitions of  some special characteristics and relationships between them.

Let $S$ denote the set of edges of the graph of the linear network and let
$\{S_1, \cdots , S_k\} $ be a partition of $S,$ each block $S_j$ being an
\nw{individual device}. Let $S',S"$ be copies of  $S,$
with $e',e"$ corresponding to edge $e.$ The device characteristic would usually have the form $\bigoplus \A_{S_j'S_j"},$
defined by $(B_j'v_{S_j'}+Q_j"i_{S_j"})=s_j,$
with rows of $(B_j'|Q_j")$ being linearly independent.
 The vector space translate
would have the form $\bigoplus \V_{S_j'S_j"},$
$\V_{S_j'S_j"}$ being the translate of $\A_{S_j'S_j"}.$
Further, usually the blocks $S_j$ would have size not more than ten, even
when the network has millions of nodes.
Therefore, it would be easy to build $(\bigoplus \V_{S_j'S_j"})^{\perp}=\bigoplus \V_{S_j'S_j"}^{\perp}.$


We say $\V_{S_j'S_j"},\hat{\V}_{S_j'S_j"}$ are \nw{orthogonal duals}
of each other iff $\hat{\V}_{S_j'S_j"}= \V^{\perp}_{S_j'S_j"}$
and denote the orthogonal duals of $\V_{S_j'S_j"}$ by $\V^{dual}_{S_j'S_j"}.$
It is clear that $(\V^{dual}_{S_j'S_j"})^{dual}=\V_{S_j'S_j"}.$\\
Let $\K_{S_j'S_j"}$ be an affine space with $\V_{S_j'S_j"}$ as its 
vector space translate.
We say $\hat{\V}_{S_j'S_j"}$ is the  \nw{adjoint}
of $\K_{S_j'S_j"},$ 
denoted by ${\V}^{adj}_{S_j'S_j"},$
iff $\hat{\V}_{S_j'S_j"}= (\V^{\perp}_{S_j'S_j"})_{(-S_j")S_j'}.$
It is clear that 
${\V}_{S_j'S_j"}= (\hat{\V}^{\perp}_{S_j'S_j"})_{(-S_j")S_j'}.$

We say $S_j$ is  \nw{reciprocal}
iff $\K_{S_j'S_j"}\equivd \V_{S_j'S_j"},$ where
$(\V_{S_j'S_j"})_{-S_j"S_j'}=\V_{S_j'S_j"}^{\perp},\ equivalently,$ \\
$(\V_{S_j'S_j"}^{\perp})_{(-S_j")S_j'}=\V_{S_j'S_j"}.$\\
Thus $\V_{S_j'S_j"}$ is reciprocal iff it is self-adjoint.\\
Note that  if $\V_{S_j'S_j"}$ is defined by $v_{S_j'}=Ki_{S_j"},
\V_{S_j'S_j"}^{\perp}$ would be defined by $i_{S_j"}=-K^Tv_{S_j'}$
and \\ $(\V_{S_j'S_j"}^{\perp})_{(-S_j")S_j'}$ would be defined by
$v_{S_j'}=K^Ti_{S_j"},$
so that $\V_{S_j'S_j"},$ being reciprocal
means that $K=K^T.$\\
We say $\V_{S_j'S_j"},\hat{\V}_{S_j'S_j"}$ are \nw{Dirac duals}
of each other iff $\hat{\V}_{S_j'S_j"}= (\V^{\perp}_{S_j'S_j"})_{S_j"S_j'},$
i.e., iff \\
${\V}_{S_j'S_j"}= (\hat{\V}^{\perp}_{S_j'S_j"})_{S_j"S_j'}$
and denote the Dirac dual   of $\V_{S_j'S_j"}$ by $\V^{Ddual}_{S_j'S_j"}.$
It is clear that $(\V^{Ddual}_{S_j'S_j"})^{Ddual}=\V_{S_j'S_j"}.$\\
We say $\K_{S_j'S_j"}$ is  \nw{Dirac}
iff $\K_{S_j'S_j"}\equivd \V_{S_j'S_j"},$ where
$(\V_{S_j'S_j"})_{S_j"S_j'}=\V_{S_j'S_j"}^{\perp}.$
\\
Thus $\V_{S_j'S_j"}$ is Dirac iff it is self-Dirac dual.
In this case, if $\V_{S_j'S_j"}$ is defined by  $v_{S_j'}=Ki_{S_j"},\\
(\V_{S_j'S_j"})_{S_j"S_j'}=\V_{S_j'S_j"}^{\perp}$ would be defined by
$v_{S_j'}=-K^Ti_{S_j"},$ so that $K=-K^T.$\\
\begin{example}
\label{eg:adjointetc}
1. Let $v= Ri+\E$ be a source accompanied resistor, $R$ being diagonal,
with nonzero diagonal entries.
The vector space translate of the characteristic is the solution space 
of $v= Ri.$ The orthogonal dual has the characteristic $\hat{i}=-R^T\hat{v}=-R\hat{v}.$
The adjoint has the characteristic $\tilde{v}=R\tilde{i}.$
Thus the source free resistor is self adjoint or reciprocal.
This would be true even if $R$ is a symmetric matrix of full rank.\\
2. A device with the hybrid characteristic and its adjoint are shown below.
\begin{align}
\label{eqn:hybrid}
\ppmatrix{v_1\\i_2}=\ppmatrix{r_{11}& h_{12}\\
h_{21}& g_{22}} \ppmatrix{i_1\\v_2}; \ \ \ \ \ \ \ \ \ \  \ 
\ppmatrix{\hat{v}_1\\\hat{i}_2}=\ppmatrix{r^T_{11}& -h^T_{21}\\
-h^T_{12}& g^T_{22}} \ppmatrix{\hat{i}_1\\\hat{v}_2};
\end{align}
3. The Dirac dual of $v= Ri+\E$ has the characteristic $\tilde{v}=-R^T\tilde{i}.$
Therefore the source free device $v= Ri$ would be self Dirac dual iff
$R=-R^T.$
\end{example}

We say $S_j$ is an \nw{ideal transformer}
iff $\K_{S_j'S_j"}\equivd \V_{S_j'}\oplus (\V_{S_j'}^{\perp})_{S_j"}.$\\
Let
$\tilde{P}$ be a disjoint copy of $P,$
with $P\equivd\{e_1, \cdots , e_k\}, \tilde{P}\equivd \{\tilde{e}_1, \cdots , \tilde{e}_k\}, e_i,\tilde{e}_i$ being copies of each other.
We note that $\T^{P\tilde{P}},$ defined in Subsection \ref{subsec:connectioncomposition} denotes the `$(1:1)$ ideal transformer'
defined by
$v_{e_i}=v_{\tilde{e}_i}, i_{e_i}=-i_{\tilde{e}_i}, i=1, \cdots ,k.$
\\
Since $(\V_{S_j'}\oplus (\V_{S_j'}^{\perp})_{S_j"})^{adj}=
((\V_{S_j'}\oplus (\V_{S_j'}^{\perp})_{S_j"})^{\perp})_{(-S_j")S_j'}
=\V_{S_j'}\oplus (\V_{S_j'}^{\perp})_{S_j"},$
an ideal transformer is self-adjoint, i.e., reciprocal.

Let $S_j, \hat{S}_j$ be pairwise disjoint copies with $S_j\equivd S_j\uplus \hat{S}_j, S_j\equivd \{e_1,\cdots e_k\}, \hat{S}_j\equivd  \{\hat{e}_1, \cdots \hat{e}_k\}.$
We say $\V_{S_j'\hat{S}_j'S_j"\hat{S}_j"}$ is a \nw{gyrator} 
iff
$v_{S_j'}=-R i_{\hat{S}_j"}; v_{\hat{S}_j'}=R i_{S_j"},$ where $R$  is a positive diagonal matrix. It is clear that a gyrator is
Dirac.
\\
We denote by $\g ^{S_j\hat{S}_j},$ the gyrator where $R$ is the identity matrix.
\\
We say $\V_{S_j'S_j"}$ is \nw{passive} iff, whenever $(v_{S_j'},i_{S_j"})\in \V_{S_j'S_j"},$
we  have that the dot product $\langle v_{S_j'}, i_{S_j"} \rangle$ is nonnegative.\\
As mentioned before, we say $S_j$ is  a set of \nw{norators}
iff $\K_{S_j'S_j"}\equivd \F_{S_j'S_j"},$ i.e., there are no constraints on
$v_{S_j'},i_{S_j"}.$
 We say $e_j$ is a \nw{nullator} iff
$v_{e_j'}=i_{e_j"}=0.$
Note that $e_j$ is a {nullator} iff its orthogonal dual/adjoint/Dirac dual is a norator.

\begin{corollary}
\label{cor:idealtransformer}
Let $\V_{S'P'S"P"},\V_{S'S"},$ be ideal transformers. Then so is
$\V_{S'P'S"P"}\lrar \V_{S'S"}. $
\end{corollary}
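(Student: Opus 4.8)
The plan is to exploit the direct-sum (voltage $\oplus$ current) structure built into the definition of an ideal transformer so as to split the matched composition into two independent compositions, and then to invoke the special case of the Implicit Duality Theorem (Theorem \ref{thm:idt0}) to recognise the current block as the orthogonal complement of the voltage block.

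First I would write each hypothesis in defining form: $\V_{S'P'S"P"} = \V_{S'P'} \oplus (\V^{\perp}_{S'P'})_{S"P"}$ for some vector space $\V_{S'P'}$ on $S' \uplus P'$, and $\V_{S'S"} = \W_{S'} \oplus (\W^{\perp}_{S'})_{S"}$ for some voltage space $\W_{S'}$ on $S'$. The composition in the statement is matched over the shared set $S' \uplus S"$ and so produces a collection on $P' \uplus P"$. Because in both spaces the primed (voltage) coordinates and the double-primed (current) coordinates are completely decoupled by the direct sum, the matching over $S'$ and the matching over $S"$ may be carried out separately: writing out the defining existential condition for $\lrar$ and splitting the witness $(h_{S'},h_{S"})$ into its independent primed and double-primed parts gives
\begin{align*}
\V_{S'P'S"P"} \lrar \V_{S'S"} = (\V_{S'P'} \lrar \W_{S'}) \oplus \big((\V^{\perp}_{S'P'})_{S"P"} \lrar (\W^{\perp}_{S'})_{S"}\big).
\end{align*}

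Setting $\mathcal{U}_{P'} \equivd \V_{S'P'} \lrar \W_{S'}$, a vector space on $P'$, it remains only to show that the second summand equals $(\mathcal{U}^{\perp}_{P'})_{P"}$, which is exactly the form required of an ideal transformer on $P' \uplus P"$. Here I would apply the special case $(\V_{SP} \lrar \V_P)^{\perp} = \V^{\perp}_{SP} \lrar \V^{\perp}_P$ of IDT with $S', P'$ playing the roles of $P, S$ respectively (so that the composition runs over $S'$); this gives $\mathcal{U}^{\perp}_{P'} = \V^{\perp}_{S'P'} \lrar \W^{\perp}_{S'}$ as spaces on $P'$. Since a matched composition depends only on which coordinates are matched and is therefore unchanged by a consistent relabeling of all indices, relabeling $S' \mapsto S"$ and $P' \mapsto P"$ turns the right-hand side into $(\V^{\perp}_{S'P'})_{S"P"} \lrar (\W^{\perp}_{S'})_{S"}$ and the left-hand side into $(\mathcal{U}^{\perp}_{P'})_{P"}$, identifying the second summand and finishing the argument.

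The step that needs the most care is the bookkeeping with the primed and double-primed copies: I must make sure the decoupling is genuine, i.e. that no double-primed coordinate is ever constrained jointly with a primed one, and that each orthogonal complement and each relabeling is taken on exactly the right index set before IDT is applied. The conceptual content — that once the voltage space $\mathcal{U}_{P'}$ of the composite is fixed, its current space is forced to be the dual $(\mathcal{U}^{\perp}_{P'})_{P"}$ — is precisely the special case of IDT, so the only genuine obstacle is matching that theorem's index sets to the present configuration without slippage.
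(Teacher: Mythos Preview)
Your proposal is correct and follows essentially the same approach as the paper's proof: unwind the direct-sum definition of an ideal transformer, split the matched composition into its voltage and current parts, and apply the special case of IDT to identify the current block as the orthogonal complement of the voltage block. The paper's proof is terser and leaves the primed-to-double-primed relabeling implicit, whereas you spell it out; your extra care there is appropriate and does not constitute a different route.
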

\begin{proof}
We have $\V_{S'P'S"P"}\equivd \V_{S'P'}\oplus \V^{\perp}_{S"P"},
\V_{S'S"}\equivd \V_{S'}\oplus \V^{\perp}_{S"},$ so that\\
$\V_{S'P'S"P"}\lrar \V_{S'S"}= (\V_{S'P'}\oplus \V^{\perp}_{S"P"})\lrar (\V_{S'}\oplus \V^{\perp}_{S"})= (\V_{S'P'}\lrar \V_{S'})\oplus (\V^{\perp}_{S"P"}\lrar \V^{\perp}_{S"}).$ Since, by Theorem \ref{thm:idt0}, 
$(\V_{S'P'}\lrar \V_{S'})^{\perp}=(\V^{\perp}_{S"P"}\lrar\V^{\perp}_{S"}),$
the result follows.
\end{proof}

\begin{corollary}
\label{cor:reciprocalDirac}
\begin{enumerate}
\item Let $\V_{S'P'S"P"}, \hat{\V}_{S'P'S"P"}$ be adjoints of each other
and so also $\V_{S'S"}, \hat{\V}_{S'S"}.$
Then $\V_{S'P'S"P"}\lrar \V_{S'S"}, $ and $\hat{\V}_{S'P'S"P"}\lrar \hat{\V}_{S'S"}, $ are also adjoints of each other.
\item Let $\V_{S'P'S"P"}, \hat{\V}_{S'P'S"P"}$ be Dirac duals of each other
and so also $\V_{S'S"}, \hat{\V}_{S'S"}.$\\
Then $\V_{S'P'S"P"}\lrar \V_{S'S"}, $ and $\hat{\V}_{S'P'S"P"}\lrar \hat{\V}_{S'S"}, $ are also Dirac duals  of each other.
\item Let $\V_{S'P'S"P"},\V_{S'S"},$ be reciprocal. Then so is 
$\V_{S'P'S"P"}\lrar \V_{S'S"}. $ 
\item Let $\V_{S'P'S"P"},\V_{S'S"},$ be Dirac. Then so is
$\V_{S'P'S"P"}\lrar \V_{S'S"}. $
\end{enumerate}
\end{corollary}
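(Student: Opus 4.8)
The plan is to prove parts 1 and 2 directly from the Implicit Duality Theorem and then to read off parts 3 and 4 as the self-dual specializations. Throughout I write $\A\equivd \V_{S'P'S"P"}$ and $\B\equivd \V_{S'S"}$, so that $\A\lrar \B$ is a space on $P'\uplus P"$, the matched composition being taken over the common index set $S'\uplus S"$. Recall that for a space $\W_{P'P"}$ the adjoint is $(\W^{\perp}_{P'P"})_{(-P")P'}$ and the Dirac dual is $(\W^{\perp}_{P'P"})_{P"P'}$; both interchange the primed and double-primed indices, the sole difference being the extra sign in the adjoint case. The whole argument rests on a single structural fact combined with IDT, so no device-specific computation is needed.

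For part 1 I would first invoke the special case of Theorem \ref{thm:idt0}: since $\B^{\perp}$ lives on exactly the common block $S'\uplus S"$, it gives $(\A\lrar \B)^{\perp}=\A^{\perp}\lrar \B^{\perp}$ with matched (not skewed) composition on the right, the potential skewing collapsing because $\B^{\perp}$ is a vector space on the common index. Taking the adjoint of $\A\lrar\B$ then just means applying the reindexing $(\cdot)_{(-P")P'}$ to $\A^{\perp}\lrar \B^{\perp}$. The key step is a distributivity observation: a relabelling that swaps primed and double-primed indices and negates, applied \emph{consistently} to the common indices $S',S"$ in both factors, commutes with matched composition, because $\lrar$ is defined by the existence of a common value on $S'\uplus S"$ and such a relabelling carries matching pairs bijectively to matching pairs. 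Hence $(\A^{\perp}\lrar \B^{\perp})_{(-P")P'}=(\A^{\perp})_{(-S")(-P")S'P'}\lrar (\B^{\perp})_{(-S")S'}=\hat{\A}\lrar \hat{\B}$, which is precisely the statement that $\A\lrar\B$ and $\hat{\A}\lrar\hat{\B}$ are adjoints of each other.

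Part 2 runs through the identical three moves — IDT special case, then distributivity of the relabelling over $\lrar$ — but now with the sign-free swap $(\cdot)_{P"P'}$ defining the Dirac dual; since the second factor again collapses to matched composition and introduces no stray sign, the bookkeeping is the same and yields $\mathrm{(Ddual)}(\A\lrar\B)=\hat{\A}\lrar\hat{\B}$. Parts 3 and 4 are then immediate: a reciprocal space is self-adjoint and a Dirac space is self-Dirac-dual, so putting $\hat{\A}=\A$ and $\hat{\B}=\B$ in parts 1 and 2 respectively forces $\A\lrar\B$ to coincide with its own adjoint (resp.\ its own Dirac dual), i.e.\ to be reciprocal (resp.\ Dirac).

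The only genuine obstacle is the index bookkeeping in the distributivity step: one must verify that the swap-and-negate defining the adjoint acts the \emph{same} way on $S'$ and $S"$ in both $\A^{\perp}$ and $\B^{\perp}$, so that the matching condition on the common block is preserved rather than being turned into a sign-mismatched (skewed) condition. This is checked by tracing a generic vector $(a_{S'},b_{P'},c_{S"},d_{P"})\in\A^{\perp}$ paired with $(a_{S'},c_{S"})\in\B^{\perp}$ through both sides: the matching constraints $a=p$, $c=q$ on $S'\uplus S"$ are left invariant by the relabelling, after which the two descriptions of the resulting space on $P'\uplus P"$ agree term by term, and the proof is complete.
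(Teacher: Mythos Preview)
Your proof is correct and follows essentially the same route as the paper: apply the special case of IDT $(\A\lrar\B)^{\perp}=\A^{\perp}\lrar\B^{\perp}$, then push the swap-and-negate reindexing through the matched composition by applying it consistently on the common block $S'\uplus S"$, yielding $(\A^{\perp}\lrar\B^{\perp})_{(-P")P'}=(\A^{\perp})_{(-S")(-P")S'P'}\lrar(\B^{\perp})_{(-S")S'}=\hat{\A}\lrar\hat{\B}$. The only cosmetic difference is that for parts 3 and 4 you invoke parts 1 and 2 as special cases, whereas the paper repeats the computation with $\hat{\V}=\V$; your version is slightly cleaner.
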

\begin{proof}
\begin{enumerate}
\item Let $ \V_{P'P"}\equivd \V_{S'P'S"P"}\lrar \V_{S'S"}$
and let $ \hat{\V}_{P'P"}\equivd \hat{\V}_{S'P'S"P"}\lrar \hat{\V}_{S'S"}.$

We then have,\\
$ (\V_{P'P"}^{\perp})_{(-P")P'}= ((\V_{S'P'S"P"}\lrar \V_{S'S"})^{\perp})_{(-P")P'}= (\V_{S'P'S"P"}^{\perp}\lrar \V_{S'S"}^{\perp})_{(-P")P'}$\\$=(\V_{S'P'S"P"}^{\perp})_{-S"(-P")S'P'}\lrar (\V_{S'S"}^{\perp})_{-S"S'}= \hat{\V}_{S'P'S"P"}\lrar \hat{\V}_{S'S"}=\hat{\V}_{P'P"} .$
\item The proof is essentially the same as the previous part except 
that the subscripts $S",P"$ do not have a negative sign.
\item 
Let $\V_{P'P"}\equivd (\V_{S'P'S"P"}\lrar \V_{S'S"}).$
We have  $ (\V_{P'P"}^{\perp})_{(-P")P'}= ((\V_{S'P'S"P"}\lrar \V_{S'S"})^{\perp})_{(-P")P'}=(\V_{S'P'S"P"}^{\perp}\lrar \V_{S'S"}^{\perp})_{(-P")P'}=  (\V_{S'P'S"P"}^{\perp})_{-S"(-P")S'P'}\lrar (\V_{S'S"}^{\perp})_{-S"S'}
={\V}_{S'P'S"P"}\lrar {\V}_{S'S"}={\V}_{P'P"} .$
\item  The proof is essentially the same as the previous part except
that the subscripts $S",P"$ do not have a negative sign.
\end{enumerate}

\end{proof}

Let the multiport $\mnw{{\N}_P}$ be on graph $\G_{SP}$
with device characteristic
$\K_{S'S"}=x_{S'S"}+{\V}^{}_{S'S"}$ on $S.$\\
The  multiport $\mnw{{\N}^{hom}_P}$ is on graph $\G_{SP}$
but has device characteristic ${\V}^{}_{S'S"}.$\\
The \nw{adjoint} $\N^{adj}_P$ of ${\N}_P$ as well as of ${\N}^{hom}_P$ 
is on graph $\G_{SP}$
but has device characteristic ${\V}^{adj}_{S'S"}.$
%
%

Let the solution set of $\N_P$
be
$[\V_{S'P'}\oplus (\V^{\perp}_{S'P'})_{S"P"}]\cap \K_{S'S"}.$
Then \\the solution set of  ${\N}^{hom}_P$
would be
$[\V_{S'P'}\oplus (\V^{\perp}_{S'P'})_{S"P"}]\cap {\V}^{}_{S'S"},
$
and \\that of  $\N^{adj}_P$
would be
$[\V_{S'P'}\oplus (\V^{\perp}_{S'P'})_{S"P"}]\cap {\V}^{adj}_{S'S"}.$

The port behaviour of $\N_P$
would be $[\V_{S'P'}\oplus (\V^{\perp}_{S'P'})_{S"-P"}]\lrar \K_{S'S"},$\\
that of 
${\N}^{hom}_P$
would be
$[\V_{S'P'}\oplus (\V^{\perp}_{S'P'})_{S"-P"}]\lrar {\V}^{}_{S'S"},
$
and\\ that of  $\N^{adj}_P$
would be
$[\V_{S'P'}\oplus (\V^{\perp}_{S'P'})_{S"-P"}]\lrar {\V}^{adj}_{S'S"}.$


We now have the following basic result on linear multiports.
\begin{theorem}
\label{thm:adjointmultiport}
\begin{enumerate}
\item Let $\N^1_P,{\N}^2_P$ be multiports on the same graph $\G_{SP}$
but with device characteristics \\
$\V^1_{S'S"}, {\V}^2_{S'S"},$ respectively
and port behaviours $\breve{\V}^1_{P'P"}, \breve{\V}^2_{P'P"},$ respectively.
Then if $\V^1_{S'S"}, {\V}^2_{S'S"},$ are adjoints (Dirac duals) of each other
so are $\breve{\V}^1_{P'P"}, \breve{\V}^2_{P'P"},$ adjoints (Dirac duals) of each other.
\item If the port behaviour $\breve{\V}_{P'P"}$ of $\N_P$ is proper,
then so is the port behaviour $(\breve{\V}_{P'P"})^{adj}$ of $\N^{adj}_P.$
\item If $\N_P$ has a device characteristic $\V_{S'S"}$ that is reciprocal
(self-Dirac dual) then its port behaviour $\breve{\V}_{P'P"}$ is also
reciprocal
(self-Dirac dual). Further the port behaviour has dimension $|P|.$
\item If $\N_P$ has a device characteristic $\V_{S'S"}$ that is an ideal transformer,
then its port behaviour $\breve{\V}_{P'P"}$ is also
an ideal transformer.
\end{enumerate}
\end{theorem}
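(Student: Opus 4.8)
The plan is to reduce all four parts to Corollaries \ref{cor:idealtransformer} and \ref{cor:reciprocalDirac} applied to the topological interconnection space, after isolating the port-current sign reversal that is built into the definition of port behaviour. First I would record, from Lemma \ref{lem:behaviourlrar} together with Tellegen's theorem (Theorem \ref{thm:tellegen}), that the port behaviour of a multiport on $\mathcal{G}_{SP}$ with device characteristic $\mathcal{V}_{S'S"}$ is $\breve{\mathcal{V}}_{P'P"}=(\mathcal{T}\lrar\mathcal{V}_{S'S"})_{P'(-P")}$, where $\mathcal{T}\equivd\mathcal{V}_{S'P'}\oplus(\mathcal{V}^\perp_{S'P'})_{S"P"}$ and $\mathcal{V}_{S'P'}=(\mathcal{V}^v(\mathcal{G}_{SP}))_{S'P'}$. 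The space $\mathcal{T}$ is exactly an \emph{ideal transformer} on $(S\uplus P)'\uplus(S\uplus P)"$, hence self-adjoint (reciprocal), self-Dirac-dual, and an ideal transformer. The decisive observation is that the $P"$-reversal $(\cdot)_{P'(-P")}$ lives entirely in the \emph{output} of the matched composition over $S'S"$, so it commutes with $\lrar$; I work with $\mathcal{T}\lrar\mathcal{V}_{S'S"}$ first and restore the sign afterwards. Note that the space $\mathcal{V}_{S'P'}\oplus(\mathcal{V}^\perp_{S'P'})_{S"(-P")}$ actually appearing under $\lrar$ in Lemma \ref{lem:behaviourlrar} is itself \emph{neither} self-adjoint \emph{nor} an ideal transformer, which is precisely why the corollaries cannot be applied to it verbatim.

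Next I would prove a short stability lemma for the output reversal $\mathcal{W}\mapsto\mathcal{W}_{P'(-P")}$ acting on spaces on $P'\uplus P"$. From $(\mathcal{W}_{P'(-P")})^\perp=(\mathcal{W}^\perp)_{P'(-P")}$ one computes $(\mathcal{W}_{P'(-P")})^{adj}=(\mathcal{W}^{adj})_{(-P')}$, and since a vector space is closed under global negation we have $(\cdot)_{(-P')}=(\cdot)_{P'(-P")}$; the same calculation applies to the Dirac dual. Hence $\mathcal{W}\mapsto\mathcal{W}_{P'(-P")}$ sends adjoint pairs to adjoint pairs, Dirac-dual pairs to Dirac-dual pairs, reciprocal to reciprocal, self-Dirac-dual to self-Dirac-dual, and (trivially, as the reversed block lies inside a vector-space summand) ideal transformers to ideal transformers.

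With these in hand the four parts follow. For part~1, since $\mathcal{N}^1_P,\mathcal{N}^2_P$ share the graph they share $\mathcal{T}$; applying part~1 (resp.\ part~2) of Corollary \ref{cor:reciprocalDirac} with the multiport space taken to be the self-adjoint (resp.\ self-Dirac-dual) $\mathcal{T}$ shows $\mathcal{T}\lrar\mathcal{V}^1_{S'S"}$ and $\mathcal{T}\lrar\mathcal{V}^2_{S'S"}$ are adjoints (resp.\ Dirac duals), and the stability lemma transfers this to $\breve{\mathcal{V}}^1_{P'P"},\breve{\mathcal{V}}^2_{P'P"}$. Part~4 is identical using Corollary \ref{cor:idealtransformer}. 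For part~3 I apply parts~3 and~4 of Corollary \ref{cor:reciprocalDirac} followed by the stability lemma; the dimension claim then comes from the rank identity $r(\breve{\mathcal{V}}_{P'P"})+r(\breve{\mathcal{V}}^\perp_{P'P"})=2|P|$ of Theorem \ref{thm:perperp}, since self-adjointness (self-Dirac-duality) forces $r(\breve{\mathcal{V}}_{P'P"})=r(\breve{\mathcal{V}}^\perp_{P'P"})=|P|$. Part~2 combines part~1 (the port behaviour of $\mathcal{N}^{adj}_P$ is $\mathcal{T}\lrar\mathcal{V}^{adj}_{S'S"}=(\breve{\mathcal{V}}_{P'P"})^{adj}$) with the fact that the adjoint preserves dimension, so properness is inherited.

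The main obstacle is not the IDT algebra but the bookkeeping of the port-current sign reversal: the temptation is to try to force the flipped space to be self-dual, which fails, whereas the correct move is to peel the flip onto the output, reduce to the genuinely self-dual ideal transformer $\mathcal{T}$, and verify that the reversal is stable under each duality, the only nontrivial input being that global negation acts trivially on a vector space. A self-contained alternative avoiding the corollaries is to compute $(\breve{\mathcal{V}}^1_{P'P"})^\perp=\mathcal{T}^\perp\lrar(\mathcal{V}^1_{S'S"})^\perp$ directly from Theorem \ref{thm:idt0}, evaluate $\mathcal{T}^\perp$ by the perp-of-direct-sum rule, and match indices using the same global-negation triviality.
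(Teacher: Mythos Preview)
Your proposal is correct and follows essentially the same route as the paper: identify the topological space $\mathcal{T}=\mathcal{V}_{S'P'}\oplus(\mathcal{V}^\perp_{S'P'})_{S"P"}$ as an ideal transformer (hence self-adjoint and self-Dirac-dual), apply Corollaries~\ref{cor:idealtransformer} and~\ref{cor:reciprocalDirac} to $\mathcal{T}\lrar\mathcal{V}_{S'S"}$, and then push the conclusion through the $P"$-reversal to reach $\breve{\mathcal{V}}_{P'P"}$. The paper's proof is terser on that last step---it simply asserts that $(\mathcal{V}^1_{P'P"})_{P'(-P")}$ and $(\mathcal{V}^2_{P'P"})_{P'(-P")}$ remain adjoints---whereas your explicit ``stability lemma'' for the reversal (using that global negation fixes a vector space) makes this passage watertight and also covers the ideal-transformer case of part~4, which the paper leaves implicit.
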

\begin{proof}
1. Let $\V_{S'P'}\equivd (\V^v(\G_{SP}))_{S'P'}.$
By Tellegen's Theorem, we have $(\V^v(\G_{SP}))^{\perp}=\V^i(\G_{SP}).$\\
Therefore $((\V_{S'P'})^{\perp})_{S"P"}=(\V^i(\G_{SP}))_{S"P"}.$
Let $\V^1_{P'P"}\equivd(\V_{S'P'}\oplus ((\V_{S'P'})^{\perp})_{S"P"})\lrar \V^1_{S'S"}.$ and let ${\V}^2_{P'P"}=(\V_{S'P'}\oplus ((\V_{S'P'})^{\perp})_{S"P"})\lrar {\V}^2_{S'S"}.$
It is clear that $\V_{S'P'}\oplus ((\V_{S'P'})^{\perp})_{S"P"}$
is self-adjoint. We are given that $\V^1_{S'S"},{\V}^2_{S'S"}$ are adjoints.
Therefore, by part 1 of Corollary \ref{cor:reciprocalDirac}
, we have that\\ $\V^1_{P'P'},{\V}^2_{P'P'}$ are adjoints and therefore
$\breve{\V}^1_{P'P'}\equivd (\V^1_{P'P'})_{P'(-P")}$ and
$\breve{\V}^2_{P'P'}\equivd ({\V}^2_{P'P'})_{P'(-P")}$ are adjoints.

 The Dirac dual case is similar to the above noting that $\V_{S'P'}\oplus ((\V_{S'P'})^{\perp})_{S"P"}$
is self-Dirac dual.\\
2. Since $r((\breve{\V}^{\perp}_{P'P'}))=2|P|-r(\breve{\V}_{P'P'})=|P|,$
we also have $r((\breve{\V}_{P'P'})^{adj})=r((\breve{\V}_{P'P'})_{(-P")P'})=|P|.$\\
3. The reciprocal case follows from part 1 above since $\V_{S'S"}$ is given to be reciprocal, i.e., self-adjoint.\\ By the definition of self-adjointness, we must have
that $\V_{P'P"}$ and $\V^{\perp}_{P'P"}$ must have the same dimension.
But their dimensions must add upto $2|P|.$ Therefore $\V_{P'P'}$
 and \\$\breve{\V}_{P'P'}\equivd (\V_{P'P'})_{P'(-P")}$
have dimension $|P|.$
\\
 The self-Dirac dual case follows from part 1 above since $\V_{S'S"}$ is given to be
 self-Dirac dual.\\
4. We note that $\V_{S'P'}\oplus ((\V_{S'P'})^{\perp})_{S"P"}$
is an ideal transformer. We are given that $\V_{S'S"}$ is an ideal transformer.
The result now follows from Corollary \ref{cor:idealtransformer}.

\end{proof}

\section{Explicit computation of  solution and port behaviour of multiport networks}
\label{sec:linearalgebrabehaviour}
In this section we consider the linear algebraic problem of
computing the solution and port behaviour of linear multiport networks.
This is a problem of elimination of variables by suitable 
row operations on the governing equations of the linear multiport.
Our treatment is general, in the sense that it will work 
for arbitrary linear multiports, even if they have no solution.
However, treating this as a general linear algebraic problem is not as
convenient as reducing it to repeated solution of circuits
with unique solution, because versatile circuit simulators are available
for the latter purpose. We perform this reduction in Section \ref{sec:computingbehaviour}
for the important class of multiports which have nonvoid sets of solutions 
 for arbitrary source values 
and further, whose port conditions uniquely fix internal conditions (`regular'
 multiports in Definition \ref{def:regular}).
\subsection{Equations of  multiports }
We now study the constraints of mutiports and port behaviours
explicitly in terms of their governing equations
using the ideas of the Subsection \ref{subsec:elimination}.
Let multiport $\mathcal{N}_P $ be on graph $\G_{SP}$ with affine device characteristic $\A_{S'S"}.$ 
Let $\V_{S'P'}\equivd (\V^v(\G_{SP}))_{S'P'},$ so that $ (\V^{\perp}_{S'P'})_{S"P"}= (\V^i(\G_{SP}))_{S"P"}.$ 
Let us  compute the port behaviour $\breve{\A}_{P'P"}$ of $\N_P.$

Let $(Q_{S}|Q_{P}),(B_{S}|B_{P})$ be representative matrices of
$(\V^v(\G_{SP})),(\V^i(\G_{SP})),$ respectively and 
let $\A_{S'S"}$ be the affine space that is the set of solutions
of $(M_{S'})v_{S'}+(N_{S"})i_{S"}=s_D.$
We will assume that the rows of $(M_{S'}|N_{S"}),$ are linearly independent.
Let $\V_{S'S"}$ be the vector space translate of $\A_{S'S"}.$
It is clear that $\V_{S'S"}^{\perp}$ is the row space of $(M_{S'}|N_{S"}).$

Then the set of solutions of $\mathcal{N}_P,$ 
given by $[\V_{S'P'}\oplus (\V^{\perp}_{S'P'})_{S"P"}]\cap \A_{S'S"},$
can be cast as  the set of solutions of the equation
\begin{align}
\label{eqn:elimination4}
\ppmatrix{
        B_{S'} & \vdots\vdots  & B_{P'} &\vdots\vdots  & 0_{S"} & \vdots\vdots  & 0_{P"}\\
        0_{S'} & \vdots\vdots  & 0_{P'}& \vdots\vdots  & Q_{S"} & \vdots\vdots  & Q_{P"}\\
        M_{S'} & \vdots\vdots  & 0_{3P'}& \vdots\vdots  & N_{S"} & \vdots\vdots  & 0_{3P"}}
        \ppmatrix{v_{S'}\\v_{P'}\\i_{S"}\\i_{P"}}&=\ppmatrix{0 \\ 0\\ s_D }. 
\end{align}
As  in Equation \ref{eqn:elimination2}, we can perform invertible row
operations on  Equation \ref{eqn:elimination4}, to yield 
\begin{align}
\label{eqn:elimination5}
\ppmatrix{
        B_{1S'} & \vdots\vdots  & B_{1P'} &\vdots\vdots  & Q_{1S"} & \vdots\vdots  & Q_{1P"}\\
        M_{S'} & \vdots\vdots  & 0_{2P'}& \vdots\vdots  & N_{S"} & \vdots\vdots  & 0_{2P"}\\
        0_{3S'} & \vdots\vdots  & B_{3P'}& \vdots\vdots  & 0_{3S"} & \vdots\vdots  & Q_{3P"}}
        \ppmatrix{v_{S'}\\v_{P'}\\i_{S"}\\i_{P"}}&=\ppmatrix{{s}_1\\ {s}_D \\ {s}_3}, 
\end{align}
where the rows of 
\begin{align}
\label{eqn:reducedcoeff}
\ppmatrix{ B_{1S'} & \vdots\vdots    & Q_{1S"} \\
        \M_{S'} & \vdots\vdots  &  N_{S"} }
\end{align}
are linearly independent.
The port behaviour $\breve{\A}_{P'P"}$ of $\N_P$ is given by\\
$[((\V_{S'P'}\oplus (\V^{\perp}_{S'P'})_{S"P"})\cap \A_{S'S"})\circ P'P"]_{P'(-P")}.$\\
Therefore, $\breve{\A}_{P'P"}$ is  the set of solutions of
$B_{3P'}v_{P'}+Q_{3P"}(-i_{P"})=s_3.$

\begin{remark}
\label{rem:elimination}
We note the following in regard to the consistency of the multiport $\N_P,$ to the nature of $\breve{\A}_{P'P"}$
and its relation to the `interior' variables (on $S',S"$) of the multiport. 
\begin{enumerate}
\item The multiport $\N_P$ is consistent for arbirary source values (of the device characteristic), i.e., Equation \ref{eqn:elimination4}
has a solution for arbitrary values of the vector $s_D,$ 
iff row space of the first two rows has zero
intersection with the row space of the last row of the coefficient matrix,
\\ i.e., iff 
$[(\V^i(\G_{SP}))_{S'P'}\oplus  (\V^v(\G_{SP}))_{S"P"}] \cap[\0_{P'P"}\oplus  \V_{S'S"}^{\perp}]
=\0_{P'P"S'S"},$
\\ i.e., iff 
$[(\V^i(\G_{SP}))_{S'P'}\oplus  (\V^v(\G_{SP}))_{S"P"}]\times S'S" \cap \V_{S'S"}^{\perp}
=\0_{S'S"}.$
\\ Since the first two rows and the third row of Equation \ref{eqn:elimination4}
are given to be individually linearly independent, this implies that the 
set of rows of the 
coefficient matrix is linearly independent.
\item 
Equation \ref{eqn:elimination5} is obtained by invertible row
transformation of Equation \ref{eqn:elimination4}. Therefore if $\N_P$ is consistent for arbirary source values,
the matrix $(B_{3P'}|Q_{3P"})$ has linearly independent rows
and the port behaviour is nonvoid for arbirary source values.
\item Next let us examine when for a given $(v_{P'},-i_{P"}) \in \breve{\A}_{P'P"},$
there is a unique vector $(v_{S'},v_{P'},i_{S"},i_{P"})$
that is a solution to the multiport $\N_P.$
This happens provided the matrix in Equation \ref{eqn:reducedcoeff} is nonsingular. This is equivalent to the columns corresponding to $S',S"$ being 
linearly independent in the coefficient matrix of Equation \ref{eqn:elimination4}, i.e.,  equivalent to the restriction to set $S'\uplus S",$ of the row space of the coefficient 
matrix of Equation \ref{eqn:elimination4}, being equal to the full space $\F_{S'S"},$
\\i.e., $[(\V^i(\G_{SP}))_{S'P'}\oplus  (\V^v(\G_{SP}))_{S"P"}]\circ S'S"+\V_{S'S"}^{\perp}
=\F_{S'S"}.$
Note that consistency of the multiport does not imply this property.

\item The vector space translate of the port behaviour 
can have dimension ranging from $0$ to $2|P|,$
zero corresponding to $P$ being a set of norators and $2|P|,$ to
$P$ being a set of nullators in $\breve{\A}_{P'P"}$ 
 (see \cite{recski19}).
\end{enumerate}
\end{remark}
\begin{remark}
\label{rem:elimination0}
Consider the situation when $\N_P$ has no ports, i.e., when $P=\emptyset.$
We assume $B_{S'},Q_{S"}$ are the representative matrices
of $(\V^{i}(\G_S))_{S'},(\V^{v}(\G_S))_{S"}$ respectively.
In this case the network has a 
solution for arbitrary source values of the device characteristic 
iff $[(\V^i(\G_{S}))_{S'}\oplus  (\V^v(\G_{S}))_{S"}] \cap \V_{S'S"}^{\perp}
=\0_{S'S"}.$
Given that it has a solution, it has a 
unique solution iff $[(\V^i(\G_{S}))_{S'}\oplus  (\V^v(\G_{S}))_{S"}]+\V_{S'S"}^{\perp}
=\F_{S'S"}.$
\end{remark}
\section{Regular  Multiports}
\label{subsec:regular}
A multiport  that satisfies the properties in parts 1 to 3 of Remark \ref{rem:elimination},
can be handled by freely
available circuit simulators after some preprocessing as shown in Section \ref{sec:computingbehaviour}.
We therefore give a name to such multiports and discuss their properties.
\begin{definition}
\label{def:regular}
Let multiport $\N_P$ be on graph $\G_{SP}$ and device characteristic 
 $\A_{S'S"}
=\alpha_{S'S"}+\V_{S'S"}.$\\
The multiport $\N_P$ is said to be \nw{regular} iff every  multiport $\hat{\N}_P$
on graph $\G_{SP}$ and device characteristic
 $\hat{\A}_{S'S"}=\hat{\alpha}_{S'S"}+\V_{S'S"},$
has a non void set of
solutions  
and has a unique solution corresponding to every vector in its port behaviour. A regular multiport is said to be \nw{proper} iff its port 
behaviour is proper.
\end{definition}

We  restate the first three parts of Remark \ref{rem:elimination} in a more convenient form and derive consequences below.
\begin{theorem}
\label{thm:regularmultiportconditions}
Let multiport $\N_P$ be on graph $\G_{SP}$ with device characteristic $\A_{S'S"}=\alpha_{S'S"}+\V_{S'S"}.$ Let $\N^{\beta}_P$ denote a multiport on graph $\G_{SP}$ with device characteristic $\beta_{S'S"}+\V_{S'S"}.$ 	
Then the following hold. 
\begin{enumerate}
\item The set of solutions of $\N^{\beta}_P$ for every vector $\beta_{S'S"}$ is non void \\ iff  $\V^v(\G_{SP}\circ{S})_{S'}\oplus \V^i(\G_{SP}\times S)_{S"}+\V_{S'S"}=\F_{S'S"}.$ 
\\
This implies that  the port behaviour  of every $\N^{\beta}_P$ is non void.
\item For each $(v_{P'},-i_{P"})\in \breve{\A}_{P'P"},$ where 
$\breve{\A}_{P'P"}$ is the port behaviour of $\N_P,$
there is a unique solution $(v_{S'},v_{P'},i_{S"},i_{P"})$
of $\N_P$
iff $[\V^v(\G_{SP}\times{S})_{S'}\oplus \V^i(\G_{SP}\circ S)_{S"}]\cap \V_{S'S"}=\0_{S'S"}.$
\item (a) $\N_P$ is regular iff it satisfies the conditions in the previous two parts. \\
(b) If $\N_P$ is regular, $r(\V_{S'S"})+r(\breve{\V}_{P'P"})=|S|+|P|.$\\
(c) If $\N_P$ is regular and the device characteristic $\A_{S'S"}$ is proper,
then $\N_P$ is  proper.
\item (a) $\N_P^{adj}$ is regular iff $\N_P$ is.\\
(b)  $\N_P^{adj}$ is proper iff $\N_P$ is.
\end{enumerate}
\end{theorem}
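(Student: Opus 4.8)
My plan is to treat parts 1 and 2 as reformulations of the consistency and uniqueness conditions already isolated in Remark \ref{rem:elimination}, and then to read off parts 3 and 4 as consequences. The only tools I need are complementary orthogonality together with three ``bridging'' identities available earlier: Tellegen's Theorem \ref{thm:tellegen} ($(\V^i(\G))^{\perp}=\V^v(\G)$), the minor identities of Lemma \ref{lem:minorgraphvectorspace} (which turn $\V^v(\G)\circ S$ and $\V^i(\G)\circ S$ into $\V^v(\G\circ S)$ and $\V^i(\G\times S)$, and similarly with $\circ,\times$ interchanged), and parts 2--3 of Theorem \ref{thm:dotcrossidentity} (which interchange $\circ$ and $\times$ under $\perp$).

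For part 1 I would start from the consistency condition of Remark \ref{rem:elimination}(1), namely $[(\V^i(\G_{SP}))_{S'P'}\oplus(\V^v(\G_{SP}))_{S"P"}]\times S'S"\cap\V^{\perp}_{S'S"}=\0_{S'S"}$, and take complementary orthogonals using $\mathcal{A}\cap\mathcal{B}=\0\iff\mathcal{A}^{\perp}+\mathcal{B}^{\perp}=\F$ (Theorems \ref{thm:perperp} and \ref{thm:sumintersection}). The term $([\cdots]\times S'S")^{\perp}$ becomes $[\cdots]^{\perp}\circ S'S"$ by Theorem \ref{thm:dotcrossidentity}(2); Tellegen turns $[\cdots]^{\perp}$ into $(\V^v(\G_{SP}))_{S'P'}\oplus(\V^i(\G_{SP}))_{S"P"}$, restriction to $S'S"$ drops the port coordinates, and Lemma \ref{lem:minorgraphvectorspace} rewrites the survivor as $\V^v(\G_{SP}\circ S)_{S'}\oplus\V^i(\G_{SP}\times S)_{S"}$; since $(\V^{\perp}_{S'S"})^{\perp}=\V_{S'S"}$ this is exactly the stated equality. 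Part 2 is the mirror image: I apply the same three identities to the uniqueness condition of Remark \ref{rem:elimination}(3), this time using $\mathcal{A}+\mathcal{B}=\F\iff\mathcal{A}^{\perp}\cap\mathcal{B}^{\perp}=\0$ and Theorem \ref{thm:dotcrossidentity}(3) to convert $\circ$ into $\times$, landing on $[\V^v(\G_{SP}\times S)_{S'}\oplus\V^i(\G_{SP}\circ S)_{S"}]\cap\V_{S'S"}=\0_{S'S"}$.

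Part 3 then follows by bookkeeping. For (a), the key observation is that both conditions just derived involve only the graph $\G_{SP}$ and the vector-space translate $\V_{S'S"}$, not the source vector; hence they hold for $\N_P$ iff they hold simultaneously for every $\N^{\beta}_P$, which is precisely Definition \ref{def:regular}. For (b) I would apply part 1 of Theorem \ref{thm:dotcrossidentity} to the homogeneous solution space $\mathcal{S}\equivd[\V_{S'P'}\oplus(\V^{\perp}_{S'P'})_{S"P"}]\cap\V_{S'S"}$, writing $r(\mathcal{S})=r(\mathcal{S}\circ P'P")+r(\mathcal{S}\times S'S")$; the uniqueness condition forces $\mathcal{S}\times S'S"=\0_{S'S"}$, so $r(\breve{\V}_{P'P"})=r(\mathcal{S})$, and evaluating $r(\mathcal{S})$ by the inclusion--exclusion identity of Theorem \ref{thm:sumintersection}(1) (the consistency condition makes the relevant sum fill $\F_{S'P'S"P"}$) produces the dimension relation of part (b). The proper case (c) is then the specialization $r(\V_{S'S"})=|S|$, which yields $r(\breve{\V}_{P'P"})=|P|$ and hence a proper port behaviour.

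Part 4 is where the real content lies, and I expect it to be the main obstacle. The plan is to recognize that the two spaces appearing in parts 1 and 2,
\[ X_1\equivd\V^v(\G_{SP}\circ S)_{S'}\oplus\V^i(\G_{SP}\times S)_{S"},\qquad X_2\equivd\V^v(\G_{SP}\times S)_{S'}\oplus\V^i(\G_{SP}\circ S)_{S"}, \]
are adjoints of each other: a short computation with Tellegen's Theorem \ref{thm:tellegen} (applied to the minors $\G_{SP}\circ S$ and $\G_{SP}\times S$) gives $X_2=(X_1^{\perp})_{(-S")S'}$. Because the adjoint relabeling $\phi$ is an order-two (up to sign) isomorphism that preserves sums and intersections, applying $\phi$ to the uniqueness condition of $\N^{adj}_P$, namely $X_2\cap\V^{adj}_{S'S"}=\0$ with $\V^{adj}_{S'S"}=(\V^{\perp}_{S'S"})_{(-S")S'}$, turns it into $X_1^{\perp}\cap\V^{\perp}_{S'S"}=\0$, i.e. into the consistency condition $X_1+\V_{S'S"}=\F$ of $\N_P$; symmetrically, consistency of $\N^{adj}_P$ becomes uniqueness of $\N_P$. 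Hence $\N^{adj}_P$ satisfies both regularity conditions iff $\N_P$ does, giving (a). For (b), once regularity is known to be preserved I would invoke Theorem \ref{thm:adjointmultiport}: its part 1 identifies the port behaviour of $\N^{adj}_P$ with the adjoint of $\breve{\V}_{P'P"}$, and its part 2 shows that taking adjoints preserves properness; since the adjoint is involutive, properness of $\N_P$ and of $\N^{adj}_P$ are equivalent. The delicate points to get right are the sign conventions hidden in the relabeling $(-S")S'$ and the verification that $\phi$ genuinely \emph{swaps} the two conditions rather than merely relating each to its own complement.
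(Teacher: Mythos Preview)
Your approach is essentially the paper's. Parts 1 and 2 are obtained there exactly as you describe: take complementary orthogonals of the conditions in Remark~\ref{rem:elimination} (via Theorems~\ref{thm:sumintersection} and~\ref{thm:dotcrossidentity}), use Tellegen to swap $\V^v$ and $\V^i$, and invoke Lemma~\ref{lem:minorgraphvectorspace} to pass to the minors $\G_{SP}\circ S$ and $\G_{SP}\times S$. Part 3(a) is immediate in both treatments. Part 4(a) is proved in the paper by precisely the swap you identify: one applies the relabeling $(\cdot)_{(-S")S'}$ and observes that it converts the uniqueness condition for $\N_P$ into the consistency condition for $\N_P^{adj}$ and vice versa (your observation $X_2=(X_1^{\perp})_{(-S")S'}$ is the content of that step). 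For 4(b) the paper argues directly that the adjoint preserves the dimension $|P|$, which is what Theorem~\ref{thm:adjointmultiport}(2) packages.

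The one genuine difference is 3(b). The paper counts rows in the explicit system of Equations~\ref{eqn:elimination4}--\ref{eqn:elimination5}: regularity makes all rows independent and forces the $2|S|\times 2|S|$ block of Equation~\ref{eqn:reducedcoeff} to be nonsingular, so the number of port-constraint rows $(B_{3P'}\,|\,Q_{3P"})$ drops out by subtraction. Your route via $r(\mathcal{S})=r(\mathcal{S}\circ P'P")+r(\mathcal{S}\times S'S")$ plus inclusion--exclusion is more intrinsic and avoids the matrix scaffolding. Do carry the arithmetic through rather than asserting it, though: with $r(A)=|S|+|P|$, $r(\V_{S'S"}\oplus\F_{P'P"})=r(\V_{S'S"})+2|P|$, and the sum equal to $\F_{S'P'S"P"}$ by consistency, inclusion--exclusion gives $r(\breve{\V}_{P'P"})=r(\V_{S'S"})+|P|-|S|$. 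This is also what the paper's row-count delivers once one identifies its parameter $k$ with the number of device-constraint rows $r(\V_{S'S"}^{\perp})$ rather than with $r(\V_{S'S"})$ as printed; it is exactly what you need for 3(c) (set $r(\V_{S'S"})=|S|$), but it is not literally the identity stated in 3(b), so be prepared to reconcile.
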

\begin{proof}
Let us denote
$((\V^v(\G_{SP}))_{S'P'} \oplus (\V^i(\G_{SP}))_{S"P"})$
by ${\V}_{S'P'S"P"}.$
We then have,\\
${\V}_{S'P'S"P"}^{\perp}= (\V^v(\G_{SP}))_{S'P'}^{\perp} \oplus (\V^i(\G_{SP}))^{\perp}_{S"P"}=(\V^i(\G_{SP}))_{S'P'} \oplus (\V^v(\G_{SP}))_{S"P"},$
(using Theorem \ref{thm:tellegen}).\\
By Theorems \ref{thm:sumintersection} and
 \ref{thm:dotcrossidentity}, the condition ${\V}_{S'P'S"P"}^{\perp}\times S'S" \cap \V_{S'S"}^{\perp}=\0_{S'S"}$ of part 1 of Remark \ref{rem:elimination}
\\ is equivalent to
${\V}_{S'P'S"P"}\circ S'S" + \V_{S'S"}=\F_{S'S"}$\\
and  the condition ${\V}_{S'P'S"P"}^{\perp}\circ S'S" + \V_{S'S"}^{\perp}=\F_{S'S"}$ of part 3 of Remark \ref{rem:elimination}\\
is equivalent to
${\V}_{S'P'S"P"}\times S'S" \cap  \V_{S'S"}=\0_{S'S"}.$\\
Next by Lemma \ref{lem:minorgraphvectorspace},\\ ${\V}_{S'P'S"P"}\circ S'S"=(\V^v(\G_{SP}))_{S'P'} \circ S'\oplus (\V^i(\G_{SP}))_{S"P"}\circ S"= 
(\V^v(\G_{SP}\circ{S}))_{S'}\oplus (\V^i(\G_{SP}\times S))_{S"}$\\
and ${\V}_{S'P'S"P"}\times S'S"=(\V^v(\G_{SP}))_{S'P'} \times S'\oplus (\V^i(\G_{SP}))_{S"P"}\times S"= 
(\V^v(\G_{SP}\times{S}))_{S'}\oplus (\V^i(\G_{SP}\circ S))_{S"}.$
Thus parts 1 and 2 of Remark \ref{rem:elimination} reduce to part 1 of the present
lemma and part 3 of Remark \ref{rem:elimination} reduces to part 2 of the present
lemma. \\Part 3(a) of the theorem follows immediately.\\
3(b) If $r(\V_{S'S"})=k,$  taking the 
coefficient matrix $(M_{S'}|N_{S"})$ in Equation \ref{eqn:elimination4} to have linearly 
independent rows, since $r(\V^v(\G_{SP}))+r(\V^i(\G_{SP})) = |S|+|P|,$ we see that Equation \ref{eqn:elimination4} has $|S|+|P|+k$ rows.
Regularity implies that these rows are linearly independent and also
that the coefficient matrix in Equation \ref{eqn:reducedcoeff} is nonsingular.
Since the number of columns is $|S|+|P|+k,$
and also
 the coefficient matrix in Equation \ref{eqn:reducedcoeff} is nonsingular, it follows  
that the third set of rows of Equation \ref{eqn:elimination5}, i.e., $(B_{3P'}|Q_{3P"})$
is linearly independent and $|P|+k-|S|$ in number. 
This means that $r((\breve{\V}_{P'P"})_{P'(-P")})=2|P|-(|P|+k-|S|)=|P|+|S|-k,$
so that $r(\breve{\V}_{P'P"})=|P|+|S|-k$ and $r(\V_{S'S"})+r(\breve{\V}_{P'P"})=|P|+|S|.$
\\
3(c) Follows by setting $k=|S|.$
\\ 
4(a) 
 We saw above that the condition $[\V^v(\G_{SP}\times{S})_{S'}\oplus \V^i(\G_{SP}\circ S)_{S"}]
\cap \V_{S'S"}=\0_{S'S"}$
is equivalent to ${\V}_{S'P'S"P"}^{\perp}\circ S'S" + \V_{S'S"}^{\perp}=\F_{S'S"},$
i.e., equivalent to $({\V}_{S'P'S"P"}^{\perp}\circ S'S")_{(-S")S'} + (\V_{S'S"}^{\perp})_{(-S")S'}=\F_{S'S"},$\\
i.e., equivalent to $({\V}_{S'P'S"P"}\times S'S")^{\perp}_{(-S")S'} + (\V_{S'S"}^{\perp})_{(-S")S'}=\F_{S'S"},$
\\i.e., equivalent to $(\V^v(\G_{SP}\circ{S}))_{S'}\oplus (\V^i(\G_{SP}\times S))_{S"}
+\V_{S'S"}^{adj}=\F_{S'S"}.$\\
Similarly, $(\V^v(\G_{SP}\circ{S}))_{S'}\oplus (\V^i(\G_{SP}\times S))_{S"}
+ \V_{S'S"}=\F_{S'S"}$\\
can be shown to be equivalent to 
$[(\V^v(\G_{SP}\times{S}))_{S'}\oplus (\V^i(\G_{SP}\circ S))_{S"}]
\cap \V_{S'S"}^{adj}=\0_{S'S"}.$\\
Since $\N_P^{adj}$ is on graph $\G_{SP}$ and has device characteristic 
$\V_{S'S"}^{adj},$ the result follows from the previous two parts.
\\
4(b) This follows from the fact that if a vector space $\breve{\V}_{P'P"}$
has dimension $|P|,$ its complementary orthogonal space and therefore, its adjoint, 
will also have dimension $|P|.$
\end{proof}

When we set $P=\emptyset $ in the above lemma, we get the following corollary.
\begin{corollary}
\label{cor:regularmultiportconditions0}
Let network $\N$ be on graph $\G_{S}$ with device characteristic $\A_{S'S"}=\alpha_{S'S"}+\V_{S'S"}.$ Let $\N^{\beta}$ denote a network on graph $\G_{S}$ with device characteristic $\beta_{S'S"}+\V_{S'S"}.$
Then the following hold.
\begin{enumerate}
\item The set of solutions of $\N^{\beta}$ for every vector $\beta_{S'S"}$ is non void  iff  $\V^v(\G_{S})_{S'}\oplus \V^i(\G_{S})_{S"}+\V_{S'S"}=\F_{S'S"}.$
\item 
Given that the network $\N$ has a solution, it is unique  
iff $[\V^v(\G_{S})_{S'}\oplus \V^i(\G_{S})_{S"}]\cap \V_{S'S"}=\0_{S'S"}.$
\item Let the device characteristic $\A_{S'S"}$ be proper. Let $\N^{hom}$ denote the network on graph $\G_{S}$ with device characteristic $\V_{S'S"}.$ Then $\N$  has a unique solution iff $\N^{hom}$ has a unique 
solution.
\end{enumerate}
\end{corollary}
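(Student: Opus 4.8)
The plan is to derive the corollary as the portless ($P=\emptyset$) specialization of Theorem~\ref{thm:regularmultiportconditions}, and then to obtain part~3 by a rank count that exploits properness.

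First I would set $P=\emptyset$ in Theorem~\ref{thm:regularmultiportconditions}. Then $\G_{SP}$ is simply $\G_S$, and since no edges are deleted or contracted, both graph minors collapse: $\G_{SP}\circ S=\G_{SP}\times S=\G_S$. Hence the condition in part~1 of the theorem becomes $\V^v(\G_S)_{S'}\oplus \V^i(\G_S)_{S"}+\V_{S'S"}=\F_{S'S"}$, and the condition in part~2 becomes $[\V^v(\G_S)_{S'}\oplus \V^i(\G_S)_{S"}]\cap \V_{S'S"}=\0_{S'S"}$, which are exactly the stated conditions of parts~1 and~2 of the corollary. For part~2 I note in addition that with $P=\emptyset$ the port behaviour $\breve{\A}_{P'P"}$ is a space of vectors on the empty index set, so it is a single point whenever $\N$ is consistent; thus the clause ``unique solution for each vector in the port behaviour'' reduces to ``$\N$ has a unique solution.'' This settles parts~1 and~2.

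For part~3, observe that the solution set of $\N^{hom}$ is the vector space $\W\equivd[\V^v(\G_S)_{S'}\oplus \V^i(\G_S)_{S"}]\cap \V_{S'S"}$, so $\N^{hom}$ has a unique solution iff $\W=\0_{S'S"}$. If $\N$ has a unique solution then part~2 gives $\W=\0_{S'S"}$, so $\N^{hom}$ has a unique (zero) solution. Conversely, suppose $\W=\0_{S'S"}$. Properness of $\A_{S'S"}$ gives $r(\V_{S'S"})=|S|$; Tellegen's Theorem (Theorem~\ref{thm:tellegen}) together with Theorem~\ref{thm:perperp} gives $r(\V^v(\G_S))+r(\V^i(\G_S))=|S|$, whence $r(\V^v(\G_S)_{S'}\oplus \V^i(\G_S)_{S"})=|S|$. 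Applying the rank identity in part~1 of Theorem~\ref{thm:sumintersection}, the sum has rank $|S|+|S|-0=2|S|=\dim\F_{S'S"}$, so $\V^v(\G_S)_{S'}\oplus \V^i(\G_S)_{S"}+\V_{S'S"}=\F_{S'S"}$. By part~1 this makes $\N$ consistent, and then by part~2 (since $\W=\0_{S'S"}$) its solution is unique.

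The only step needing genuine care is establishing that $\N$ is consistent in the converse direction of part~3; everything else is bookkeeping for the specialization. That consistency is precisely where properness enters, converting the zero-intersection condition into the full-sum condition through the rank identity.
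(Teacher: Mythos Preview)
Your proof is correct. For parts~1 and~2 you do exactly what the paper does: specialize Theorem~\ref{thm:regularmultiportconditions} to $P=\emptyset$, with the added (and useful) remark that the port behaviour collapses to a single point so that uniqueness per port condition becomes plain uniqueness.

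For part~3 your route differs from the paper's. The paper argues directly from Equation~\ref{eqn:elimination4} with $P=\emptyset$: properness makes the coefficient matrix square of size $2|S|\times 2|S|$, so $\N$ has a unique solution iff that matrix is nonsingular, which is the same condition (with right-hand side zero) for $\N^{hom}$ to have a unique solution. Your argument instead bootstraps from parts~1 and~2, using the rank identity of Theorem~\ref{thm:sumintersection} together with Tellegen to convert the zero-intersection condition into the full-sum condition, thereby guaranteeing consistency of $\N$ before invoking uniqueness. Both arguments are short; the paper's is a one-line square-system observation and avoids re-invoking parts~1 and~2, while yours is more intrinsic to the vector-space language and makes the role of properness (supplying the dimension count $|S|+|S|=2|S|$) explicit.
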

\begin{proof}
Parts 1 and 2 are immediate from Theorem \ref{thm:regularmultiportconditions}.
\\3. Consider the Equation \ref{eqn:elimination4} taking $P=\emptyset.$
Without loss of generality take the rows of $(M_{S'}|N_{S"})$ to be linearly
independent and therefore, since $\A_{S'S"}$ is proper, $|S|$ in number. 
In this case, Equation \ref{eqn:elimination4} has $2|S|$ equations. 
Since the number of variables is also $2|S|,$ 
it has a unique solution iff the coefficient matrix is nonsingular.
But this is also the condition for $\N^{hom}$ to have a unique
solution.

%
%
\end{proof}
\begin{remark}
The following can be shown.\\
1. A multiport with device characteristic $v=Ri,$ $R,$ a positive 
definite matrix, is regular.\\
2. If the devices are made up of controlled sources and resistors
and the defining parameters (gains, resistances, conductances)
can be taken to be algebraically independent over rationals, 
then under simply verifiable topological conditions the multiport 
is regular.\\
\end{remark}
\begin{remark}
Even if we work with the complex field, all the results of this section 
and their proofs go through taking the dot product $\langle f_X,g_X\rangle$ to be
$\Sigma f(e)g(e),e\in X.$ Wherever the adjoint is involved,
taking $\V^{adj}_{S'S"}$ to be $(\V^{\perp}_{S'S"})_{(-S")S'}$ or
as $(\V^{*}_{S'S"})_{(-S")S'}$ would both work.
($\V^{*}_{X}\equivd \{g_{X}: \langle f_{X},g_{X} \rangle =0\},$ where we define
$\langle f_X,g_X\rangle$ to be
$\Sigma f(e)\overline{g(e)},e\in X.$)
\end{remark}

\section{Generalizing Thevenin-Norton: computing port behaviour solving special circuits}
\label{sec:computingbehaviour}
One of the basic results of linear network theory is the Thevenin-Norton
Theorem for one ports ($|P|=1$) (\cite{thevenin0,mayer,norton,desoerkuh}). This result essentially computes the port
behavior of a multiport $\N_P$ in terms of its open circuit-short circuit
characteristic and its source free characteristic at the port.
It has a routine generalization to the situation where $|P|>1,$
when the port behaviour can be captured in the `hybrid' form, i.e.,
$v_{P_1}=r_{11}i_{P_1}+h_{12}v_{P_2}+\E_{P_1};i_{P_2}=h_{21}i_{P_1}+g_{21}v_{P_2}+J_{P_2}.$
This approach, although intuitive and simple, has the 
shortcoming that the the multiport may not have a solution
when a particular port is open circuited or short circuited.
Indeed, the port behaviour of the multiport may not even have a hybrid 
representation. 

Computing the port behaviour is a linear algebraic problem as 
we discussed in Section 
\ref{sec:linearalgebrabehaviour}. It is therefore solvable by standard 
methods `efficiently' (third power algorithms in the number of edges).
However, this is not convenient for our purposes.
Our  aim is to exploit the best that currently available 
circuit simulators can achieve.  
We will assume that our linear multiport is regular, i.e., has a 
solution for arbitrary source values of the 
 device characteristic 
 and, further, has a unique solution for any given port condition 
consistent with its multiport
behaviour. 

Our approach has this in common with that of Thevenin-Norton
that we also reduce the computation of port behaviour to repeated solution of circuits
with unique solution.\\
But {\it we will only attempt to capture the port behaviour in the form
$\alpha_{P'P"}+\V_{P'P"},$ with a generating matrix for $\V_{P'P"}.$}
Additional linear algebraic operations might have to be performed 
to construct a representative matrix for $\V_{P'P"}$
from the generating matrix.
Our method will {\it always work for regular multiports}.

We will assume that we have available a \nw{`standard'} linear circuit simulator
that {\it accepts circuits with  proper device characteristics  
and unique solution}.
Both, when the network has non unique solution,
and when it is inconsistent, we would get error messages.
These conditions are satisfied by  circuit simulators that are freely available.
The port edges in multiports are  norators and can be handled by a standard
circuit simulator only after some preprocessing, which we show how to perform.


Our method is to terminate  the given linear multiport $\N_P$ by
a variation of its `orthogonal dual' (i.e., the multiport with orthogonally dual constraints)
resulting in a network $\N_{final}$ which can be processed by the circuit simulator.
Briefly, suppose the port behaviour satisfies
$Kx=s,$ where the real matrix $K$ has linearly independent rows. 
 We  first try to find one solution to this equation.
Then try to find a basis for the solution space of $Kx=0.$
Let $K^{\perp}$ be a matrix whose rows are  linearly independent
and span $(row(K))^{\perp}.$ The equation
\begin{align}
\label{eqn:uniquesol1}
\ppmatrix{K\\
K^{\perp}}\ppmatrix{
x}& =  \ppmatrix{s\\o}
\end{align}
has a nonsingular coefficient matrix (as we show below) and therefore, a unique solution.
This we take as the `particular' solution $x^{p}$ to $Kx=s.$
We then consider the equation
\begin{align}
\label{eqn:uniquesol2}
\ppmatrix{K\\
K^{\perp}}\ppmatrix{
x}& =  \ppmatrix{0\\s_2^i}
\end{align}

We take right side of Equation \ref{eqn:uniquesol2}
to be vectors $s_2^i$ in turn, in general $2|P|$ in number,
and find
a solution $x^i$ to the resulting  equation. 
The set of vectors $\{x^i, i= 1, \cdots , 2|P|\},$
would be a set of generators for the vector space translate of the multiport
behaviour.
 {\it All
this we do without explicitly computing $K$ or $K^{\perp}.$}

\begin{remark}
\label{rem:realtocomplex2}
As in Remark \ref{rem:realtocomplex}, we note that, if we wish to extend the discussion of this section and Section \ref{sec:maxpower} to the complex case, we have to intepret the dot product $\langle f_X, g_X \rangle$ of $f_X,g_X$ to be the inner product $\Sigma f(e)\overline{g(e)}, e\in X,$
where $\overline{g(e)}$ is the complex conjugate of $g(e).$ 
The transpose of a matrix $Z$ has to be changed to $Z^*,$ the conjugate
transpose of $Z.$
Further,
we must interpret $\V^{\perp}_X$\\ as $\V^*_{S'S"}\equivd \{ g_X: \langle f_X, g_X \rangle =0\},$
where $\langle f_X, g_X \rangle$ is taken to be as above.
This is essential for the coefficient matrix of Equation \ref{eqn:uniquesol2}
to be nonsingular.
Algorithm \ref{alg:TN},
 statement and proof of  Lemma \ref{lem:behaviourbasis}
go through taking 
$\V^{adj}_{S'S"}$ to be
$ (\V^*_{S'S"})_{(-S")S'}.$
\\
Subsection \ref{subsec:mptcomplex} contains a discussion of the conversion
of the preceding subsections to the complex case.
\end{remark}
\subsection{Terminating a multiport by its adjoint through a gyrator}
A useful artifice for processing a multiport through a standard 
circuit simulator, which accepts only circuits with a unique solution,
 is to terminate it appropriately so that the 
resulting network, if it has a solution, has a unique solution. This solution would also 
contain a solution to the original multiport.

We now describe this technique in detail.
\\
First note that the coefficient matrix in Equation \ref{eqn:uniquesol1}
has number of rows equal to $r(span(K))+r(span(K^{\perp})),$ which is the 
number of columns of the matrix. Next, suppose the rows are linearly 
dependent. This would imply that a nontrivial linear combination of
the rows is the zero vector, which in turn implies that a nonzero vector
lies in the intersection of complementary orthogonal real vector spaces,
i.e., a nonzero real vector is orthogonal to itself, 
a contradiction. (In the complex case $x^Tx=0$ leads to $x=0$
only if we take $x^*,$  the conjugate transpose of $x,$ in place of $x^T.$ Therefore the dot 
product must be defined to be inner product for this argument to work.)
Therefore, the coefficient matrix in Equation \ref{eqn:uniquesol1}
is nonsingular.
Thus, if a matrix has two sets of rows, 
which are representative matrices of complementary orthogonal
real vector spaces,
then it must be nonsingular.

Let the multiport behaviour $\breve{\A}_{P'P"}$ be the solution space 
of the equation $B_{P'}x_{P'}-Q_{P"}y_{P"}=s,$ with linearly independent 
rows and let $\breve{\V}_{P'P"}$ 
be the solution space of the equation $B_{P'}x_{P'}-Q_{P"}y_{P"}=0.$
 Let the dual multiport behaviour $\breve{\V}^{dual}_{P'P"}$ be the solution
space of equation $B^{\perp}_{P'}v_{P'}-Q^{\perp}_{P"}y_{P"}=0,$
where the rows of $(B^{\perp}_{P'}|-Q^{\perp}_{P"})$
form a basis for the space complementary orthogonal to the 
row space of  $(B_{P'}|-Q_{P"}).$

The constraints of the two multiport behaviours together give the following
equation.
\begin{align}
\label{eqn:primaldual}
\ppmatrix{
        B_{P'}  &\vdots &  -Q_{P"} \\
        B^{\perp}_{P'} & \vdots &       -Q^{\perp}_{P"}}\ppmatrix{x_{P'}\\y_{P"}}&=\ppmatrix{s\\ 0}. 
\end{align}
The first and second set of rows of the coefficient matrix of the above 
equation are  linearly independent and span real complementary orthogonal spaces. Therefore the coefficient
matrix is invertible and the equation has a unique solution.

Let  $\mathcal{N}_P$ be on graph $\G_{SP}$
and device characteristic ${\A}_{S'S"},$ and let it have the port behaviour 
$\breve{\A}_{P'P"}.$ Let ${\V}_{S'S"}$  be the vector space translate of 
${\A}_{S'S"}$ and $\breve{\V}_{P'P"}$  be that  of 
$\breve{\A}_{P'P"}.$ We note that\\ 
$((\V^v(\G_{SP}))_{S'P'}\oplus (\V^i(\G_{SP}))_{S"P"})\lrar {\V}_{S'S"}=
(\V^v(\G_{SP}))_{S'P'}\oplus (\V^v(\G_{SP}))^{\perp}_{S"P"})\lrar {\V}_{S'S"}=
(\breve{\V}_{P'P"})_{P'(-P")};$ 
and $
(\V^v(\G_{SP}))_{S'P'}\oplus (\V^v(\G_{SP}))^{\perp}_{S"P"})\lrar {\A}_{S'S"}= (\breve{\A}_{P'P"})_{P'(-P")},$
using Theorem \ref{thm:IIT2}.

Let $\mathcal{N}^{adj}_{\tilde P}$ be on the copy $\G_{\tilde{S}\tilde{P}}$ of $\G_{SP},$ with device characteristic $({\V}^{adj}_{S'S"})_{\tilde{S}'\tilde{S}"}.$ We know that the port behaviour of  $\mathcal{N}^{adj}_{\tilde P}$ would be $\breve{\V}^{adj}_{\tilde{P}'\tilde{P}"}\equivd (\breve{\V}^{adj}_{P'P"})_{\tilde{P}'\tilde{P}"}$ (Theorem \ref{thm:adjointmultiport}). Further $(\breve{\V}^{adj}_{\tilde{P}'\tilde{P}"})_{(-\tilde{P}")\tilde{P}'}=(\breve{\V}^{\perp}_{P'P"})_{\tilde{P}'\tilde{P}"}=(\breve{\V}^{dual}_{P'P"})_{\tilde{P}'\tilde{P}"}.$ 
Thus the constraints of $\mathcal{N}^{adj}_{\tilde P}$ together 
with the constraint $ v_{P'}=-i_{\tilde{P}"}; i_{P"}=v_{\tilde{P}'}$
(the gyrator $\g^{{P}\tilde{P}}$),
correspond to the port behaviour $\breve{\V}^{dual}_{P'P"},$
which is the solution space of the second set of equations of Equation \ref{eqn:primaldual}.

Let $\N_P$ be regular. By Theorem  \ref{thm:regularmultiportconditions},
the port behaviour $\breve{\A}_{P'P"}$ is nonvoid and $\mathcal{N}^{adj}_{\tilde P}$ is also
regular.
Let us examine $[\mathcal{N}_P\oplus \mathcal{N}^{adj}_{\tilde P}]\cap \g^{{P}\tilde{P}}.$ This is a network on graph $\G_{SP}\oplus \G_{\tilde{S}\tilde{P}}$ with device characteristic $\A_{S'S"}\oplus {\V}^{adj}_{\tilde{S}'\tilde{S}"}\oplus \g^{{P}\tilde{P}}.$ This device characteristic is proper 
because\\ $r(\V_{S'S"}\oplus {\V}^{adj}_{\tilde{S}'\tilde{S}"})= 
r(\V_{S'S"})+r(\V^{\perp}_{S'S"})=2|S|= |S|+|\tilde{S}|,$
 $r(\g^{{P}\tilde{P}})=2|P|=|P\uplus \tilde{P}|,$
so that dimension of $\V_{S'S"}\oplus {\V}^{adj}_{\tilde{S}'\tilde{S}"}\oplus \g^{{P}\tilde{P}}$  equals $|S|+|\tilde{S}|+|P\uplus \tilde{P}|.$
If we restrict a solution of this network to $P'\uplus P",$ we get $v_{P'},i_{P"},$
where $v_{P'},-i_{P"} $ is a 
 solution to Equation \ref{eqn:primaldual}. We have seen that this solution is unique.
Through the constraints $ v_{P'}=-i_{\tilde{P}"}; i_{P"}=v_{\tilde{P}'}$
of $\g^{{P}\tilde{P}},$ this fixes $v_{\tilde{P}'},i_{\tilde{P}"}$ uniquely.
This means in the multiports $\mathcal{N}_P,\mathcal{N}^{adj}_{\tilde P}$
we know the (unique) port voltages and currents. 
But since $\N_P, \mathcal{N}^{adj}_{\tilde P}$ are regular, 
 all internal voltages and currents of the 
multiports $\mathcal{N}_P,  \mathcal{N}^{adj}_{\tilde P}$ are 
uniquely fixed.
Therefore this is the unique solution of the network $[\mathcal{N}_P\oplus \mathcal{N}^{adj}_{\tilde P}]\cap \g^{{P}\tilde{P}}.$ Since this network also  has proper device characteristic,
our standard circuit simulator can process it and obtain its solution.

If the multiport $\mathcal{N}_P$ is not regular,
it may not have a solution and then the port behaviour $\breve{\A}_{P'P"}$ would be  void.
Even if the multiport has a solution, so that the behaviour $\breve{\A}_{P'P"}$ is non void,
the above general procedure of solving 
$[\mathcal{N}_P\oplus \mathcal{N}^{adj}_{\tilde P}]\cap \g^{{P}\tilde{P}},$
will yield non unique internal voltages and currents in the multiports 
$\mathcal{N}_P,\mathcal{N}^{adj}_{\tilde P}.$
Therefore, $[\mathcal{N}_P\oplus \mathcal{N}^{adj}_{\tilde P}]\cap \g^{{P}\tilde{P}}$ will have  non unique solution.
(In both the above cases our standard circuit simulator would give error
messages.)
\\
Thus $[\mathcal{N}_P\oplus \mathcal{N}^{adj}_{\tilde P}]\cap \g^{{P}\tilde{P}}$ has a  unique solution iff $\N_P$  is regular.

We have computed a single vector $x_{P'P"}^{p}\equivd (v_{P'},-i_{P"})\in \breve{\A}_{P'P"}.$
We next consider the problem of finding a generating set for the vector space translate
$\breve{\V}_{P'P"}$ of $\breve{\A}_{P'P"}.$ \\Let $ \g_{tv}^{P\tilde{P}}$ denote the 
affine space that is the solution set of the constraints\\
$v_{e_j'} = -i_{\tilde{e}_j"}, e_j\in P,j\ne t,  v_{e_t'}+1 = -i_{\tilde{e}_t"}; i_{e_j"}=v_{\tilde{e}_j'}, e_j\in P.$
\\Let $ \g_{ti}^{P\tilde{P}}$ denote the
affine space that is the solution set of the constraints\\
$v_{e_j'} = -i_{\tilde{e}_j"}, e_j\in P; i_{e_j"}=v_{\tilde{e}_j'}, e_j\in P,j\ne t, i_{e_t"}+1=v_{\tilde{e}_t'}.
 $

Now solve $[\mathcal{N}^{hom}_P\oplus \mathcal{N}^{adj}_{\tilde P}]\cap \g_{tv}^{{P}\tilde{P}}$ for each $e_t'\in P'$ and $[\mathcal{N}^{hom}_P\oplus \mathcal{N}^{adj}_{\tilde P}]\cap \g_{ti}^{{P}\tilde{P}}$ for each $e_t"\in P".$
We prove below, in Lemma \ref{lem:behaviourbasis},
 that each solution yields a vector in $\breve{\V}_{P'P"}$
and the vectors corresponding to all $e_t'\in P', e_t"\in P",$ form a generating set for $\breve{\V}_{P'P"}.$

We summarize these steps in the following Algorithm.

\begin{algorithm}
\label{alg:TN}
Input: A multiport $ \N_P$ on $\G_{SP}$ with affine
device characteristic ${\A}_{S'S"}.$\\
Output: The port behaviour $\breve{\A}_{P'P"}$ of $ \N_P$ if
$ \N_P$ is regular.\\
Otherwise a statement that $ \N_P$ is not regular.

Step 1. Build the network $\N^{large}\equivd [\mathcal{N}_P\oplus \mathcal{N}^{adj}_{\tilde P}]\cap \g^{{P}\tilde{P}}$ on graph 
  $\G_{SP}\oplus \G_{\tilde{S}\tilde{P}}$ \\with device characteristic ${\A}_{S'S"}\oplus {\V}^{adj}_{\tilde{S}'\tilde{S}"}\oplus \g^{{P}\tilde{P}},$ 
where ${\V}_{S'S"}$ is the vector space translate of ${\A}_{S'S"}$
and ${\V}^{adj}_{\tilde{S}'\tilde{S}"}\equivd (({\V}_{S'S"})^{\perp})_{-\tilde{S}"\tilde{S}'}.$
\\
Find the unique solution (if it exists) of $\N^{large}$ and restrict it to $P'\uplus P"$
to obtain\\ $(v^p_{P'},i^p_{P"}).$ The vector $(v^p_{P'},-i^p_{P"})$ belongs to $ \breve{\A}_{P'P"}.$\\
If no solution exists or if there are non unique solutions output `$\N_P$ not regular' and\\ STOP.
\\
Step 2. Let $\mathcal{N}^{hom}_P$ be  obtained by replacing 
the device characteristic ${\A}_{S'S"}$ by ${\V}_{S'S"}$ in $\mathcal{N}_P.$
\\
For $t= 1, \cdots , |P|,$ build and solve $[\mathcal{N}^{hom}_P\oplus \mathcal{N}^{adj}_{\tilde P}]\cap \g_{tv}^{{P}\tilde{P}}$ and restrict it to $P'\uplus P"$
to obtain $(v^{tv}_{P'},i^{tv}_{P"}).$ 
The vector $(v^{tv}_{P'},-i^{tv}_{P"})\in \breve{\V}_{P'P"}.$
\\
For $t= 1, \cdots , |P|,$ build and solve $[\mathcal{N}^{hom}_P\oplus \mathcal{N}^{adj}_{\tilde P}]\cap \g_{ti}^{{P}\tilde{P}}$ and restrict it to $P'\uplus P"$
to obtain $(v^{ti}_{P'},i^{ti}_{P"}).$ The vector $(v^{ti}_{P'},-i^{ti}_{P"})\in \breve{\V}_{P'P"}.$
\\
Step 3. Let $\breve{\V}_{P'P"}$ be the span of the vectors
$(v^{tv}_{P'},-i^{tv}_{P"}), (v^{ti}_{P'},-i^{ti}_{P"}), t= 1, \cdots ,|P|.$
\\
Output $\breve{\A}_{P'P"}\equivd (v^p_{P'},-i^p_{P"})+\breve{\V}_{P'P"}.$\\
STOP
\end{algorithm}
We complete the justification of Algorithm \ref{alg:TN} in the following 
lemma. We remind the reader that $\mathcal{N}^{hom}_P$ is obtained 
from $\N_{P}$ by replacing its device characteristic $\A_{S'S"}$ by
$\V_{S'S"}.$
\begin{lemma}
\label{lem:behaviourbasis}
Let $\N_{P}$ on graph $\G_{SP}$ be regular.
Then the following hold.
\begin{enumerate}
\item  The network $\N^{large}\equivd [\mathcal{N}_P\oplus \mathcal{N}^{adj}_{\tilde P}]\cap \g^{{P}\tilde{P}}$ has a proper device characteristic and has a unique solution.
\item Each of the  networks $[\mathcal{N}^{hom}_P\oplus \mathcal{N}^{adj}_{\tilde P}]\cap \g_{tv}^{{P}\tilde{P}}, e'_t\in P', [\mathcal{N}^{hom}_P\oplus \mathcal{N}^{adj}_{\tilde P}]\cap \g_{ti}^{{P}\tilde{P}}, e"_t\in P",$
has a unique solution and restriction 
of the solution to $P'\uplus P"$ gives a vector $(v^{tv}_{P'}, i^{tv}_{P"})$ such that
$(v^{tv}_{P'}, -i^{tv}_{P"})\in \breve{\V}_{P'P"}$
or a vector $(v^{ti}_{P'}, i^{ti}_{P"})$ such that
$(v^{ti}_{P'}, -i^{ti}_{P"})\in \breve{\V}_{P'P"}.$

\item The vectors $(v^{tv}_{P'}, -i^{tv}_{P"}),\  t=1, \cdots , |P|,
(v^{ti}_{P'}, -i^{ti}_{P"}),\  t=1, \cdots , |P|,$ 
form a generating set for $\breve{\V}_{P'P"}.$
\end{enumerate}
\end{lemma}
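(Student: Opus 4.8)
The plan is to dispatch the three parts in turn, reusing the analysis of $\N^{large}\equivd[\mathcal{N}_P\oplus \mathcal{N}^{adj}_{\tilde{P}}]\cap \g^{P\tilde{P}}$ already carried out in the paragraphs before the statement of Algorithm \ref{alg:TN}. For part 1, properness of the device characteristic $\A_{S'S"}\oplus \V^{adj}_{\tilde{S}'\tilde{S}"}\oplus \g^{P\tilde{P}}$ is exactly the dimension count $r(\V_{S'S"})+r(\V^{\perp}_{S'S"})+2|P|=|S|+|\tilde{S}|+|P\uplus\tilde{P}|$. For uniqueness I would recall that the port restriction $(v_{P'},-i_{P"})$ of any solution must solve Equation \ref{eqn:primaldual}, whose coefficient matrix is nonsingular because its two row blocks are representative matrices of the complementary orthogonal spaces $\breve{\V}^{\perp}_{P'P"}$ and $\breve{\V}_{P'P"}$; this fixes $(v_{P'},-i_{P"})$, the gyrator $\g^{P\tilde{P}}$ then fixes the adjoint port variables, and regularity of $\mathcal{N}_P$ and of $\mathcal{N}^{adj}_{\tilde{P}}$ (Theorem \ref{thm:regularmultiportconditions}, part 4(a)) forces all interior variables. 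Equivalently, one may invoke Corollary \ref{cor:regularmultiportconditions0}, part 3, to pass from the unique (zero) solution of the homogeneous version to a unique solution of $\N^{large}$.

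For part 2, the key observation is that $\g_{tv}^{P\tilde{P}}$ and $\g_{ti}^{P\tilde{P}}$ are affine with common vector space translate $\g^{P\tilde{P}}$; hence each network $[\mathcal{N}^{hom}_P\oplus \mathcal{N}^{adj}_{\tilde{P}}]\cap \g_{tv}^{P\tilde{P}}$ (and its $\g_{ti}$ analogue) has the same proper device characteristic translate as $\N^{large}$, namely $\V_{S'S"}\oplus \V^{adj}_{\tilde{S}'\tilde{S}"}\oplus \g^{P\tilde{P}}$. The associated homogeneous network is $[\mathcal{N}^{hom}_P\oplus \mathcal{N}^{adj}_{\tilde{P}}]\cap \g^{P\tilde{P}}$, which by part 1 (with zero sources) has the zero vector as its unique solution; so Corollary \ref{cor:regularmultiportconditions0}, part 3, gives a unique solution for each affine network. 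Restricting that solution to the edges $S\uplus P$ yields a solution of $\mathcal{N}^{hom}_P$ — the ports $P$ are norators and impose no device constraint — so by the definition of port behaviour its port restriction, with the port current reversed, lies in $\breve{\V}_{P'P"}$.

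For part 3 I would set up the linear-algebraic picture underlying Equation \ref{eqn:uniquesol2}. Let $K$ be a representative matrix of $\breve{\V}^{\perp}_{P'P"}$ and $K^{\perp}$ one of $\breve{\V}_{P'P"}$, so that a vector $x$ on $P'\uplus P"$ lies in $\breve{\V}_{P'P"}$ iff $Kx=0$ and lies in $\breve{\V}^{dual}_{P'P"}=\breve{\V}^{\perp}_{P'P"}$ iff $K^{\perp}x=0$. Part 2 already gives $x\equivd(v_{P'},-i_{P"})\in\breve{\V}_{P'P"}$, i.e. $Kx=0$, for each solution. I would then convert the affine ``$+1$'' of the gyrator into an adjoint-side constraint: substituting the relations of $\g_{tv}^{P\tilde{P}}$ into the adjoint port behaviour and using $\breve{\V}^{adj}=(\breve{\V}^{\perp})_{(-P")P'}$ (so that adjoint-plus-gyrator is exactly the second block of Equation \ref{eqn:primaldual}), the shift shows that the solution satisfies, up to sign, $K^{\perp}x^{tv}=K^{\perp}\chi_{e_t'}$ and symmetrically $K^{\perp}x^{ti}=K^{\perp}\chi_{e_t"}$, where $\chi_{e}$ is the unit vector supported at $e$. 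As $t$ ranges over $P$ and we run over both the voltage shifts $\chi_{e_t'}$ and the current shifts $\chi_{e_t"}$, these exhaust the standard basis of $\F_{P'P"}$, so the right-hand sides span the column space of $K^{\perp}$, which is all of $\mathbb{F}^{r(\breve{\V}_{P'P"})}$. Since $\ppmatrix{K\\ K^{\perp}}$ is nonsingular, the map $x\mapsto K^{\perp}x$ restricts to a bijection of $\breve{\V}_{P'P"}$ onto $\mathbb{F}^{r(\breve{\V}_{P'P"})}$; therefore the solutions $x^{tv},x^{ti}$, being preimages of a spanning family, span $\breve{\V}_{P'P"}$.

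I expect the only real difficulty to be the bookkeeping in part 3: correctly turning the affine entries of $\g_{tv}^{P\tilde{P}}$ and $\g_{ti}^{P\tilde{P}}$, together with the swap-and-negate defining the adjoint, into the clean identities $K^{\perp}x=\pm K^{\perp}\chi_{e}$. The sign of each shift is irrelevant to the spanning conclusion, but I must verify that the shifts run over a full basis of $\F_{P'P"}$ and that the constraint $Kx=0$ survives the shift, so that each unique solution genuinely lands inside $\breve{\V}_{P'P"}$ at the prescribed $K^{\perp}$-image; everything else is an assembly of Theorem \ref{thm:regularmultiportconditions}, Corollary \ref{cor:regularmultiportconditions0}, and the nonsingularity already established for Equation \ref{eqn:primaldual}.
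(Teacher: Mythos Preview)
Your proposal is correct and follows essentially the same route as the paper. Part 1 is handled in the discussion preceding Algorithm \ref{alg:TN} exactly as you outline; part 2 is the paper's argument verbatim (same vector space translate, Corollary \ref{cor:regularmultiportconditions0} part 3, then restriction to $\mathcal{N}^{hom}_P$, citing Theorem \ref{thm:translatemultiport} where you invoke the definition of port behaviour). For part 3 the paper writes out the full $4\times 4$ block system (Equations \ref{eqn:primaldual1}--\ref{eqn:primaldual2}), eliminates the $\tilde P$ variables to get Equations \ref{eqn:primaldual3}--\ref{eqn:primaldual4} with right-hand sides $-B^{\perp}I^t$ and $Q^{\perp}I^t$, and observes these are exactly the columns of $(B^{\perp}\mid -Q^{\perp})$; your abstract version with $K^{\perp}\chi_e$ and the bijection $\breve{\V}_{P'P"}\to\mathbb{F}^{r(\breve{\V}_{P'P"})}$ is the same computation in coordinate-free language, and your caveat about signs is exactly right---they are irrelevant to spanning.
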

\begin{proof}
We only prove parts 2 and 3 since part 1 has already been shown.\\
2. If we replace the device characteristic 
of $ [\mathcal{N}_P\oplus \mathcal{N}^{adj}_{\tilde P}]\cap \g^{{P}\tilde{P}},$
or that of  $[\mathcal{N}_P\oplus \mathcal{N}^{adj}_{\tilde P}]\cap \g_{tv}^{{P}\tilde{P}}$ or that of  $[\mathcal{N}^{hom}_P\oplus \mathcal{N}^{adj}_{\tilde P}]\cap \g_{ti}^{{P}\tilde{P}}$ by its vector space translate, i.e., by
${\V}_{S'S"}\oplus {\V}^{adj}_{S'S"} \oplus \g^{{P}\tilde{P}}$ 
we get the device characteristic of $[\mathcal{N}^{hom}_P\oplus \mathcal{N}^{adj}_{\tilde P}]\cap \g^{{P}\tilde{P}}$ which is also on the same graph 
$\G_{SP}\oplus \G_{\tilde{S}\tilde{P}}$ as are $ [\mathcal{N}_P\oplus \mathcal{N}^{adj}_{\tilde P}]\cap \g^{{P}\tilde{P}}$ as well as $[\mathcal{N}^{hom}_P\oplus \mathcal{N}^{adj}_{\tilde P}]\cap \g_{tv}^{{P}\tilde{P}}$ and $[\mathcal{N}^{hom}_P\oplus \mathcal{N}^{adj}_{\tilde P}]\cap \g_{ti}^{{P}\tilde{P}}.$\\ 
By Corollary \ref{cor:regularmultiportconditions0},
 a linear network with a proper device characteristic, has a unique solution iff 
another obtained by replacing its affine device characteristic by
the vector space translate of the latter, has a
unique solution. We know that $ [\mathcal{N}_P\oplus \mathcal{N}^{adj}_{\tilde P}]\cap \g^{{P}\tilde{P}}$ has a proper device characteristic and has a unique solution. Thus $[\mathcal{N}^{hom}_P\oplus \mathcal{N}^{adj}_{\tilde P}]\cap \g^{{P}\tilde{P}},$ 
has a unique solution and therefore also $[\mathcal{N}^{hom}_P\oplus \mathcal{N}^{adj}_{\tilde P}]\cap \g_{tv}^{{P}\tilde{P}}$ and $[\mathcal{N}^{hom}_P\oplus \mathcal{N}^{adj}_{\tilde P}]\cap \g_{ti}^{{P}\tilde{P}}.$ \\
The restriction of a solution $(v_{S'},v_{P'}, v_{\tilde{S}'}, v_{\tilde{P}'},
i_{S"},i_{P"}, i_{\tilde{S}"}, i_{\tilde{P}"}) $ of $[\mathcal{N}^{hom}_P\oplus \mathcal{N}^{adj}_{\tilde P}]\cap \g_{tv}^{{P}\tilde{P}}$ or $[\mathcal{N}^{hom}_P\oplus \mathcal{N}^{adj}_{\tilde P}]\cap \g_{ti}^{{P}\tilde{P}}$
to $S'\uplus P'\uplus S"\uplus P"$ gives a solution of $\mathcal{N}^{hom}_P.$
Its restriction to $ P'\uplus P"$ gives $(v_{P'},i_{P"}).$  By Theorem \ref{thm:translatemultiport}, $(v_{P'},-i_{P"})\in \breve{\V}_{P'P"}.$
\\
3. Let $\breve{\V}_{P'P"}$ be the solution space of 
$Bv_{P'} -Qi_{P"} =0$ 
and let $(\breve{\V}^{adj}_{P'P"})_{\tilde{P}'\tilde{P}"}$
be the solution space of 
$Q^{\perp}v_{\tilde{P}'}+B^{\perp}v_{\tilde{P}"}=0.$
A vector being the restriction of a solution of $ [\mathcal{N}_P\oplus \mathcal{N}^{adj}_{\tilde P}]\cap \g_{tv}^{{P}\tilde{P}}$ to $P'\uplus P"\uplus \tilde{P}'\uplus \tilde{P}"$ 
is equivalent to its being a solution to the equation
\begin{align}
\label{eqn:primaldual1}
\ppmatrix{
        B  &  -Q & 0 & 0\\
        0 &  0  & Q^{\perp} & B^{\perp}\\
I &  0  & 0 & I\\
0&  I&-I& 0}
\ppmatrix{v_{P'}\\i_{P"}\\v_{\tilde{P}'}\\i_{\tilde{P}"}}&=\ppmatrix{0\\ 0\\-I^t\\0
} 
\end{align}
and a vector being the restriction of a solution of $ [\mathcal{N}_P\oplus \mathcal{N}^{adj}_{\tilde P}]\cap \g_{ti}^{{P}\tilde{P}}$ to $P'\uplus P"\uplus \tilde{P}'\uplus \tilde{P}"$
is equivalent to its being a solution to the equation
\begin{align}
\label{eqn:primaldual2}
\ppmatrix{
        B  &  -Q & 0 & 0\\
        0 &  0  & Q^{\perp} & B^{\perp}\\
I &  0  & 0 & I\\
0&  I&-I& 0}
\ppmatrix{v_{P'}\\i_{P"}\\v_{\tilde{P}'}\\i_{\tilde{P}"}}&=\ppmatrix{0\\ 0\\0\\-I^t
}, 
\end{align}
where the row spaces of $(B|-Q), (B^{\perp}|-Q^{\perp})$ are complementary
orthogonal and $I^t$ denotes the $t^{th}$ column of a  $|P|\times |P|$ identity matrix.
In the variables $v_{P'},i_{P"},$ Equation \ref{eqn:primaldual1} reduces to

\begin{align}
\label{eqn:primaldual3}
\ppmatrix{
        B  &  -Q \\
        B^{\perp} & -Q^{\perp}}
\ppmatrix{v_{P'}\\i_{P"}}&=\ppmatrix{0\\-B^{\perp}I^t
},
\end{align}
and Equation \ref{eqn:primaldual2} reduces to

\begin{align}
\label{eqn:primaldual4}
\ppmatrix{
        B  &  -Q \\
        B^{\perp} & -Q^{\perp}}
\ppmatrix{v_{P'}\\i_{P"}}&=\ppmatrix{0\\Q^{\perp}I^t
}.
\end{align}
It is clear that a vector belongs to $\breve{\V}_{P'P"}$ iff it is a 
solution of 
\begin{align}
\label{eqn:primaldual5}
\ppmatrix{
        B  &  -Q \\
        B^{\perp} & -Q^{\perp}}
\ppmatrix{v_{P'}\\i_{P"}}&=\ppmatrix{0\\x
},
\end{align}
for some  vector $x.$
The space of all such $x$ vectors is the column space of the matrix 
$(B^{\perp} |-Q^{\perp}).$ Noting that, for any matrix $K,$ the product  $KI^t$ is the $t^{th}$ column 
of $K,$ we see that the solutions, for $t=1, \cdots , |P|,$   
of Equations \ref{eqn:primaldual3}
and \ref{eqn:primaldual4}, span $\breve{\V}_{P'P"}.$

\end{proof}
 
\section{Maximum power transfer for linear multiports}
\label{sec:maxpower}
The maximum power transfer theorem, as originally stated, says that 
a linear resistive $1$-port transfers maximum power to a resistive load if 
the latter has value equal to the Thevenin resistance of the $1$-port.
In the multiport case the Thevenin equivalent is a resistor matrix 
whose transpose has to be connected to the multiport for maximum power transfer.
It was recognized early that a convenient way of studying maximum power
transfer is to study the port conditions for which such a transfer occurs
(\cite{desoer1,narayananmp}). We will use this technique 
to obtain such port conditions for an  affine multiport behaviour.
In the general, not necessarily strictly passive, case,  we can only try to obtain
stationarity of power transfer, rather than maximum power transfer.
After obtaining these conditions we show that they are, in fact,
 achieved if at all, when the 
multiport is terminated by its adjoint through an ideal transformer.
This means that the multiport behaviour 
need only be available as the port behaviour of a multiport $\N_P,$
and not explicitly, as an affine space $\breve{\A}_{P'P"}.$ 

\subsection{Stationarity of power transfer for linear multiports}
\label{subsec:maxpower}
Our convention for the sign of power associated with a multiport 
behaviour is
that when \\$(\breve{v}_{P'},\breve{i}_{P"})\in \breve{\A}_{P'P"},$
the  {\bf power absorbed} by the multiport behaviour  $\breve{\A}_{P'P"}$
is $\breve{v}_{P'}^T\breve{i}_{P"}.$ The power delivered by it is therefore $-\breve{v}_{P'}^T\breve{i}_{P"}.$

Suppose the multiport behaviour $\breve{\A}_{P'P"}$ is the solution space
of the equation $B_{P'}\breve{v}_{P'}-Q_{P"}\breve{i}_{P"}=s,$ with linearly independent
rows so that its  vector space translate $\breve{\V}_{P'P"}$
is the solution space of the equation $B_{P'}\breve{v}_{P'}-Q_{P"}\breve{i}_{P"}=0.$
We will now derive stationarity conditions on $(\breve{v}_{P'},\breve{i}_{P'})$ for the absorbed power $\breve{v}_{P'}^T\breve{i}_{P"}$ (i.e., { delivered power} $-\breve{v}_{P'}^T\breve{i}_{P"}$). 
If the behaviour does not satisfy the conditions, it would mean
that it has no stationary vectors and therefore there is no port condition
at which maximum power transfer (i.e., delivery) occurs.
Even if there are stationary vectors they only correspond to local minimum
or maximum power transfer.
\\
We have the optimization problem
\begin{align}
\label{eqn:optprob}
\mbox{minimize} \ \ \breve{v}_{P'}^T\breve{i}_{P"}\\
\label{eqn:feasible}
B_{P'}\breve{v}_{P'}-Q_{P"}\breve{i}_{P"}=s.
\end{align}

Let $(\delta \breve{v}_{P'}, \delta \breve{i}_{P"})$ be a perturbation,
consistent with Equation \ref{eqn:feasible},
about $(\breve{v}^{stat}_{P'},\breve{i}^{stat}_{P"})$ which we will take to be a stationary point of the optimization problem \ref{eqn:optprob}.
We then have

\begin{align}
\label{eqn:optprob3}
 (\delta \breve{v}_{P'})^T\breve{i}^{stat}_{P"}+ (\breve{v}^{stat}_{P'})^T(\delta \breve{i}_{P"})=0 \mbox{\ whenever}\\
(B_{P'}(\delta\breve{v}_{P'})-Q_{P"}(\delta \breve{i}_{P"}))=0.
\end{align}
Therefore, for some vector $\lambda ,$ we have 
\begin{align}
\label{eqn:optprob4}
 ((\breve{i}^{stat}_{P"})^T| (\breve{v}^{stat}_{P'})^T)-\lambda^T(B_{P'}|-Q_{P"})
=0.,
\end{align}
\\
Equivalently, 
the vector $((\breve{v}^{stat}_{P'})^T, (\breve{i}^{stat}_{P"})^T)$ must belong to the row space of
$(-Q_{P'}|B_{P"}).$
\\
Since we must have 
$$B_{P'}\breve{v}^{stat}_{P'}-Q_{P"}\breve{i}^{stat}_{P"}=s,$$
we get the condition  for stationarity,
\begin{align}
\label{eqn:optprob5}
\ppmatrix{B_{P'}&-Q_{P"}}\ppmatrix{-Q_{P'}^T\\B_{P"}^T}\lambda=s.
\end{align}

We note that, even when the multiport is regular, 
%
%
%
%
%
the coefficient matrix of Equation \ref{eqn:optprob5} may be singular
and the equation may have no solution, in which case we have no stationary vectors
for power transfer. If the coefficient matrix is nonsingular,
Equation \ref{eqn:optprob5} has a unique solution and using that $\lambda $ vector
we get a unique stationary vector $(\breve{v}^{stat}_{P'}, \breve{i}^{stat}_{P"}).$\\
The vector space translate, $\breve{\V}_{P'P"}$ of $\breve{\A}_{P'P"}$ is the solution space 
of the equation, $B_{P'}\breve{v}_{P'}-Q_{P"}\breve{i}_{P"}=0,$\\
  and $ (\breve{\V}_{P'P"}^{adj})_{P'(-P")}\equivd (\breve{\V}^{\perp}_{P'P"})_{P"P'},$ is the row space of $(-Q_{P'}|B_{P"}).$\\
Thus, the stationarity condition says that $(\breve{v}^{stat}_{P'},\breve{i}^{stat}_{P"}),$ 
belongs to $\breve{\A}_{P'P"}\cap(\breve{\V}_{P'P"}^{adj})_{P'(-P")}.$\\
We next show that the stationarity condition is achieved at the ports of the multiport $\N_P,$
if we terminate it by $\N^{adj}_{\tilde{P}}$ through the ideal transformer $\T^{P\tilde{P}}$ (resulting in the network $[\N_P\oplus\N^{adj}_{\tilde{P}}]\cap \T^{P\tilde{P}}$).

\begin{theorem}
\label{thm:maxpowerport}
Let  $\mathcal{N}_P,$ on graph $\G_{SP}$
and device characteristic ${\A}_{S'S"},$ have the port behaviour
$\breve{\A}_{P'P"}.$ 
Let ${\V}_{S'S"}, \breve{\V}_{P'P"}$ be the vector space translates of ${\A}_{S'S"}, \breve{\A}_{P'P"},$
respectively.
Let $\mathcal{N}^{adj}_{\tilde P}$ be on the disjoint copy $\G_{\tilde{S}\tilde{P}}$ of $\G_{SP},$ with device characteristic ${\V}^{adj}_{\tilde{S}'\tilde{S}"}\equivd ({\V}^{adj}_{S'S"})_{\tilde{S}'\tilde{S}"}.$ 
\begin{enumerate}
\item A vector $(v_{P'},i_{P"})$ is the restriction of a solution of the network $\N^{large}\equivd [\mathcal{N}_P\oplus \mathcal{N}^{adj}_{\tilde P}]\cap \T^{{P}\tilde{P}}$ to $P'\uplus P",$ iff 
$(v_{P'},-i_{P"})\in \breve{\A}_{P'P"}\cap(\breve{\V}_{P'P"}^{adj})_{P'(-P")}.$
\item  
%
%
Let $(\breve{v}^{stat}_{P'},\breve{i}^{stat}_{P"})\in
\breve{\A}_{P'P"}.$ Then $(\breve{v}^{stat}_{P'},\breve{i}^{stat}_{P"})$ satisfies 
stationarity condition with respect to $\breve{v}_{P'}^T\breve{i}_{P"},$
$(\breve{v}_{P'},\breve{i}_{P"})\in
\breve{\A}_{P'P"}$ 
 iff $(\breve{v}^{stat}_{P'},\breve{i}^{stat}_{P"})\in
\breve{\A}_{P'P"}\cap(\breve{\V}_{P'P"}^{adj})_{P'(-P")}.$
\item Let $(v^{1}_{P'},-i^{1}_{P"})$ be the restriction of a solution of the multiport $\N_P,$ to $P'\uplus P".$
Then $(\breve{v}^{1}_{P'},\breve{i}^{1}_{P"})$
satisfies the stationarity condition with respect to $\breve{v}_{P'}^T\breve{i}_{P"},$
$(\breve{v}_{P'},\breve{i}_{P"})\in
\breve{\A}_{P'P"}$ iff $(v^{1}_{P'},-i^{1}_{P"})$  is the restriction of a solution of the network
$[\mathcal{N}_P\oplus \mathcal{N}^{adj}_{\tilde P}]\cap \T^{{P}\tilde{P}},$
to $P'\uplus P".$

\end{enumerate}
\end{theorem}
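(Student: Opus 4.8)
The plan is to prove part 1 by a direct bookkeeping of which port conditions are realizable in $\N^{large}$, then to recognize part 2 as essentially a restatement of the stationarity derivation already carried out in Equations \ref{eqn:optprob3}--\ref{eqn:optprob5}, and finally to obtain part 3 by concatenating the two. The whole theorem then rests on getting part 1 exactly right.

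First I would establish part 1. A vector $(v_{P'},i_{P"})$ is the restriction to $P'\uplus P"$ of a solution of $\N^{large}\equivd [\N_P\oplus \N^{adj}_{\tilde P}]\cap \T^{P\tilde{P}}$ exactly when there is a solution of $\N_P$ with these port values together with a solution of $\N^{adj}_{\tilde P}$ whose port values are tied to them through the ideal transformer. By the definition of port behaviour, a solution of $\N_P$ with port restriction $(v_{P'},i_{P"})$ exists iff $(v_{P'},-i_{P"})\in \breve{\A}_{P'P"}$. The transformer $\T^{P\tilde{P}}$ forces $v_{\tilde P'}$ to be the copy of $v_{P'}$ and $i_{\tilde P"}=-i_{P"}$, so that $(v_{\tilde P'},-i_{\tilde P"})$ is the copy of $(v_{P'},i_{P"})$. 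Since the port behaviour of $\N^{adj}_{\tilde P}$ is $\breve{\V}^{adj}_{\tilde P'\tilde P"}=(\breve{\V}^{adj}_{P'P"})_{\tilde P'\tilde P"}$ by Theorem \ref{thm:adjointmultiport}, the realizability condition $(v_{\tilde P'},-i_{\tilde P"})\in \breve{\V}^{adj}_{\tilde P'\tilde P"}$ for the matching $\N^{adj}_{\tilde P}$ solution is equivalent to $(v_{P'},i_{P"})\in \breve{\V}^{adj}_{P'P"}$, i.e. to $(v_{P'},-i_{P"})\in (\breve{\V}^{adj}_{P'P"})_{P'(-P")}$. Intersecting the two requirements gives $(v_{P'},-i_{P"})\in \breve{\A}_{P'P"}\cap(\breve{\V}^{adj}_{P'P"})_{P'(-P")}$, which is precisely part 1.

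Part 2 is then almost in hand. The first-order condition for a feasible pair $(\breve{v}^{stat}_{P'},\breve{i}^{stat}_{P"})$ to be stationary for $\breve{v}_{P'}^T\breve{i}_{P"}$ subject to $B_{P'}\breve{v}_{P'}-Q_{P"}\breve{i}_{P"}=s$ is exactly Equation \ref{eqn:optprob4}, which says the gradient of the objective lies in the row space of the constraint matrix; equivalently $(\breve{v}^{stat}_{P'},\breve{i}^{stat}_{P"})$ lies in the row space of $(-Q_{P'}|B_{P"})=(\breve{\V}^{adj}_{P'P"})_{P'(-P")}$. Adjoining feasibility $(\breve{v}^{stat}_{P'},\breve{i}^{stat}_{P"})\in \breve{\A}_{P'P"}$ yields membership in $\breve{\A}_{P'P"}\cap(\breve{\V}^{adj}_{P'P"})_{P'(-P")}$, and the converse is immediate since any element of this intersection satisfies both the feasibility equation and the row-space condition. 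I would write this as a short two-line argument citing the optimization discussion that precedes the theorem.

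Finally, part 3 follows by composition: by part 2 the element $(v^1_{P'},-i^1_{P"})\in\breve{\A}_{P'P"}$ is stationary iff it lies in $\breve{\A}_{P'P"}\cap(\breve{\V}^{adj}_{P'P"})_{P'(-P")}$, and by part 1 this is exactly the condition for $(v^1_{P'},-i^1_{P"})$ to be the restriction to $P'\uplus P"$ of a solution of $\N^{large}$. The main obstacle lies entirely in part 1: carrying the sign-and-copy bookkeeping correctly through the three superimposed flips --- the sign convention built into the definition of $\breve{\A}_{P'P"}$, the current reversal $i_{P"}=-i_{\tilde P"}$ imposed by the transformer, and the flip $(\cdot)^{adj}=((\cdot)^\perp)_{(-S")S'}$ in the adjoint --- and verifying that they compose to leave the single net flip $(\breve{\V}^{adj}_{P'P"})_{P'(-P")}$, for which the identity $(\breve{\V}^{adj}_{P'P"})_{P'(-P")}=(\breve{\V}^\perp_{P'P"})_{P"P'}$ recorded just before the theorem is the key.
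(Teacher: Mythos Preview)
Your proposal is correct and follows essentially the same route as the paper: part 1 is proved by tracking the port conditions realizable in $\N_P$ and in $\N^{adj}_{\tilde P}$ separately (using Theorem \ref{thm:adjointmultiport} for the latter) and intersecting them through the transformer constraint, while parts 2 and 3 are deduced from part 1 together with the stationarity discussion preceding the theorem. One small slip: in your part 3, the element of $\breve{\A}_{P'P"}$ should be $(v^1_{P'},i^1_{P"})$ rather than $(v^1_{P'},-i^1_{P"})$, since the hypothesis names $(v^1_{P'},-i^1_{P"})$ as the raw restriction and the port behaviour carries one sign flip on $P"$; the composition with part 1 then works out exactly as you intend.
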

\begin{proof}
1. The restriction of the set of solutions of  $\mathcal{N}_P$ on graph $\G_{SP}$ to $P'\uplus P",$
is\\ 
$[(\V_{S'P'}\oplus (\V^{\perp}_{S'P'})_{S"P"})\cap \A_{S'S"}]\circ {P'P"}.$
This is  the same as
$(\breve{\A}_{P'P"})_{P'(-P")}.$\\
The restriction of the set of solutions of $\mathcal{N}^{adj}_{\tilde P}$ on the disjoint copy $\G_{\tilde{S}\tilde{P}}$ of $\G_{SP},$ to $\tilde{P}'\uplus \tilde{P}"$ is\\
$[(\V_{\tilde{S}'\tilde{P}'}\oplus (\V^{\perp}_{\tilde{S}'\tilde{P}'})_{\tilde{S}"\tilde{P}"})\cap {\V}^{adj}_{\tilde{S}'\tilde{S}"}]\circ {\tilde{P}'\tilde{P}"}.$ 
This we know, by Theorem \ref{thm:adjointmultiport},
 to be the same as
$(\breve{\V}^{adj}_{P'P"})_{\tilde{P}'-\tilde{P}"}.$\\
Let $\mathcal{N}^{adj}_{P}$ be on  $\G_{SP},$ with device characteristic ${\V}^{adj}_{{S}'{S}"}.$\\
The restriction of the set of solutions of $\mathcal{N}^{adj}_{P}$ to ${P}'\uplus {P}"$ is 
$[(\V_{{S}'{P}'}\oplus (\V^{\perp}_{{S}'{P}'})_{{S}"{P}"})\cap {\V}^{adj}_{{S}'{S}"}]\circ {{P}'{P}"}.$\\
Clearly,
$[((\V_{\tilde{S}'\tilde{P}'}\oplus (\V^{\perp}_{\tilde{S}'\tilde{P}'})_{\tilde{S}"\tilde{P}"})\cap {\V}^{adj}_{\tilde{S}'\tilde{S}"})\circ {\tilde{P}'\tilde{P}"}]_{P'P"}=[(\V_{{S}'{P}'}\oplus (\V^{\perp}_{{S}'{P}'})_{{S}"{P}"})\cap {\V}^{adj}_{{S}'{S}"}]\circ {{P}'{P}"}=(\breve{\V}^{adj}_{P'P"})_{P'(-P")}$
i.e., $[(((\V_{\tilde{S}'\tilde{P}'}\oplus (\V^{\perp}_{\tilde{S}'\tilde{P}'})_{\tilde{S}"\tilde{P}"})\cap {\V}^{adj}_{\tilde{S}'\tilde{S}"})\cap \T^{P\tilde{P}})\circ {P'P"}
=([(\V_{{S}'{P}'}\oplus (\V^{\perp}_{{S}'{P}'})_{{S}"{P}"})\cap {\V}^{adj}_{{S}'{S}"}]\circ {{P}'{P}"})_{P'(-P")}$\\$= \breve{\V}^{adj}_{P'P"},$
since vectors in $ \T^{P\tilde{P}}$ are precisely the ones that satisfy $v_{P'}=v_{\tilde{P}"}, i_{P"}=-i_{\tilde{P}"}
.$\\
The restriction of the set of solutions of $[\mathcal{N}_P\oplus \mathcal{N}^{adj}_{\tilde P}]\cap \T^{{P}\tilde{P}},$
to $P'\uplus P"$ is therefore equal to \\
$[[(\V_{S'P'}\oplus (\V^{\perp}_{S'P'})_{S"P"})\cap \A_{S'S"}]\circ {P'P"}]
\cap [[(\V_{\tilde{S}'\tilde{P}'}\oplus (\V^{\perp}_{\tilde{S}'\tilde{P}'})_{\tilde{S}"\tilde{P}"})\cap {\V}^{adj}_{\tilde{S}'\tilde{S}"}\cap \T^{P\tilde{P}}]\circ {{P}'{P}"}],$
\\
$=(\breve{\A}_{P'P"})_{P'(-P")}\cap\breve{\V}_{P'P"}^{adj}.$
Thus a vector $(v_{P'},i_{P"})$ is the restriction of a solution of the network\\ $\N^{large}\equivd [\mathcal{N}_P\oplus \mathcal{N}^{adj}_{\tilde P}]\cap \T^{{P}\tilde{P}}$ to $P'\uplus P",$ iff
$(v_{P'},-i_{P"})\in \breve{\A}_{P'P"}\cap(\breve{\V}_{P'P"}^{adj})_{P'(-P")}.$
\\
Parts 2 and 3 follow from part 1 and the discussion preceding the theorem.
\end{proof}
\subsection{Maximum Power Transfer Theorem for passive multiports}
\label{subsec:mptpassive}
We show       below that the stationarity conditions of the previous subsection reduce       to maximum power transfer conditions when the multiport is passive.
\\
A vector space $\V_{S'S"}$ is \nw{passive}, iff $\langle x_{S'},y_{S"} \rangle\geq 0,$ whenever $(x_{S'},y_{S"})\in \V_{S'S"}.$\\ 
It is \nw{strictly passive} iff $\langle x_{S'},y_{S"} \rangle > 0,$ whenever $(x_{S'},y_{S"})\in \V_{S'S"}, (x_{S'},y_{S"})\ne 0_{S'S"}.$\\
An affine space   $\A_{S'S"}$  is (strictly) passive iff its vector space translate is (strictly) passive. 
A multiport is (strictly) passive iff its device characteristic 
is (strictly) passive.

We now have a routine result which links passivity of the device characteristic
of a multiport to its port behaviour.

\begin{lemma}
\label{lem:maxpower}
Let $\N_P$ be a multiport on graph $\G_{SP}$ with device characteristic 
$\A_{S'S"}.$ 
\begin{enumerate}
\item If $\A_{S'S"}$ is passive so is the port behaviour $\breve{\A}_{P'P"}$ of $\N_P.$
\item If $\A_{S'S"}$ is strictly passive and $P$ contains no loops or cutsets of $\G_{SP},$ then the port behaviour $\breve{\A}_{P'P"}$ of $\N_P$
 is also strictly passive.
\end{enumerate}
\end{lemma}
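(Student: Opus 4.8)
The plan is to derive a single power-balance identity from Tellegen's theorem and then read both conclusions off the (strict) passivity of the interior characteristic; the topological hypothesis enters only to eliminate a degenerate case in the strict statement.

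First I would reduce both parts to vector spaces. By definition an affine space is (strictly) passive exactly when its vector space translate is, and by Theorem~\ref{thm:translatemultiport} the translate $\breve{\V}_{P'P"}$ of the port behaviour $\breve{\A}_{P'P"}$ is the port behaviour of the homogeneous multiport $\N^{hom}_P$ on $\G_{SP}$ with device characteristic $\V_{S'S"}$ (the translate of $\A_{S'S"}$). So it suffices to show that if $\V_{S'S"}$ is (strictly) passive then $\breve{\V}_{P'P"}$ is (strictly) passive.

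Next comes the Tellegen step. Fix $(\breve v_{P'}, \breve i_{P"}) \in \breve{\V}_{P'P"}$. Unwinding the definition of port behaviour (Lemma~\ref{lem:behaviourlrar}), there exist interior $v_{S'}, i_{S"}$ with $(v_{S'}, \breve v_{P'}) \in (\V^v(\G_{SP}))_{S'P'}$, $(i_{S"}, -\breve i_{P"}) \in (\V^i(\G_{SP}))_{S"P"}$ and $(v_{S'}, i_{S"}) \in \V_{S'S"}$. Since $\V^i(\G_{SP}) = (\V^v(\G_{SP}))^{\perp}$ by Theorem~\ref{thm:tellegen}, these two topological vectors are orthogonal, which gives $\langle v_{S'}, i_{S"}\rangle - \langle \breve v_{P'}, \breve i_{P"}\rangle = 0$; that is, the port power equals the interior power, $\langle \breve v_{P'}, \breve i_{P"}\rangle = \langle v_{S'}, i_{S"}\rangle$. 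Part~1 is then immediate: passivity of $\V_{S'S"}$ gives $\langle v_{S'}, i_{S"}\rangle \ge 0$, so $\langle \breve v_{P'}, \breve i_{P"}\rangle \ge 0$.

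For Part~2 I would argue by contradiction. Suppose $(\breve v_{P'}, \breve i_{P"}) \in \breve{\V}_{P'P"}$ is nonzero with $\langle \breve v_{P'}, \breve i_{P"}\rangle = 0$. Then $\langle v_{S'}, i_{S"}\rangle = 0$, and strict passivity of $\V_{S'S"}$ forces $(v_{S'}, i_{S"}) = 0$. Consequently $\breve v_{P'} \in (\V^v(\G_{SP}))_{S'P'}\times P'$ and $\breve i_{P"} \in (\V^i(\G_{SP}))_{S"P"}\times P"$. The crux, and the step I expect to be the main obstacle, is translating the hypothesis that $P$ contains no loops or cutsets into the vanishing of these two contractions. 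Using Lemma~\ref{lem:minorgraphvectorspace} I would identify $(\V^v(\G_{SP}))_{S'P'}\times P' = (\V^v(\G_{SP}\times P))_{P'}$ and $(\V^i(\G_{SP}))_{S"P"}\times P" = (\V^i(\G_{SP}\circ P))_{P"}$, and then invoke the standard fact that a nonzero cut-space vector supported inside $P$ must contain a cutset of $\G_{SP}$ lying in $P$, whereas a nonzero cycle-space vector supported inside $P$ is a loop of $\G_{SP}$ lying in $P$ (and conversely). The no-cutset hypothesis therefore makes the first contraction zero and the no-loop hypothesis makes the second zero, so $\breve v_{P'} = 0$ and $\breve i_{P"} = 0$, contradicting nonzeroness and establishing strict passivity.
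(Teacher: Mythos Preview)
Your proof is correct and follows essentially the same route as the paper's: the Tellegen power-balance identity for Part~1, and for Part~2 the observation that the two contractions $\V^v(\G_{SP})\times P$ and $\V^i(\G_{SP})\times P$ vanish under the no-cutset/no-loop hypothesis. The only cosmetic differences are that the paper phrases Part~2 directly (a nonzero port vector forces a nonzero interior vector, hence strict inequality) rather than by contradiction, and argues the vanishing of the contractions via a tree/cotree argument (``$S$ contains a tree, so zero $S$-voltages force zero $P$-voltages'') rather than via Lemma~\ref{lem:minorgraphvectorspace} and the support-of-cocycle fact.
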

\begin{proof}
1. We assume that $\breve{\A}_{P'P"}$ is nonvoid. We have,\\
$\breve{\A}_{P'P"}= ((\V_{S'P'}\oplus (\V^{\perp}_{S'P'})_{S"P"})\lrar \A_{S'S"})_{P'(-P")} ,$ where $\V_{S'P'}\equivd (\V^v(\G_{SP}))_{S'P'}.$\\
By Theorem \ref{thm:IIT2}, it follows that its vector space translate \\
$\breve{\V}_{P'P"}= ((\V_{S'P'}\oplus (\V^{\perp}_{S'P'})_{S"P"})\lrar \V_{S'S"})_{P'(-P")} .$\\ 
%
%
%
Let  $(v_{P'},-i_{P"})$ belong to $ \breve{\V}_{P'P"}.$
Then there exist $(v_{S'},v_{P'})\in \V_{S'P'}$ and $(i_{S"},i_{P"})\in (\V^{\perp}_{S'P'})_{S"P"},$
such that $(v_{S'},i_{S"})\in \V_{S'S"}.$\\
By the orthogonality of $\V_{S'P'},\V^{\perp}_{S'P'},$ it follows that
$\langle (v_{S'},v_{P'}),(i_{S"},i_{P"})\rangle =\langle v_{S'},i_{S"}\rangle+\langle v_{P'},i_{P"}\rangle=0,$ and \\
by the passivity of $ \V_{S'S"},$ it follows that $\langle v_{S'},i_{S"}\rangle \geq 0.$\\
Therefore  $\langle v_{P'},-i_{P"}\rangle \geq 0.$\\
2. Without loss of generality, we assume that the graph $\G_{SP}$ is connected.
If $P$  contains no cutset or circuit of $\G_{SP}, $ then $S$ contains both 
a tree as well as a cotree of $\G_{SP}.$
If the voltages assigned to the branches of a tree are zero, the branches 
in its complement will have zero voltage. Therefore $\V_{S'P'}\times P'$
must necessarily be a zero vector space.
If the currents in the branches of a cotree are zero the branches
in its complement will have zero current. Therefore $\V^{\perp}_{S'P'}\times P'$
must necessarily be a zero vector space.\\
Now let $(v_{P'},-i_{P"})\in \breve{\V}_{P'P"}, (v_{P'},-i_{P"})\ne 0_{P'P"}.$
As in part 1 above, there exist $(v_{S'},v_{P'})\in \V_{S'P'}$ and $(i_{S"},i_{P"})\in \V^{\perp}_{S'P'},$
such that $(v_{S'},i_{S"})\in \V_{S'S"}.$\\
Since $\V_{S'P'}\times P'$
and $\V^{\perp}_{S'P'}\times P'$ are both zero vector spaces,
we must have that $(v_{S'},i_{S"})\ne 0_{S'S"}$  so that,
by strict passivity of $\V_{S'S"},$ we have  
$\langle v_{S'},i_{S"}\rangle > 0.$
Further $\langle (v_{S'},v_{P'}),(i_{S"},i_{P"})\rangle =0,$ so that
we can conclude $\langle v_{P'},-i_{P"}\rangle > 0.$
\end{proof}

When a port behaviour is passive or strictly passive, by taking into account second order terms, 
we can show that  the stationarity condition implies a  maximum power delivery condition.

Let $(\breve{v}^{stat}_{P'},\breve{i}^{stat}_{P"})$ satisfy the stationarity condition 
$((\breve{v}^{stat}_{P'})^T,(\breve{i}^{stat}_{P"})^T) = \hat{\lambda}^T(-Q_{P'}|B_{P"}),$ for some $\hat{\lambda}.$\\ 
Now  for any $(\breve{v}_{P'}, \breve{i}_{P"})\in \breve{\A}_{P'P"},$ 
writing $(\breve{v}_{P'}, \breve{i}_{P"})=(\breve{v}^{stat}_{P'}+\Delta\breve{v}_{P'},\breve{i}^{stat}_{P"}+\Delta\breve{i}_{P"}),$
we have\\
$\langle \breve{v}_{P'}, \breve{i}_{P"}\rangle =\langle \breve{v}_{P'}, \breve{i}_{P"}\rangle - \hat{\lambda}^T[(B_{P}\breve{v}_{P'}-Q_{P"}\breve{i}_{P"})-s]$
\\$= 
\langle\breve{v}^{stat}_{P'},\breve{i}^{stat}_{P"}\rangle+ (\Delta \breve{v}_{P'})^T\breve{i}^{stat}_{P"}+ (\breve{v}^{stat}_{P'})^T\Delta \breve{i}_{P"}+\langle \Delta \breve{v}_{P'}, \Delta \breve{i}_{P"}\rangle-\hat{\lambda}^T(B_{P'}\Delta(\breve{v}_{P'})-Q_{P"}(\Delta \breve{i}_{P"})). $
\\ We can rewrite the right side as  \\ 
$\langle\breve{v}^{stat}_{P'},\breve{i}^{stat}_{P"}\rangle+ \langle \Delta \breve{v}_{P'}, \Delta \breve{i}_{P"}\rangle +((\breve{v}^{stat}_{P'})^T+\hat{\lambda}^TQ_{P"})\Delta\breve{i}_{P"}+((\breve{i}^{stat}_{P"})^T-{\lambda}^TB_{P'})\Delta \breve{v}_{P'}. $\\
Applying the condition 
$((\breve{v}^{stat}_{P'})^T,(\breve{i}^{stat}_{P"})^T) = \hat{\lambda}^T(-Q_{P'}|B_{P"}),$ 
this expression reduces to\\
$\langle\breve{v}^{stat}_{P'},\breve{i}^{stat}_{P"})+\langle \Delta \breve{v}_{P'}, \Delta \breve{i}_{P"}\rangle .$
Therefore, $\langle \breve{v}_{P'}, \breve{i}_{P"}\rangle =
\langle\breve{v}^{stat}_{P'},\breve{i}^{stat}_{P"})+\langle \Delta \breve{v}_{P'}, \Delta \breve{i}_{P"}\rangle .$

If $\breve{\A}_{P'P"}$ is passive we have $\langle \Delta \breve{v}_{P'}, \Delta \breve{i}_{P"}\rangle\geq 0 ,$
so that 
$\langle \breve{v}_{P'}, \breve{i}_{P"}\rangle \geq 
\langle\breve{v}^{stat}_{P'},\breve{i}^{stat}_{P"}\rangle.$\\
Equivalently, the power delivered $(-\breve{v}_{P'}^T \breve{i}_{P"}),$ by 
$\breve{\A}_{P'P"},$ is maximum when $(\breve{v}_{P'}, \breve{i}_{P"})= (\breve{v}^{stat}_{P'},\breve{i}^{stat}_{P"}).$
\\If $\breve{\A}_{P'P"}$ is strictly passive we have
$\langle \breve{v}_{P'}, \breve{i}_{P"}\rangle  >
\langle\breve{v}^{stat}_{P'},\breve{i}^{stat}_{P"}\rangle$
whenever $(\Delta \breve{v}_{P'}, \Delta \breve{i}_{P"})\ne 0,$\\
so that 
$(\breve{v}^{stat}_{P'},\breve{i}^{stat}_{P"})$
is the unique maximum delivery vector in $\breve{\A}_{P'P"}.$

By Lemma \ref{lem:maxpower}, if a multiport $\N_P$ is passive,
so is its port behaviour. Therefore, Equation \ref{eqn:optprob5}
also gives the condition for maximum power transfer from a passive $\N_P.$

We thus have, from the above discussion, using Theorem \ref{thm:maxpowerport},
the following result.
%
%
%
%
\begin{theorem}
\label{thm:passivemaxpowerport}
Let  $\mathcal{N}_P,$ on graph $\G_{SP}$
and with passive device characteristic ${\A}_{S'S"},$ have the port behaviour
$\breve{\A}_{P'P"}.$
Let ${\V}_{S'S"}$ be the vector space translate of ${\A}_{S'S"}.$
Let $\mathcal{N}^{adj}_{\tilde P}$ be on the disjoint copy $\G_{\tilde{S}\tilde{P}}$ of $\G_{SP},$ with device characteristic ${\V}^{adj}_{\tilde{S}'\tilde{S}"}.$
\begin{enumerate}
\item
Let $(\breve{v}^{stat}_{P'},\breve{i}^{stat}_{P"})\in
\breve{\A}_{P'P"}.$ Then $(\breve{v}^{stat}_{P'},\breve{i}^{stat}_{P"})$ satisfies
\begin{align}
\label{eqn:maximize}
 maximize \ \ (-\breve{v}_{P'}^T\breve{i}_{P"}),\ \ \ \ \ \
(\breve{v}_{P'},\breve{i}_{P"})\in
\breve{\A}_{P'P"},
\end{align}
 iff $(\breve{v}^{stat}_{P'},\breve{i}^{stat}_{P"})\in
\breve{\A}_{P'P"}\cap(\breve{\V}_{P'P"}^{adj})_{P'(-P")}.$
\item Let $(\breve{v}^{stat}_{P'},-\breve{i}^{stat}_{P"})$ be the restriction of a solution of the multiport $\N_P,$ to $P'\uplus P".$
Then $(\breve{v}^{stat}_{P'},\breve{i}^{stat}_{P"})$
satisfies the optimization condition in Equation \ref{eqn:maximize},
iff $(\breve{v}^{stat}_{P'},-\breve{i}^{stat}_{P"})$  is the restriction of a solution of the network
$[\mathcal{N}_P\oplus \mathcal{N}^{adj}_{\tilde P}]\cap \T^{{P}\tilde{P}},$
to $P'\uplus P".$
\end{enumerate}
\end{theorem}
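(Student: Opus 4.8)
The plan is to assemble the statement from three ingredients already available: part~1 of Lemma~\ref{lem:maxpower}, the stationarity characterization in Theorem~\ref{thm:maxpowerport}, and the second-order perturbation computation carried out in the paragraphs immediately preceding the statement. The organizing principle is that passivity upgrades the purely first-order \emph{stationarity} condition into a genuine \emph{maximum-delivery} condition, so that the abstract characterization of stationary vectors becomes a characterization of maximizers.

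First I would invoke part~1 of Lemma~\ref{lem:maxpower}: since $\A_{S'S"}$ is passive, the port behaviour $\breve{\A}_{P'P"}$ is passive, hence its vector space translate $\breve{\V}_{P'P"}$ satisfies $\langle \Delta\breve{v}_{P'},\Delta\breve{i}_{P"}\rangle \geq 0$ for every $(\Delta\breve{v}_{P'},\Delta\breve{i}_{P"})\in\breve{\V}_{P'P"}$. For part~1 I would then prove both directions. For the \emph{if} direction, assume $(\breve{v}^{stat}_{P'},\breve{i}^{stat}_{P"})\in\breve{\A}_{P'P"}\cap(\breve{\V}_{P'P"}^{adj})_{P'(-P")}$; by part~2 of Theorem~\ref{thm:maxpowerport} this is exactly the stationarity condition of Equation~\ref{eqn:optprob5}, so there is a multiplier $\hat{\lambda}$ with $((\breve{v}^{stat}_{P'})^T,(\breve{i}^{stat}_{P"})^T)=\hat{\lambda}^T(-Q_{P'}|B_{P"})$. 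Writing any feasible $(\breve{v}_{P'},\breve{i}_{P"})$ as $(\breve{v}^{stat}_{P'}+\Delta\breve{v}_{P'},\breve{i}^{stat}_{P"}+\Delta\breve{i}_{P"})$, the expansion preceding the theorem collapses, after the stationarity identity cancels the cross terms, to $\langle\breve{v}_{P'},\breve{i}_{P"}\rangle=\langle\breve{v}^{stat}_{P'},\breve{i}^{stat}_{P"}\rangle+\langle\Delta\breve{v}_{P'},\Delta\breve{i}_{P"}\rangle$; passivity of $\breve{\V}_{P'P"}$ makes the final term nonnegative, so the absorbed power is minimized and the delivered power $-\breve{v}_{P'}^T\breve{i}_{P"}$ is maximized at the stationary vector.

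For the \emph{only if} direction of part~1, I would note that if $(\breve{v}^{stat}_{P'},\breve{i}^{stat}_{P"})$ maximizes the delivered power over the affine set $\breve{\A}_{P'P"}$, then it is a critical point of the bilinear objective restricted to that set: the directional derivative along every feasible direction, i.e. every element of $\breve{\V}_{P'P"}$, must vanish, which is precisely Equation~\ref{eqn:optprob3} and hence the stationarity condition. By part~2 of Theorem~\ref{thm:maxpowerport} this places $(\breve{v}^{stat}_{P'},\breve{i}^{stat}_{P"})$ in $\breve{\A}_{P'P"}\cap(\breve{\V}_{P'P"}^{adj})_{P'(-P")}$, completing the equivalence. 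For part~2 I would transport part~3 of Theorem~\ref{thm:maxpowerport} through the equivalence just established: that result characterizes the stationary restrictions as exactly the restrictions to $P'\uplus P"$ of solutions of $[\mathcal{N}_P\oplus\mathcal{N}^{adj}_{\tilde P}]\cap\T^{P\tilde{P}}$, and since passivity makes stationarity coincide with maximization of delivered power, the same termination by the adjoint through the ideal transformer characterizes the maximum-power-transfer vectors.

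I do not expect a deep obstacle, as the theorem is a packaging of prior results; the one point that warrants care is the \emph{only if} direction of part~1, where I must argue that a maximizer of a smooth (bilinear) objective over an affine feasible set is necessarily a stationary point, so that the maximization condition truly fails at any non-stationary vector and the statement is an equivalence rather than a one-sided implication.
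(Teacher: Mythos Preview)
Your proposal is correct and follows essentially the same approach as the paper: the theorem is assembled from Lemma~\ref{lem:maxpower} (passivity of the port behaviour), the second-order perturbation identity $\langle\breve{v}_{P'},\breve{i}_{P"}\rangle=\langle\breve{v}^{stat}_{P'},\breve{i}^{stat}_{P"}\rangle+\langle\Delta\breve{v}_{P'},\Delta\breve{i}_{P"}\rangle$ derived just before the statement, and Theorem~\ref{thm:maxpowerport}. Your explicit treatment of the \emph{only if} direction (a maximizer over an affine set must be a first-order stationary point) is exactly the content of the derivation of Equations~\ref{eqn:optprob3}--\ref{eqn:optprob5} in Subsection~\ref{subsec:maxpower}, so nothing is missing.
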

\begin{example}
Here are some examples of passive (strictly passive) multiports.
\begin{enumerate}
\item $\breve{\V}_{P'P"}$ satisfies
 $v_{P'}=Ri_{P"},$ $R$ positive semidefinite matrix (positive definite matrix);
\item $\breve{\V}_{P'P"}$  is an ideal transformer. Here $\langle \breve{v}_{P'}, \breve{i}_{P"}\rangle =0,$
indeed $\breve{\V}_{P'P"} = \breve{\V}_{P'P"}\circ P'\oplus \breve{\V}_{P'P"}\circ P".$
 \item $\breve{\V}_{P'P"}$  is a gyrator. Here $\langle \breve{v}_{P'}, \breve{i}_{P"}\rangle =0,$
\end{enumerate}
\end{example}
\begin{remark}
1. Let  $\breve{\V}_{P'P"}$ be strictly passive.
If $(0_{P'},i_{P"})$ or $(v_{P'},0_{P"})$ belongs to $\breve{\V}_{P'P"},$ 
its strict passivity is violated. Therefore $r(\breve{\V}_{P'P"}\times P')$
as well as $r(\breve{\V}_{P'P"}\times P")$ must be zero and consequently
$r(\breve{\V}_{P'P"}\circ P")$
as well as $r(\breve{\V}_{P'P"}\circ P')$ must equal $|P|,$ using 
Theorem \ref{thm:dotcrossidentity}.
Let  $(C_{P'}|E_{P"})$ be the representative matrix of
 $\breve{\V}_{P'P"}.$ It is clear that $C_{P'},E_{P"}$ 
are nonsingular so that by invertible row transformation, we can reduce 
$(C_{P'}|E_{P"})$ to the form $(Z^T|I)$ or $(I|Y^T).$ We then have $i_{P"}^Tv_{P'}=
i_{P"}^TZi_{P"}.$ Thus strict passivity of $\breve{\V}_{P'P"},$
implies properness and 
is equivalent to positive definiteness of the resistive matrix $Z$
or that of the conductance  matrix $Y.$
\\
2. Suppose $\breve{\A}_{P'P"}$ is the solution space of the equation
$Iv_{P'}-Zi_{P"} = \mathcal{E},$ i.e., $v_{P'}=Zi_{P"} + \mathcal{E}.$  In this case the stationarity 
condition reduces to $(Z+Z^T)i_{P"}=\mathcal{E}.$ 
If $Z$ is positive definite, so would $(Z+Z^T)$ be and therefore also
be invertible. Therefore when $\breve{\A}_{P'P"}$  is strictly passive 
the maximum power transfer condition of Equation \ref{eqn:optprob5}  has a unique solution.
 But, in general, when $Z$ is not positive definite, $(Z+Z^T)$ 
  can be singular so that there is no solution
to the stationarity 
condition even if $Z$ is nonsingular.\\
3. Let $\breve{\V}_{P'P"}$ have  
a representative matrix of the form
$(I|K)$ or $(K|I),$ where  $K$ is nonsingular.
The adjoint $\breve{\V}^{adj}_{P'P"}$ also has a representative matrix
of the form
$(I|K^T)$ or $(K^T|I).$  
If $\breve{\V}_{P'P"}$ is strictly passive, we have that $K$ and 
therefore $K^T$ is positive definite.
Thus,
 the adjoint of a strictly passive multiport
behaviour is strictly passive.
 
\end{remark}
\section{Conclusions}
\label{sec:conclusions}
We have attempted to show that basic circuit theory can benefit in terms 
of clarity, rigour and efficiency of computation, if we use implicit linear algebra.

We have applied the implicit inversion theorem to linear multiport connection
and have used the implicit duality theorem to show that duality properties of device
characteristics of a linear multiport are inherited by the port behaviour.
\\We have introduced the notion of a regular multiport and the idea of termination by the adjoint through a gyrator or through  an ideal transformer.
We have used the former termination to 
give an algorithm for computing the port behaviour of such multiports using
easily available linear circuit simulators.
We have used the latter termination  to give a condition for 
maximum power transfer for general linear multiports.

Although most of the results are stated for linear multiports governed by 
linear equations over the reals, we have shown how to extend them 
when they are over the complex field in order to handle the steady state sinusoidal case.
\appendix
%
%
%
%
%
\section{Proof of a generalization of Theorem \ref{thm:IIT2}}
\label{sec:proofsiitidt}
We first prove a more general version of Theorem \ref{thm:IIT2} where,
in place  of the vector space $\Vsp,$ we have a collection of vectors
$\Ksp$ that is closed under subtraction. Note that such a collection contains
the zero vector and therefore has nonvoid contraction to any subset of $S\uplus P.$ We take the collections $ \Kpq,\Ksq$
to be arbitrary.
\begin{theorem}
\label{thm:IITA}
{\bf Implicit Inversion Theorem}
Let  $\Ksp$ be a collection of vectors   closed under subtraction and let $ \Kpq,\Ksq$
be collections of vectors  with $S,P,Q,$ being pairwise disjoint.
Consider the equation
\begin{align}
\label{eqn:IITA}
\Ksp \lrar \Kpq =\Ksq, 
\end{align}
with $\Kpq$ treated as unknown.
We have the following.
\begin{enumerate}
\item given $\Ksp, \Ksq,$  there exists $ \hat{\K}_{PQ}, $ satisfying  Equation \ref{eqn:IITA}, only if   $\Ksp\circ S\supseteq \Ksq \circ S$\\ and $\Ksq+\Ksp\times S \subseteq \Ksq.$
\item given $\Ksp, \Ksq,$ if  $\Ksp\circ S\supseteq \Ksq \circ S$ and 
$\Ksq+\Ksp\times S \subseteq \Ksq,$
then $\hat{\K}_{PQ}\equivd \Ksp \lrar \Ksq $
satisfies the equation.

\item given $\Ksp, \Ksq,$ assuming that  Equation \ref{eqn:IITA}, with $\Kpq$ treated as unknown, is satisfied by some $\hat{\K}_{PQ} ,$ it is unique if  the additional conditions $\Ksp\circ P\supseteq \Kpq \circ P$ and
$\Kpq+ \Ksp\times P\subseteq \Kpq$ are imposed.
\end{enumerate}
\end{theorem}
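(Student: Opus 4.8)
The plan is to establish the three parts in the order stated, relying throughout on the single structural fact about $\Ksp$: a collection closed under subtraction contains $0$ and is hence closed under negation and under addition. I would first fix the two set-builder descriptions that will be used constantly. The equation involves $\Ksp\lrar\Kpq$, which matches over $P$ and lives on $S\uplus Q$; the candidate solution will instead be $\Ksp\lrar\Ksq$, which matches over $S$ and lives on $P\uplus Q$. Keeping straight which index set is being matched is the only real bookkeeping hazard.

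For Part 1 (necessity) I assume some $\hat{\K}_{PQ}$ solves $\Ksp\lrar\hat{\K}_{PQ}=\Ksq$. The containment $\Ksp\circ S\supseteq\Ksq\circ S$ is immediate: any $f_S\in\Ksq\circ S$ comes from some $(f_S,g_Q)\in\Ksq$, and unfolding the composition produces a witness $h_P$ with $(f_S,h_P)\in\Ksp$, so $f_S\in\Ksp\circ S$. For $\Ksq+\Ksp\times S\subseteq\Ksq$, I take $(a_S,a_Q)\in\Ksq$ and $b_S\in\Ksp\times S$; unfolding gives $h_P$ with $(a_S,h_P)\in\Ksp$ and $(h_P,a_Q)\in\hat{\K}_{PQ}$, while $(b_S,0_P)\in\Ksp$, so closure under addition yields $(a_S+b_S,h_P)\in\Ksp$ and therefore $(a_S+b_S,a_Q)\in\Ksp\lrar\hat{\K}_{PQ}=\Ksq$.

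For Part 2 (sufficiency) I set $\hat{\K}_{PQ}\equivd\Ksp\lrar\Ksq$ and prove $\Ksp\lrar(\Ksp\lrar\Ksq)=\Ksq$ by double inclusion. Unfolding both compositions, $(f_S,g_Q)$ lies in the left-hand side exactly when there exist $h_P$ and $k_S$ with $(f_S,h_P)\in\Ksp$, $(k_S,h_P)\in\Ksp$, and $(k_S,g_Q)\in\Ksq$. The inclusion $\supseteq$ uses $\Ksq\circ S\subseteq\Ksp\circ S$ to supply a witness $h_P$ for a given $(f_S,g_Q)\in\Ksq$, taking $k_S=f_S$. The inclusion $\subseteq$ is the crux and the step I expect to be the main obstacle: from $(f_S,h_P),(k_S,h_P)\in\Ksp$ I subtract to obtain $(f_S-k_S,0_P)\in\Ksp$, so $f_S-k_S\in\Ksp\times S$, and then the hypothesis $\Ksq+\Ksp\times S\subseteq\Ksq$ applied to $(k_S,g_Q)\in\Ksq$ gives $(f_S,g_Q)\in\Ksq$. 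This manoeuvre—manufacturing an element of $\Ksp\times S$ out of two elements of $\Ksp$ that share their $P$-component—is precisely where closure under subtraction is indispensable.

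For Part 3 (uniqueness) I would note that the recovery identity of Part 2 is symmetric under interchanging the roles of $S$ and $P$, since closure under subtraction on $S\uplus P$ privileges neither. Applying Part 2 with $S$ and $P$ swapped yields $\Ksp\lrar(\Ksp\lrar\Kpq)=\Kpq$ for every $\Kpq$ satisfying $\Ksp\circ P\supseteq\Kpq\circ P$ and $\Kpq+\Ksp\times P\subseteq\Kpq$. Consequently, if such a $\Kpq$ solves $\Ksp\lrar\Kpq=\Ksq$, then $\Kpq=\Ksp\lrar(\Ksp\lrar\Kpq)=\Ksp\lrar\Ksq$, so the solution is forced and hence unique. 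The one point needing care is the associativity of the composition in $\Ksp\lrar(\Ksp\lrar\Kpq)$: the inner composition matches over $P$ to land on $S\uplus Q$, and the outer matches over $S$ to return to $P\uplus Q$, which is exactly the mirror image of the Part 2 computation and so goes through verbatim.
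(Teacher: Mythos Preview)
Your proposal is correct and follows essentially the same approach as the paper's own proof: Part~1 by direct unfolding plus closure under addition, Part~2 by the subtraction trick to produce an element of $\Ksp\times S$ from two elements of $\Ksp$ sharing a $P$-component, and Part~3 by applying Part~2 with the roles of $S$ and $P$ interchanged to recover $\Kpq=\Ksp\lrar\Ksq$. The paper's argument is organized identically, down to the choice of $\hat{\K}_{PQ}=\Ksp\lrar\Ksq$ and the symmetry observation for uniqueness.
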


\begin{proof}
1.  $\ \ $ Suppose $\KSP \lrar \hat{\K}_{PQ} = \KSQ.$\\
If $(f_S,f_Q)\in \Ksq,$
there must exist $(f_S,f_P)\in \Ksp, (f_P,f_Q)\in \hat{\K}_{PQ},$ for some $f_P.$
Hence, $\KSP \circ S \supseteq \KSQ\circ S.$
Further, 
if $(g_S,0_P)\in \Ksp,$ then 
we must have $(f_S+g_S,f_P)\in \Ksp,$ since $\Ksp$ is closed under addition.
 Since $(f_P,f_Q)\in \hat{\K}_{PQ},$
it follows that
$(f_S+g_S,f_Q)\in \Ksq,$ i.e., $\Ksq\supseteq \Ksq+\Ksp\times S.$
\\
2. $\ \ $   
Let $\hat{\K}_{PQ} \equivd \KSP \lrar \KSQ,$ i.e., $\hat{\K}_{PQ}$ is the collection of all vectors $(f_P,f_Q) $  such that, for some
vector $f_S,$
$(f_S,f_Q)\in \KSQ$, $(f_S,f_P) \in \KSP.$
We will now show that $\KSP\lrar \hat{\K}_{PQ}= \KSQ.$\\
Let $(f_S,f_Q) \in \KSQ.$
Since $\KSP\circ S \supseteq \KSQ \circ S,$ we must have that $(f_S,f_P) \in \KSP,$ for some $f_P.$
By the definition of $\hat{\K}_{PQ},$ we have that $(f_P,f_Q) \in \hat{\K}_{PQ}.$
Hence,
$(f_S,f_Q) \in \KSP \lrar \hat{\K}_{PQ}.$
Thus,
$\KSP \lrar \hat{\K}_{PQ} \supseteq \KSQ.$
\\
Next, let $(f_S,f_Q)\in \KSP \lrar \hat{\K}_{PQ},$ i.e., for some $f_P, (f_S,f_P) \in \KSP$ and $(f_P,f_Q) \in \hat{\K}_{PQ}.$
\\
We know, by the definition of $\hat{\K}_{PQ},$ that there exists $f_S'$ 
such that $(f_S',f_Q) \in \KSQ$ and $(f_S', f_P) \in \KSP.$
\\
Since $\KSP$ is closed under subtraction, we must have, $(f_S - f_S',0_P) 
\in \KSP.$
Hence, $(f_S - f_S') \in \KSP \times S,$
\\
Since $\KSQ\supseteq \Ksq+\Ksp\times S,$ and
$(f_S',f_Q) \in \KSQ$,
it follows that $(f_S - f_S',  { 0}_{Q}) + (f_S',f_Q) = (f_S,f_Q)$ also
belongs to $\KSQ$.
Thus, $\KSP \lrar \hat{\K}_{PQ} \subseteq \KSQ.$
\\3. 
Let $\hat{\K}_{PQ}$ satisfy the equation  $\KSP\lrar \hat{\K}_{PQ} =\KSQ. $
\\
From the proof of part 2, we know that if $\hat{\K}_{PQ}$ satisfies $\KSP \circ P \supseteq \hat{\K}_{PQ}\circ P$ and  $\hat{\K}_{PQ} \supseteq \hat{\K}_{PQ}+\Ksp\times P,$ 
\\
then $\KSP\lrar (\KSP\lrar \hat{\K}_{PQ}) =\hat{\K}_{PQ}.$ But $\hat{\K}_{PQ}$ satisfies $\KSP\lrar \hat{\K}_{PQ} =\KSQ$ and satisfies $\KSP \circ P \supseteq \hat{\K}_{PQ}\circ P$ and  
$\hat{\K}_{PQ} \supseteq \hat{\K}_{PQ}+\Ksp\times P.$ 
It follows that for any such $\hat{\K}_{PQ},$ we have $\KSP\lrar \KSQ=\hat{\K}_{PQ}.$
\\
This proves that $\hat{\K}_{PQ}\equivd \KSP\lrar \KSQ$
is the only solution to the equation $\KSP\lrar \KPQ =\KSQ, $ under the condition $\KSP \circ P \supseteq \KPQ\circ P$ and  
$\Kpq \supseteq \Kpq+\Ksp\times P.$ 

\end{proof}
\section{Proof of a generalization of Theorem \ref{thm:idt0}}
Theorem \ref{thm:idt0}, is from \cite{HNarayanan1986a}.
We prove a general version of it below based on the proof given in \cite{HNarayananadjoint}.
It  generalizes naturally
to the matroid case \cite{STHN2014}.
Other proofs and applications may be found in
\cite{HNarayanan1997,schaft1999,narayanan2002some,HNarayanan2009}.
A version in the context of Pontryagin Duality is available in (\cite{forney2004}).

Let $\Omega_S$ denote a family of collections $\K_S$ of vectors on $S,$ 
$\Omega_{SP}$ denote a family of collections $\K_{SP}$ of vectors on $S\uplus P,$ $\Omega_{PQ}$ denote a family of collections $\K_{PQ}$ of vectors on $P\uplus Q.$

Let $\Omega \equivd \bigcup _{\{S,S\ finite\}}\Omega_S,$
Let $\Omega$ be closed under the operations on collections of vectors, of sum, intersection, contraction
and restriction (using the extended definition of sum and intersection of collections of vectors).
Let $d:\Omega\rightarrow \Omega$ be such that $d(\Omega_S)\subseteq \Omega_S.$
Further, let $d$ satisfy, for $\Ks\in \Omega_S, \K_W\in \Omega_W,$
\begin{enumerate}
\item $d(d(\K_S))=\K_S,$
\item $d(\Ks+\K_W) = d(\Ks)\cap d(\K_W),$
\item $d(\Ks \circ T)= d(\Ks)\times T,$
\item $d(\Ks \times T)= d(\Ks)\circ T.$
\end{enumerate}

In particular, 
\begin{enumerate}
\item 
$\Omega_S$ could denote the collection of all vector spaces 
over a field $\mathbb{F}$ on the set $S$ and $d(\K_S)$ could denote $\K_S^{\perp}.$
\item
$\Omega_S$ could denote the collection of all vector spaces
over  $\mathbb{C}$ on the set $S$ and $d(\K_S)$ could denote ${\K_S}^*,$
the collection of all vectors whose inner product with vectors in 
$\K_S$ is zero.
\item $\Omega_S$ could denote the collection of all finitely generated 
cones on $S$ over $\Re.$
In this case $d(\K_S)$ could denote $\K_S^{p},$
where $\K_S^{p}\equivd \{g_S,<f_S,g_S>\ \leq 0, f_S\in \K_S\}.$
Note that for a vector space $\Vs,$ $\V_S^{p}=\V_S^{\perp}.$ 
\end{enumerate}

\begin{theorem}
\label{thm:idtgen}
Let $\Ksp\in \Omega _{SP}, \Kpq \in \Omega _{PQ},$ 
with $S,P,Q,$ being pairwise disjoint.
\\
 We then have,
$d(\mathcal{K}_{SP}\leftrightarrow \mathcal{K}_{PQ}) 
\ \equaln\ d(\mathcal{V}_{SP}) \rightleftharpoons d(\mathcal{V}_{PQ}) 
.$ 
\end{theorem}

\begin{proof}
From the definition of matched  composition, 
\begin{align*}
(\mathcal{K}_{SP}\leftrightarrow \mathcal{K}_{PQ}) = (\mathcal{K}_{SP}+ \mathcal{K}_{(-P)Q}) \times (S\uplus Q)
\end{align*}
and also 
\begin{align*}
 (\mathcal{K}_{SP}\leftrightarrow \mathcal{K}_{PQ}) =(\mathcal{K}_{SP}\cap \mathcal{K}_{PQ})\circ (S\uplus Q).
\end{align*}
Similarly from the definition of skewed composition,
\begin{align*}
 (\mathcal{K}_{SP} \rightleftharpoons \mathcal{K}_{PQ}) = (\mathcal{K}_{SP}+ \mathcal{K}_{PQ}) \times (S\uplus Q)
\end{align*}
and also 
\begin{align*}
(\mathcal{K}_{SP} \rightleftharpoons \mathcal{K}_{PQ}) =  (\mathcal{K}_{SP}\cap \mathcal{K}_{(-P)Q})\circ (S\uplus Q).
\end{align*}
Hence we have
\begin{align*}
 d(\mathcal{K}_{SP}\leftrightarrow \mathcal{K}_{PQ}) 
& = d[(\mathcal{K}_{SP} + \mathcal{K}_{(-P)Q}) \times (S\uplus Q)] \\
& = [d(\mathcal{K}_{SP} + \mathcal{K}_{(-P)Q})] \circ (S\uplus Q) \\
& =[d(\mathcal{K}_{SP}) \cap d(\mathcal{K}_{(-P)Q})] \circ(S\uplus Q)\\
& = \ \ \  d(\mathcal{K}_{SP}) \rightleftharpoons d(\mathcal{K}_{PQ}). 
\end{align*}
\end{proof}
\section{Maximum Power Transfer Theorem for the complex case}
\label{subsec:mptcomplex}
\subsubsection{Stationarity}
We fix some notation to consider the complex field case (steady state sinusoidal analysis).
We take for any matrix $M,$ $\overline{M}$ to be the conjugate and
$M^*$ to be the conjugate transpose.
\\
In this case, the optimization
problem is
\begin{align}
\label{eqn:optprobcomplex}
\mbox{minimize} \ \ \breve{v}_{P'}^*\breve{i}_{P"}+\breve{i}_{P'}^*\breve{v}_{P"}(\equivd \overline{\breve{v}_{P'}^T}\breve{i}_{P"}+\overline{\breve{i}_{P'}^T}\breve{v}_{P"})\\
\label{eqn:feasible}
B_{P'}\breve{v}_{P'}-Q_{P"}\breve{i}_{P"}=s.
\end{align}


If $(\breve{v}^{stat}_{P'},\breve{i}^{stat}_{P"}),$
is a stationary point for the optimization problem \ref{eqn:optprobcomplex}
, we
have
\begin{align}
\label{eqn:optprob31}
 (\breve{i}^{stat}_{P"})^T\overline{\delta\breve{v}}_{P'}+ (\breve{v}^{stat}_{P'})^T\overline{\delta\breve{i}}_{P"}
+ (\breve{i}^{stat}_{P"})^*{\delta\breve{v}}_{P'}+ (\breve{v}^{stat}_{P'})^*{\delta\breve{i}}_{P"}
&=0,&
\end{align}
for every  vector $(\delta\breve{v}_{P'},\delta \breve{i}_{P"}),$
such that $({B}_{P'}({\delta\breve{v}}_{P'})-{{Q}}_{P"}({\delta \breve{i}}_{P"}))=0.$
Therefore we have, for some vector $\lambda,$
 $${((\breve{i}^{stat}_{P"})^T|(\breve{v}^{stat}_{P'})^T)-\lambda^T(\overline{B}_{P'}|-\overline{{Q}}_{P"})}
=0.$$
Thus the stationarity at $(\breve{v}^{stat}_{P'},\breve{i}^{stat}_{P"}),$ is equivalent to
$${((\breve{v}^{stat}_{P'})^T|(\breve{i}^{stat}_{P"})^T)-\lambda^T(-\overline{Q}_{P'}|\overline{{B}}_{P"})}
=0.$$
Since we must have
$$B_{P'}\breve{v}^{stat}_{P'}-Q_{P"}\breve{i}^{stat}_{P"}=s,$$
the stationarity condition reduces to
\begin{align}
\label{eqn:optprob51}
\ppmatrix{B_{P'}&-Q_{P"}}\ppmatrix{-Q_{P'}^*\\B_{P"}^*}\lambda=s,
\end{align}
By $\V^*_{S'S"},$ we mean the collection of all vectors whose inner product
with every vector in $\V_{S'S"}$ is zero.
To obtain the complex version of the statement of Theorem \ref{thm:maxpowerport} and its proof,
$\V^{\perp}_{S'S"}$ should be interpreted as $\V^*_{S'S"},$
 $\V^{adj}_{S'S"}$ should be taken to be
$ (\V^*_{S'S"})_{(-S")S'}.$
\subsection{Passive multiports}
We need to modify the definition of passivity in the complex case.
By $\langle x_{S'},y_{S"} \rangle,$ we mean the inner product
$y^*_{S'}x_{S'}.$
A vector space $\V_{S'S"}$ is \nw{passive}, iff $\langle x_{S'},y_{S"} \rangle
+\langle y_{S"},x_{S'} \rangle\geq 0,$ whenever $(x_{S'},y_{S"})\in \V_{S'S"}.$\\
It is \nw{strictly passive} iff $\langle x_{S'},y_{S"} \rangle
+\langle y_{S"},x_{S'} \rangle > 0,$ whenever $(x_{S'},y_{S"})\in \V_{S'S"}, (x_{S'},y_{S"})\ne 0_{S'S"}.$\\
An affine space   $\K_{S'S"}$  is (strictly) passive iff its vector space translate is (strictly) passive.

With this extended definition of passivity,  taking $\V^{\perp}_{S'S"}$ to be $\V^*_{S'S"},$
 $\V^{adj}_{S'S"}$  to be
$ (\V^*_{S'S"})_{(-S")S'},$ and the optimization problem to be
maximization of $\langle -x_{P'},y_{P"} \rangle
+\langle -y_{P"},x_{P'} \rangle > 0,$ whenever $(x_{P'},y_{P"})\in \breve{\A}_{P'P"},$
the statement and proof of Lemma \ref{lem:maxpower} and Theorem \ref{thm:passivemaxpowerport} go through.

\bibliographystyle{elsarticle1-num}
\bibliography{references}

\end{document}